\DeclarePairedDelimiter\ceil{\lceil}{\rceil}
\DeclarePairedDelimiter\floor{\lfloor}{\rfloor}
\definecolor{ForestGreen}{rgb}{0.1333,0.5451,0.1333}
\definecolor{DarkRed}{rgb}{0.65,0,0}
\definecolor{Red}{rgb}{1,0,0}
  \DeclareFontShape{T1}{cmr}{m}{scit}{<->ssub*cmr/m/sc}{}%
  \DeclareFontShape{T1}{cmr}{bx}{sc}{<->ssub*cmr/m/n}{}
\declaretheorem[numberwithin=section]{theorem}
\declaretheorem[numberlike=theorem]{lemma}
\declaretheorem[numberlike=theorem]{fact}
\declaretheorem[numberlike=theorem]{claim}
\crefname{algorithm}{Algorithm}{Algorithms}
\Crefname{algorithm}{Algorithm}{Algorithms}
    \newtheoremstyle{TheoremNum}
        {\topsep}{\topsep}              %
        {\itshape}                      %
        {}                              %
        {\bfseries}                     %
        {.}                             %
        { }                             %
        {\thmname{#1}\thmnote{ \bfseries #3}}%
    \theoremstyle{TheoremNum}
\theoremstyle{definition}
\declaretheorem[numberlike=theorem]{definition}
\newcommand{\ot}{\tilde{O}}
\newcommand{\phibar}{\bar\phi}
\newcommand{\eps}{\epsilon}
\newcommand{\eat}[1]{}
\renewcommand{\mod}{\operatorname{mod}}
\newcommand{\cGabow}{c_{{\ref{lem:gabow00}}}}
\global\long\def\Ohat{\widehat{O}}
\def\ShowComment{True} %
\def\thatchapholtext#1{\textcolor{purple}{#1}}
\def\thatchaphol#1{\marginpar{$\leftarrow$\fbox{T}}\footnote{$\Rightarrow$~{\sf\textcolor{purple}{#1 --Thatchaphol}}}}
\def\danupon#1{\textcolor{orange}{DN: #1}}
\def\sorrachai#1{\marginpar{$\leftarrow$\fbox{S}}\footnote{$\Rightarrow$~{\sf\textcolor{blue}{#1
-- Sorrachai}}}}
\def\note#1{#1}
\def\alert#1{\textcolor{red}{#1}}
\def\thatchapholtext#1{}
\def\thatchaphol#1{}
\def\danupon#1{}
\def\shen#1{}
\def\sorrachai#1{}
\def\jason#1{}
\def\note#1{} 
\def\alert#1{}
\title{Deterministic $k$-Vertex Connectivity in $k^2$ Max-flows}
\author{
           Chaitanya Nalam\thanks{\texttt{nalamsai@umich.edu}. University of Michigan, USA}
           \and
         Thatchaphol
         Saranurak\thanks{\texttt{thsa@umich.edu}. University of Michigan, USA. Supported by NSF CAREER grant 2238138.}
         \and
         Sorrachai Yingchareonthawornchai\thanks{\texttt{sorrachai.yingchareonthawornchai@aalto.fi}. Aalto University, Finland}
}
\date{}   
\begin{document}

\maketitle
\pagenumbering{gobble}
\begin{abstract}
    
An $n$-vertex $m$-edge graph is \emph{$k$-vertex connected} if it cannot be disconnected by deleting less than $k$ vertices. After more than half a century of intensive research, the result by [Li~et~al.~STOC'21] finally gave a \emph{randomized} algorithm for checking $k$-connectivity in near-optimal $\widehat{O}(m)$ time.\footnote{We use $\widehat{O}(\cdot)$ to hide an $n^{o(1)}$ factor.} Deterministic algorithms, unfortunately, have remained much slower even if we assume a linear-time max-flow algorithm: they either require at least $\Omega(mn)$ time [Even'75; Henzinger Rao and Gabow, FOCS'96; Gabow, FOCS'00] or assume that $k=o(\sqrt{\log n})$ [Saranurak and Yingchareonthawornchai, FOCS'22].

We show a \emph{deterministic} algorithm for checking $k$-vertex connectivity in time proportional to making $\widehat{O}(k^{2})$ max-flow calls, and, hence, in $\widehat{O}(mk^{2})$ time using the deterministic max-flow algorithm by [Brand~et~al.~FOCS'23]. Our algorithm gives the first almost-linear-time bound for all $k$ where $\sqrt{\log n}\le k\le n^{o(1)}$ and subsumes up to a sub polynomial factor the long-standing state-of-the-art algorithm by [Even'75] which requires $O(n+k^{2})$ max-flow calls. 

Our key technique is a deterministic algorithm for terminal reduction for vertex connectivity: given a terminal set separated by a vertex mincut, output either a vertex mincut or a smaller terminal set that remains separated by a vertex mincut.

We also show a deterministic $(1+\epsilon)$-approximation algorithm for vertex connectivity that makes $O(n/\epsilon^2)$ max-flow calls, improving the bound of $O(n^{1.5})$ max-flow calls in the exact algorithm of [Gabow, FOCS'00]. The technique is based on Ramanujan graphs.

\end{abstract}

\newpage        

\pagenumbering{arabic}

\section{Introduction}\label{sec:intro}

\textit{Vertex connectivity} of an $n$-vertex $m$-edge undirected graph $G$, denoted by $\kappa_G$, is the minimum number of vertices to be removed to disconnect the graph (or to become a singleton). Vertex connectivity along with its related problem called edge connectivity (where we remove edges to disconnect the graph) are both fundamental and very well-studied problems in graph algorithm research ~\cite{Menger1927,Kleitman1969methods,Podderyugin1973algorithm,EvenT75,Even75,Galil80,EsfahanianH84,Matula87,BeckerDDHKKMNRW82,LinialLW88,CheriyanT91,NagamochiI92,CheriyanR94,Henzinger97,HenzingerRG00,Gabow06,Censor-HillelGK14}, see \cite{NanongkaiSY19} for more discussion.  The complexity of the edge connectivity problem is well-understood: It can be solved in nearly linear time using randomization by Karger since 2000 \cite{Karger00} and, more recently, it can also be solved deterministically in near-linear time~\cite{KawarabayashiT15,HenzingerRW17} and even in weighted case \cite{LiP20deterministic,Li21mincut}. 

For vertex connectivity, there is still a large gap in our understanding.  Aho, Hopcroft, and Ullman asked almost 50 years ago if vertex connectivity can be solved in linear time~\cite{AhoHU74}. The answer is affirmative (up to a subpolynomial factor) using randomization: Recent developments~\cite{NanongkaiSY19,ForsterNYSY20,li_vertex_2021} culminate in a celebrated ``polylog max-flow'' time, which is  almost linear by the recent breakthrough in max-flow problem~\cite{ChenKLPGS22}. 

Deterministic algorithms, on the other hand, have remained much slower.
In the discussion below, we will assume a linear-time deterministic max-flow algorithm so that we can highlight the development of ideas related to vertex connectivity itself and not about the implementation of max-flow algorithms. This is actually without loss of generality because Brand~et~al.~\cite{BrandCKLPGSS23} recently showed an almost-linear-time deterministic max-flow algorithm.
Let us consider the problem of \textit{checking $k$-vertex connectivity} (i.e., deciding if vertex connectivity is at least $k$ or outputs a minimum vertex cut).\footnote{Note that, for all results stated below including our results, we can assume that $m = O(nk)$ by using the linear-time sparsification algorithm by \cite{NagamochiI92}.}
When $k \le 3$, $O(m)$-time algorithms are known  using a depth-first search~\cite{Tarjan72} and a  data structure called SPQR tree~\cite{HopcroftT73}. Very recently, \cite{SaranurakY22} showed an $\Ohat(m2^{O(k^2)})$-time algorithm which is almost-linear for all $k = o(\sqrt{\log n})$.  For general $k$, half a century ago, Kleitman \cite{Kleitman1969methods} showed an algorithm that makes $O(nk)$ max-flow calls. Then, Even \cite{Even75} improved the number of max-flow calls to $O(n+k^2)$. 
\cite{HenzingerRG00} later showed how to implement the first $n$ calls to max-flow in Even's algorithm in $O(mn)$ time, without assuming a linear-time max-flow algorithm. 
The currently fastest deterministic algorithm is by Gabow \cite{Gabow06}. The bound stated in his paper is $O(\min\{n^{3/4}, k^{1.5}\}km + mn)$ time, but this bound is based on max-flow algorithms that are slower than linear time. One can show that his algorithms take time proportional to $O(n + k \min\{k,\sqrt{n}\})$ max flow calls, which improved upon Even's algorithm when  $k \ge \sqrt{n}$. 

To summarize state of the art, for any $k \ge \sqrt{\log n}$, all known deterministic algorithms require $\Omega(n)$ max-flow calls, which is $\Omega(mn)$ time even if we assume a linear time max-flow algorithm.

\subsection{Our Results} 
In this paper, we show a \textit{deterministic} algorithm for checking $k$-vertex connectivity in $\Ohat(k^2)$ max-flows.
Let $G = (V,E)$ be an undirected graph. A \textit{vertex cut} $(L,S,R)$ is a partition of the vertex set $V$ such that there is no edge between $L$ and $R$. The \emph{size} of $(L,S,R)$ is $|S|$.

\begin{theorem} \label{thm:main}
There is a deterministic algorithm that takes as inputs an $n$-vertex $m$-edge undirected graph $G = (V,E)$, and connectivity parameter $k$, and either outputs a vertex cut of size less than $k$ or concludes that $G$ is $k$-vertex connected. The algorithm makes calls to unit-vertex-capacity max-flow instances with $\Ohat(mk^2)$ number of edges in total and spends an additional $\Ohat(mk^2)$ time.
\end{theorem}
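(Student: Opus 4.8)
The plan is to reduce the problem to repeatedly shrinking a set of terminals that is guaranteed to straddle a vertex mincut. First, by the linear-time sparsification of Nagamochi and Ibaraki~\cite{NagamochiI92} we may assume $m=O(nk)$, and it suffices to output a vertex cut of size less than $k$ whenever $\kappa_G<k$: if instead every unit-vertex-capacity max-flow we run returns value at least $k$, we report that $G$ is $k$-connected. Call $T\subseteq V$ \emph{witnessing} if there is a vertex cut $(L,S,R)$ with $|S|<k$, $L,R\neq\emptyset$, and $T\cap L\neq\emptyset\neq T\cap R$; note $T=V$ is witnessing exactly when $\kappa_G<k$. The algorithm initializes $T=V$, repeatedly shrinks $T$ while preserving the witnessing property, and finishes with a brute-force base case.

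\emph{Base case.} Once $|T|=\Ohat(k)$ and $T$ is witnessing, for every non-adjacent pair $\{x,y\}\subseteq T$ I would compute a minimum $x$--$y$ vertex separator with one unit-vertex-capacity max-flow on the standard split graph of $G$ (which has $O(m)$ edges), and output the smallest such separator as a cut. Since $T$ is witnessing, the pair $x\in T\cap L$, $y\in T\cap R$ is non-adjacent and yields a separator of size at most $|S|<k$, so this succeeds; the cost is $\Ohat(k^2)$ max-flow calls on instances with $\Ohat(mk^2)$ edges in total.

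\emph{The reduction step (the heart).} The key lemma I would prove is a deterministic procedure that, given a witnessing $T$ with $|T|=t$ above the base-case threshold, makes $\Ohat(k^2)$ unit-vertex-capacity max-flow calls and returns either a vertex cut of size $<k$ or a witnessing $T'$ with $|T'|\le t/2$. I expect it to follow the standard unbalanced/balanced dichotomy on the small side $L$ of a relevant mincut, with a threshold $\nu=\nu(t,k)$ on $\vol(L)$. In the \emph{unbalanced} regime $\vol(L)\le\nu$: a deterministic local vertex-connectivity computation (in the spirit of~\cite{SaranurakY22}), seeded from the terminals, runs in time proportional to $\nu$ and --- because some terminal lies in $L$ --- reports the cut directly, with the number of seeds controlled so the total is $\Ohat(k^2)$ max-flow calls. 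In the \emph{balanced} regime $\vol(L)>\nu$: $L$ (and likewise $R$) is large enough that a constant fraction of a suitable subset of $T$ lands in it, so one can deterministically pass to $T'\subseteq T$ with $|T'|\le t/2$ still crossing every surviving mincut. The hard part is the derandomization of this last selection: a random subsample works immediately, but deterministically one must exploit structural/submodularity (uncrossing) properties of vertex mincuts and/or pseudorandom objects such as the Ramanujan graphs invoked for the approximation result, so as to pick a small terminal set provably crossing all of them; making the local computation deterministic without knowing which terminal lies in $L$, and balancing the two regimes to land at $\Ohat(k^2)$ rather than a larger power of $k$, is where the technical weight sits.

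\emph{Putting it together.} Starting from $|T|=n$, I would apply the reduction step $O(\log n)$ times, stopping early if any invocation returns a cut of size $<k$; otherwise $|T|$ reaches $\Ohat(k)$ while remaining witnessing whenever $\kappa_G<k$, so the base case outputs a cut of size $<k$ --- and if it does not, then no max-flow ever returned value below $k$, certifying $\kappa_G\ge k$. The total is $O(\log n)\cdot\Ohat(k^2)+\Ohat(k^2)=\Ohat(k^2)$ unit-vertex-capacity max-flow calls, each on an instance with $O(m)$ edges, hence $\Ohat(mk^2)$ edges overall; the auxiliary work --- building split graphs, maintaining $T$, scheduling the local computations --- is $\Ohat(mk^2)$.
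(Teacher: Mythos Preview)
Your high-level outline --- initialize $T=V$, iteratively halve a witnessing terminal set, and conclude after $O(\log n)$ rounds --- matches the paper exactly (this is precisely how Theorem~\ref{thm:main} is derived from Theorem~\ref{thm:main terminal reduction}; the paper continues until $T=\emptyset$, absorbing your base case into the reduction itself). The content, however, is entirely in the reduction step, and there your proposal has genuine gaps.

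First, you require $T'\subseteq T$; the paper explicitly does \emph{not} guarantee this and lists obtaining $T'\subseteq T$ as an open problem. Second, your unbalanced regime invokes local vertex-connectivity ``in the spirit of~\cite{SaranurakY22}'', but that subroutine carries a $2^{O(k^2)}$ factor --- precisely what this paper is designed to avoid --- so it cannot yield $\Ohat(k^2)$ max-flows as you claim. Third, your balanced regime asks for a deterministic way to select half of $T$ that still crosses some small mincut; you yourself call this ``the hard part'', and the gestures toward uncrossing or Ramanujan graphs are not a method (the Ramanujan argument in Section~\ref{sec:approx} is for a different purpose and does not produce $T'\subseteq T$).

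The paper's actual route is different on both sides of the dichotomy. It runs the terminal vertex-expander decomposition of Long--Saranurak (Lemma~\ref{lem:balancedorexpander}). When the graph is a $(T,\phi)$-expander, every cut of size $<k$ is $(T,\Theta(k/\phi))$-unbalanced, and one finds it by building a Hit-and-Miss hash family of $\Ohat(k^2)$ terminal subsets (Lemma~\ref{splitters}) and applying the isolating-vertex-cuts lemma (Lemma~\ref{isolatingvertexcuts}) to each --- this is the source of the $k^2$ factor. Otherwise a $(T,\phibar)$-sparse cut $(L,S,R)$ is found; the algorithm builds the $k$-left and $k$-right graphs $G_L,G_R$ (Definition~\ref{def:left-right graphs}), recurses on each with terminal sets $T_L\cup\{t_R\}$ and $T_R\cup\{t_L\}$, and returns $S\cup S'_L\cup S'_R$ as the new terminal set. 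Since $S$ need not lie in $T$, this is why $T'\not\subseteq T$; correctness rests on the recursion lemmas (Lemmas~\ref{lem:steiner leq k recursively} and~\ref{lem:sep le k is sep in G}), not on any derandomized subsampling of $T$.
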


In particular, our algorithm runs in $o(n)$ max-flows whenever $k \ll \sqrt{n}$. This is the first algorithm that improves the long-standing state-of-the-art algorithm~\cite{Even75} that runs in $O(n+k^2)$ max-flows. Using the almost linear time deterministic max-flow algorithm~\cite{BrandCKLPGSS23}, our algorithm runs in $\Ohat(mk^2)$ time, which is the fastest whenever $k \ll \sqrt{n}$. Our algorithm exponentially improves the dependency on $k$ from the recent $\Ohat(m2^{O(k^2)})$-time algorithm~\cite{SaranurakY22}. %

We also show a $(1+\epsilon)$-approximation algorithm for computing vertex connectivity of $G$, $\kappa_G$.

\begin{restatable}[]{theorem}{approximationthm}
\label{thm:approx}
There is a deterministic algorithm that takes an $n$-vertex $m$-edge undirected graph $G$ and accuracy parameter $\epsilon \in (0,1)$ as inputs and outputs a vertex cut of size at most $\floor{(1+\epsilon)\kappa_G}$. The algorithm makes $O(n/\eps^2)$ calls to unit-vertex-capacity $(s,t)$-max-flows on $G$ and spends additional $O(n/\eps^2)$ time.
\end{restatable}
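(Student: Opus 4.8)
I would keep, at all times, the trivial cut $(\{v^\star\},N(v^\star),V\setminus N[v^\star])$ of size $\delta=\deg(v^\star)$ for a minimum-degree vertex $v^\star$ (complete graphs being a degenerate case handled separately), and otherwise probe pairs chosen by explicit expanders: for each dyadic scale $\ell$ build a $d_\ell$-regular Ramanujan graph $H_\ell$ on $V$ with $d_\ell=\Theta(\max\{1/\epsilon^2,\ n/\ell\})$ (an LPS/Margulis graph, padding $n$ up to an admissible order and discarding dummy vertices), and for every edge $\{u,v\}$ of $H_\ell$ with $u\not\sim_G v$ compute $\kappa_G(u,v)$ by a unit-vertex-capacity $(u,v)$-max-flow; additionally run the classical exact algorithm of Even~\cite{Even75} (or Gabow~\cite{Gabow06}) with parameter $k\approx\sqrt n/\epsilon$. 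Output the smallest cut encountered. It then suffices to prove that in every run \emph{either} $\delta\le(1+\epsilon)\kappa_G$ -- in which case the trivial cut, of integer size $\delta\le\floor{(1+\epsilon)\kappa_G}$, is a valid answer -- \emph{or} one of the max-flow calls exposes a cut of size exactly $\kappa_G$.

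\textbf{Structural fact and why the probes succeed.}
Let $(L,S,R)$ be a vertex mincut with $|L|\le|R|$ and suppose $\delta>(1+\epsilon)\kappa_G$. Since $N[v]\subseteq L\cup S$ for every $v\in L$, a degree-$\delta$ vertex in $L$ gives $\delta\le|L|-1+\kappa_G$, hence $|L|\ge\delta+1-\kappa_G>\epsilon\kappa_G$; symmetrically $|R|>\epsilon\kappa_G$, and $(1+\epsilon)\kappa_G<\delta\le n-1$ forces $\kappa_G<n/(1+\epsilon)$ and therefore $|R|\ge(n-\kappa_G)/2=\Omega(\epsilon n)$. The expander mixing lemma gives $e_{H_\ell}(A,B)\ge\frac{d_\ell}{n}|A||B|-2\sqrt{d_\ell-1}\,\sqrt{|A||B|}$ for disjoint $A,B$, which is $\ge 1$ once $d_\ell|A||B|\ge 8n^2$. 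Hence: if $\kappa_G=\Omega(n)$ then $|L||R|\ge|L|^2=\Omega(\epsilon^2n^2)$ and the degree-$\Theta(1/\epsilon^2)$ graph already has an $L$--$R$ edge; for smaller $\kappa_G$, the scale $\ell$ with $\ell\le|L|<2\ell$ satisfies $d_\ell|L||R|=\Omega(n^2)$ (using $|R|=\Omega(n)$), again an $L$--$R$ edge. Either way the corresponding max-flow returns a mincut. And if $\kappa_G\le\sqrt n/\epsilon$ the exact subroutine finds it directly. So no cut of size $\kappa_G$ is missed unless $\delta\le(1+\epsilon)\kappa_G$.

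\textbf{Cost and the main obstacle.}
Even's subroutine with $k=\lceil\sqrt n/\epsilon\rceil$ makes $O(n+k^2)=O(n/\epsilon^2)$ calls; the degree-$\Theta(1/\epsilon^2)$ graph contributes $O(n/\epsilon^2)$ calls; scale $\ell$ contributes $O(nd_\ell)=O(n^2/\ell)$ calls. Once the small-connectivity reduction rules out $|L|<\sqrt n$ (either $\kappa_G\le\sqrt n/\epsilon$, done exactly, or $\kappa_G>\sqrt n/\epsilon$, whence $|L|>\epsilon\kappa_G>\sqrt n$), the relevant scales are $\ell\in[\sqrt n,n/2]$ and the geometric sum is $O(n^{3/2})$ -- within budget exactly when $\epsilon=O(n^{-1/4})$ (for which one could even just run Gabow's $O(n^{3/2})$-call exact algorithm). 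The hard part -- which I expect is where the Ramanujan-graph technique does real work -- is the range $\epsilon\gg n^{-1/4}$: there the scales $\ell\in[\sqrt n,\epsilon^2n)$ are individually over budget, and the matching connectivities $\kappa_G\in[\sqrt n/\epsilon,\epsilon n)$ are too large for an $O(n+k^2)$ exact routine to fit in $O(n/\epsilon^2)$ calls. My plan would be to push the exact threshold up using Gabow's sharper $O(n+k\sqrt n)$-call bound, and to handle the residual band with a sharper expander argument than the crude ``test every edge'' scheme, exploiting that there $G[L]$ has minimum degree $>\epsilon\kappa_G$ and is therefore dense, which severely constrains the small side; making this accounting airtight over all $\epsilon\in(0,1)$ is, I believe, the technical core.

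\textbf{Running time.}
Beyond the $O(n/\epsilon^2)$ max-flow calls, the algorithm sparsifies $G$ to $O(nk)$ edges via \cite{NagamochiI92}, builds $O(\log n)$ explicit Ramanujan graphs, and iterates their edges; this is $O(n/\epsilon^2)$ additional time, dominated by the flow calls.
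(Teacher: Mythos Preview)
Your proposal has a genuine gap, and you identify it yourself: the multi-scale Ramanujan scheme overshoots the $O(n/\epsilon^2)$ budget in the range $\epsilon\gg n^{-1/4}$, $\kappa_G\in[\sqrt n/\epsilon,\epsilon n)$, and neither Even's nor Gabow's exact routine closes this band within budget. Your suggested fix --- exploiting denseness of $G[L]$ --- is a property of the input graph, not of the auxiliary expander, and it is not clear how it would reduce the number of probes.

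The missing idea is to replace the expander mixing lemma by the \emph{vertex expansion} of a single Ramanujan graph of degree only $\Theta(1/\epsilon)$. Concretely: a $d$-regular Ramanujan graph $H$ has $\lambda_2(H)\le 2/\sqrt d$, and the standard ``spectral expansion $\Rightarrow$ vertex expansion'' bound (e.g.\ Lemma~4.6 in Vadhan's survey) gives that $H$ is an $(\alpha,\beta)$-vertex expander with $\beta\ge 2/\epsilon$ once $\alpha=O(\epsilon)$ and $d=\Theta(1/\epsilon)$. Now assume $\delta>(1+\epsilon)\kappa_G$ and $|L|\le\alpha n$. Your own structural fact gives $|L|>\epsilon\kappa_G$, hence
\[
|N_H(L)|\;\ge\;\frac{2}{\epsilon}\,|L|\;>\;2\kappa_G\;>\;|S|\;\ge\;|N_H(L)\cap(L\cup S)|,
\]
so $N_H(L)\cap R\neq\emptyset$ and $H$ already contains an $L$--$R$ edge. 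One graph of degree $\Theta(1/\epsilon)$ therefore covers \emph{all} small-$L$ cases simultaneously; a second graph of degree $\Theta(1/\epsilon^2)$ (via mixing, exactly as you argue) covers the balanced case $|L|,|R|=\Omega(\epsilon n)$. Two graphs, $O(n/\epsilon)+O(n/\epsilon^2)=O(n/\epsilon^2)$ edges, no dyadic scales, and no exact-algorithm subroutine is needed at all. This is precisely the route the paper takes.
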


In particular, this algorithm runs in $\Ohat(\frac{1}{\epsilon^2}\cdot mn)$ time using the max-flow algorithm by \cite{BrandCKLPGSS23}.
This is the first deterministic approximation algorithm that runs in $\ll mn^{1.5}$ time, and improves upon the exact algorithm~\cite{Gabow06} that requires $n^{1.5}$ max-flows in the general case.

\subsection{Techniques}
The key to our result is the terminal reduction algorithm for vertex connectivity that runs in $\Ohat(k^2)$ max-flows. Let $G = (V,E)$ be an $n$-vertex $m$-edge undirected graph with terminal set  $T \subseteq V$.  A $T$-\textit{Steiner} vertex cut $(L,S,R)$ is a vertex cut such that $T\cap L \neq \emptyset$ and $T\cap R \neq \emptyset$. We denote $\kappa_G(T)$ as the size of the minimum $T$-Steiner vertex cut in $G$  or $n-1$ if it does not exist. By definition, vertex connectivity of $G$, denoted by $\kappa_G$ is equal to $\kappa_G(V)$. We omit the subscript when the context is clear.  Our main technical result is:

\begin{restatable}[Terminal reduction]{theorem}{terminalreduction}
\label{thm:main terminal reduction}
There is a deterministic algorithm that takes as inputs an $n$-vertex $m$-edge  undirected graph $G = (V,E)$, along with a terminal set $T \subseteq V$ and a cut parameter $k$, and outputs either a vertex cut of size less than $k$ or a new terminal set $T' \subseteq V$ of size at most $|T|/2$ such that $\kappa(T') < k$ if $\kappa(T) < k$.   The algorithm makes calls to unit-vertex-capacity max-flow instances with $\Ohat(mk^2)$ number of edges in total and spends an additional  $\Ohat(mk^2)$ time.
\end{restatable}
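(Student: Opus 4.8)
The plan is to reason about a hypothetical $T$-Steiner vertex mincut $(L,S,R)$ with $|S|=\kappa(T)<k$ and to dispatch to one of several subroutines according to how unbalanced it is. If $\kappa(T)\ge k$ the stated implication is vacuous, so I may assume such a cut exists, and I normalize so that $a:=|T\cap L|\le |T\cap R|=:b$; write $\ell:=|L|$, so $a\le\ell$. All max-flows are run on the standard in/out split graph of $G$, so that vertex cuts become edge cuts; I first replace $G$ by the $O(nk)$-edge sparsifier of \cite{NagamochiI92}, so every max-flow instance has $O(m)$ edges and, over the $\Ohat(k^2)$ instances used, $\Ohat(mk^2)$ edges in total.

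I would then run three ``cut-finding'' subroutines, each of which, if it succeeds, outputs a vertex cut of size $<k$ and terminates. (i) A deterministic bounded-volume local vertex-cut procedure seeded at terminals, which catches the case that the small side is vertex-thin, i.e.\ $\ell$ is small. (ii) A deterministic approximately-balanced sparse-vertex-cut routine (a derandomized cut-matching / expander-trimming computation on the split graph), which catches the case that the cut is $\Omega(1)$-balanced with respect to $T$. (iii) A deterministic vertex version of the isolating-cuts lemma, derandomizing the geometric-rate subsampling by an explicit splitter / perfect-hash family (the same flavour of explicit expander object exploited for \Cref{thm:approx}), together with an auxiliary local computation for the ``close'' terminals that are adjacent to terminals of $S$ (for which the isolating value is infinite and plain isolating cuts break); this catches the case that $a$ is below a threshold $\tau$. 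Together these subroutines make $\Ohat(k^2)$ max-flow calls and use $\Ohat(mk^2)$ additional time.

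If all three subroutines fail, I have certified that every small $T$-Steiner cut satisfies $\ell$ large, $a\ge\tau$, and $a\le(1-\Omega(1))|T|$, hence also $b=\Omega(|T|)$: the cut is neither vertex-thin nor terminal-balanced nor terminal-isolating. In this regime I would build the reduced set from a deterministic connectivity-faithful decomposition of $G$: a partition of $V$ into clusters (plus a negligible remainder), computable within the stated budget, such that no vertex cut of size $<k$ separates two vertices of the same cluster. Because any small $T$-Steiner cut must then separate whole clusters rather than split one, the set $T'$ consisting of one representative terminal per cluster meeting $T$ is still separated by that same cut, so $\kappa(T')<k$. That $|T'|\le |T|/2$ is exactly where the earlier steps pay off: a cluster meeting $T$ in a single terminal, if it could lie on the small side of a small cut, would expose a cut already found by subroutine (iii) (or (i)); tuning the thresholds and the decomposition parameters so that all surviving relevant clusters meet $T$ in at least two terminals yields the bound — and whenever this accounting would fail, the sparse-cut step inside the decomposition has produced a cut of size $<k$, which we output instead.

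The main obstacle is precisely this last accounting together with the derandomization in subroutine (iii): one must (a) make the splitter / hitting families explicit with parameters small enough that the whole procedure stays within $\Ohat(k^2)$ max-flow calls; (b) correctly handle mutually adjacent terminals, where the plain isolating-cuts machinery (tailored to far pairs) does not apply; and (c) align the thresholds with the decomposition so that the clusters responsible for the $|T'|\le |T|/2$ bound are exactly those whose separating cuts the earlier subroutines already rule out. I expect the bookkeeping across the imbalance regimes — vertex counts $\ell$ versus $|R|$, terminal counts $a$ versus $b$, and the interaction of terminals with the decomposition's remainder set — to be the most delicate part of the argument.
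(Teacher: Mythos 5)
There is a genuine gap, and it sits exactly where the paper's main idea lives. Your final step posits a deterministic, budget-respecting partition of $V$ into clusters such that no vertex cut of size $<k$ separates two vertices of the same cluster, and then takes one representative terminal per cluster so that $T'\subseteq T$. Such a ``connectivity-faithful'' clustering is essentially the second step of the vertex-reduction framework of \cite{SaranurakY22}, which is precisely the step that is only known deterministically in $\Ohat(m\,2^{O(k^2)})$ time; the entire point of \Cref{thm:main terminal reduction} is to \emph{avoid} constructing any such object. The paper instead runs a recursive terminal expander decomposition (\Cref{lem:balancedorexpander}): it repeatedly finds a $(T,\phibar)$-sparse cut $(L,S,R)$, recurses on the $k$-left and $k$-right graphs (\Cref{def:left-right graphs}), and outputs as the new terminal set the \emph{union of the separators $S$} collected along the recursion (plus the recursive outputs). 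Correctness rests on \Cref{lem:steiner leq k recursively}: if $\kappa_G(T)<k$ then one of $\kappa_G(S)$, $\kappa_{G_L}(T_L\cup\{t_R\})$, $\kappa_{G_R}(T_R\cup\{t_L\})$ is $<k$, so making $S$ itself a terminal set is what keeps the reduced instance faithful. Crucially this means $T'\not\subseteq T$ in general --- the paper explicitly flags that, unlike the edge-connectivity analogue of \cite{LiP20deterministic}, the subset guarantee is \emph{not} achieved and is left open. Your proposal is committed to $T'\subseteq T$, which forces you into the clustering you cannot build.

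Two further points. First, even granting the clustering, the representative argument is incomplete: a cut $(L,S,R)$ that splits no cluster can still place an entire cluster, or the chosen representative of a cluster whose other terminals lie in $L$, inside $S$; then $T'$ need not be separated. This is the same ``terminals in the separator'' obstruction that the paper's unbalanced machinery (\Cref{thm:unbalanced}, via the hit-and-miss family of \Cref{splitters} plus a maximal independent set before invoking \Cref{isolatingvertexcuts}) is designed to handle, and your clustering step reintroduces it without a fix. Second, the $|T'|\le|T|/2$ accounting via ``every surviving cluster meets $T$ in at least two terminals'' is not established and is not how the paper gets halving: there, $|T'|\lesssim \phibar\,|T|\,(1+\phi)^{O(\log n)}$ with $\phibar\ll 1/2$, because each sparse cut contributes only $|S|\le\phibar|T_{L\cup S}|$ new terminals. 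Your subroutine (iii) does match the paper's treatment of terminal-unbalanced cuts (explicit splitters plus isolating vertex cuts, with adjacency handled by an independent set), but the surrounding architecture needs to be replaced by the left/right-graph recursion rather than a clustering.
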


Intuitively, terminal reduction means we find a smaller terminal set $T'$ such that the minimum $T$-Steiner cut of size less than $k$ remains $T'$-Steiner. Given \Cref{thm:main terminal reduction},  our main theorem follows immediately.

\begin{proof}[Proof of \Cref{thm:main}]
To check $k$-vertex connectivity of a graph $G=(V,E)$, we start with the entire vertex set $V$ as a terminal set, and repeatedly apply \Cref{thm:main terminal reduction} for $O(\log n)$ times until we obtain a vertex cut of size less than $k$ or the terminal set becomes empty ($T = \emptyset$) for which we conclude that $G$ is $k$-vertex connected.
\end{proof}

Prior to our work, \cite{LiP20deterministic} (implicitly) showed exactly the same terminal reduction algorithm as in \Cref{thm:main terminal reduction} except their algorithm is for edge mincuts and they guaranteed that $T' \subseteq T$, i.e., the output terminal $T'$ is always a subset of the input terminal $T$. This is \emph{not} guaranteed by our \Cref{thm:main terminal reduction}.%

\paragraph{Terminal Reduction Framework.} The key method to the proof of \Cref{thm:main terminal reduction} is to generalize the vertex reduction framework in \cite{SaranurakY22} to the terminal setting. 
At a high-level, the vertex reduction framework in \cite{SaranurakY22} consists of two steps. %
\begin{enumerate}
    \item First, given a graph $G=(V,E)$, their algorithm 
    outputs either a vertex cut of size $<k$ or a terminal set $T \subseteq V$ of size  $\ll n$ such that $\kappa_G(T) < k$ iff  $\kappa_G < k$. 
    \item In the second step, which is a vertex reduction step, given the terminal set $T$ from the first step, their algorithm  outputs a smaller graph $H$ of size $\Ohat(|T|) < m/2$ such that $\kappa_G < k$ iff $\kappa_H < k$. 
\end{enumerate}  The two-step vertex reduction framework~\cite{SaranurakY22} allows them to solve $k$-vertex connectivity after repeatedly applying it for $O(\log n)$ times. The drawback in their algorithm is that their second step requires $\Ohat(m  2^{O(k^2)})$ time, which is very slow when $k \gg \sqrt{\log n}$. %

In this paper, we consider the generalization, of the first step of the vertex reduction framework~\cite{SaranurakY22}, to the terminal setting:
\begin{quote} 
Given a graph $G = (V,E)$ and its terminal set $T \subseteq V$, output either a vertex cut of size less than $k$ or a terminal set $T'$ of size at most $|T|/2$ such that $\kappa(T') < k $ iff $\kappa(T) < k$. 
\end{quote}
 As a result, the new algorithm bypasses the second step of the vertex reduction framework~\cite{SaranurakY22} by directly reducing the terminal set while maintaining the Steiner connectivity of the output terminal set. Thus, we avoid spending $\Ohat(m 2^{O(k^2)})$ time for vertex reduction as in \cite{SaranurakY22}. By \Cref{thm:main terminal reduction}, we only spend $\Ohat(m k^2)$ time instead (using a deterministic linear-time max-flow algorithm~\cite{BrandCKLPGSS23}). The bottleneck of $k^2$ factor stems from the key subroutine in our terminal reduction that handles \textit{terminal-unbalanced} vertex cuts which we discuss next. We say that a vertex cut $(L,S,R)$ is \textit{terminal-unbalanced} if $\min\{|T \cap L|, |T \cap R|\} \leq n^{o(1)}$. 

\paragraph{Handling Terminal-unbalanced Vertex Cuts.} For simplicity, consider the following task:
\begin{quote}
    Given a terminal set $T \subseteq V$ with a promised that there is a vertex cut $(L,S,R)$ of size less than $k$ such that $|T \cap L| = 1, |T \cap R| \geq 1$, design a fast algorithm that outputs a vertex cut of size less than $k$. 
\end{quote}

If the vertex cut $(L,S,R)$ with the above promise is such that $S \cap T = \emptyset$, then we say that $(L,S,R)$ \emph{isolates} $T$. In this case, we can apply the isolating vertex cut lemma~\cite{li_vertex_2021} to deterministically output a vertex cut of size less than $k$ in time proportional to $O(\log |T|)$ calls of max-flows and we are done. In general, the vertex cut $(L,S,R)$ may not be $T$-isolating because possibly $S\cap T \neq \emptyset$.  As a result, it remains to solve the following task.

\begin{quote}
    Find a subset $T' \subseteq T$ where $(L,S,R)$ isolates $T'$, i.e., $|T'\cap L| = 1, |T'\cap S| = 0, |T'\cap R| \geq 1$. 
\end{quote}

To do so, we construct a family $\mathcal{F}$ of subsets $U \subseteq T$ of size $|\mathcal{F}|\le k^2 \log^{O(1)} n$ such that $(L,S,R)$ must isolate some $U' \in \mathcal{F}$ via a  \textit{Hit and miss hash family}~\cite{s_deterministic_2023}. Given such a family $\mathcal{F}$, we apply the isolating vertex cut lemma on each $U \in \mathcal{F}$ to obtain a vertex cut of size less than $k$ as desired. The total running time is roughly  $k^2 \log^{O(1)}n\cdot O(\log |T|) = k^2 \log^{O(1)}n$ calls to max-flows.

\paragraph{Using Ramanujan graphs for Approximate Mincuts.}
The goal is to find a vertex cut of size at most $(1+\eps)\kappa_G$. 
If the minimum degree is at most $(1+\eps)\kappa_G$, we can simply return the neighbor set of the min-degree vertex, sacrificing a $(1+\eps)$ factor. Otherwise, it is easy to show that any minimum cut $(L, S, R)$ is such that $|L|,|R|\ge \eps \kappa_G$, i.e., the cut is quite balanced. We show that Ramanujan graphs can give us a list of $O(n/\eps^2)$ vertex pairs such that a pair $(s,t)$ must exist where $s \in L, t \in R$. Thus, we will identify a minimum vertex cut by running unit-vertex-capacity $(s,t)$-max-flows on graph $G$ for each of the $O(n/\eps^2)$ vertex pairs. 

To obtain the above list of vertex pairs, let $X = (V,E_X)$ be a regular Ramanujan graph with degree $\Theta(1/\eps^2)$.  Assuming $|L|\le |R|$, we want to show that there exists an edge $(s,t)\in E_X$ where $s\in L$ and $t\in R$. 
There are two cases. If $|L| \ge \Omega(\eps n)$, then such an edge $(s,t) \in E_X$ must exist by the \emph{expander mixing lemma} as observed before in \cite{Gabow06}. Otherwise, we use the known fact that strong spectral expansion of $X$ implies strong vertex expansion for all ``small'' sets of $X$. Since $|L| \le O(\eps n)$, one can show that the vertex expansion of $L$ is at least $2/\eps$, that is $|N_X(L)|\ge 2|L|/\eps \ge 2\kappa_G$ where $N_X(L)$ denotes the set of neighbors of $L$ in $X$. So $N_X(L) \setminus (L\cup S) \neq \emptyset$, i.e., $N_X(L) \cap R \neq \emptyset$. This again implies an edge $(s,t)\in E_X$ where $s\in L$ and $t\in R$.

\subsection{Organization}
We start with introducing the notations needed for the rest of the paper in \Cref{sec:prelims}. In \Cref{sec:exact}, we prove our main technical result which is the fast terminal reduction algorithm  (\Cref{thm:main terminal reduction}) that implies our main result \Cref{thm:main}.
In \Cref{sec:approx}, we use the Ramanujan expanders to obtain a faster approximation algorithm, proving \Cref{thm:approx}.

\section{Preliminaries}\label{sec:prelims}
When we say a graph $G = (V,E)$, we denote $n = |V|$ and $m = |E|$ unless stated otherwise. Let $A,B\subseteq V$, we use $E(A,B)$ to denote the set of edges whose one endpoint belongs to $A$ and the other belongs to $B$. For the rest of the paper, unless stated otherwise, when we say an algorithm, we mean a deterministic algorithm. When we say $k$-connected, we mean $k$-vertex connected. When we say a cut, we mean vertex cut.

For any vertex set $S$, we denote $G - S$ as the graph $G$ after removing all vertices in $S$. A vertex cut $(L,S,R)$ is a partition of the vertex set $V$ such that there is no edge between $L$ and $R$. We say that a vertex set $S$ is a \textit{vertex separator} (or simply a separator) if $G - S$ is disconnected. By definition, the vertex set $S$ in a vertex cut $(L,S,R)$ is a separator. We say that $S$ \textit{separates} $x$ and $y$ for some $x,y \in V$ if $x,y \not \in S$ and there is no path from $x$ to $y$ in $G - S$. The minimum number of vertices we need to remove to separate some pair of vertices in the graph $G = (V,E)$ is called vertex connectivity denoted by $\kappa_G$ and those set of vertices $S$ is called vertex mincut.

Let $u,v\in V$, we use $\kappa(u,v)$ to denote the minimum number of vertices need to be removed to disconnect $u$ from $v$ (or $n-1$ if it is not possible). Let $T\subseteq V$ be a set of terminals, we use 
$\kappa(T) = \min_{x,y\in T}\kappa(x,y)$ to denote the \emph{steiner vertex connectivity} of terminal set $T$ which is the minimum number of vertices to be removed to disconnect at least one terminal from other terminals. The set $S$ that separates $T$ is called steiner cut. If it is of minimum possible size then it is called minimum steiner cut or minimum separator of set $T$. If $|T| \leq 1$, then $\kappa(T) = n-1$.

We define the \emph{terminal expansion} $h_T$ of a vertex cut $(L, S, R)$ in graph $G$ with respect to the terminal set $T$ as follows: \[h_T(L, S, R) = \frac{|S|}{\min\{|T\cap (L\cup S)|,|T\cap (R\cup S)|\}}.\]

We say the cut $(L, S, R)$ is $(T,\phi)$\emph{-expanding} in $G$ if $h_T(L, S, R)\geq \phi$ and $(T,\phi)$\emph{-sparse} if $h_T(L, S, R) < \phi$. We say that a graph $G$ is a $(T, \phi)$\emph{-expander} if every vertex cut $(L, S, R)$ such that $\min\{|T\cap (L\cup S)|,|T\cap (R\cup S)|\} >  0$  in the graph $G$ is $(T,\phi)$-expanding.
\section{An Exact Algorithm}\label{sec:exact}
In this section, we prove the main result stated in \Cref{thm:main}.  The main result follows from our terminal reduction algorithm. We recall \Cref{thm:main terminal reduction}.

\terminalreduction*

Recall that \Cref{thm:main} follows from \Cref{thm:main terminal reduction}. Indeed, starting as the entire vertex $V$ as a terminal set, we repeatedly applying \Cref{thm:main terminal reduction} for $O(\log n)$ times until we obtain a vertex cut of size less than $k$ or the terminal set becomes empty for which we know that the graph is $k$-connected. 

The rest of the section is devoted to proving \Cref{thm:main terminal reduction}. In \Cref{sec:unbalanced}, we show how to find a terminal-unbalanced mincut that has at most $\beta$ terminals on the smaller side, which will help to solve the vertex mincut fast when the graph is an expander in the base case of our algorithm. %
\Cref{sec:terminalRecursion} defines the concept of $k$-left and $k$-right graphs that helps us to implement the divide-and-conquer approach by using structural properties of the vertex mincut. In \Cref{sec:slowTerminalRedn}, we describe a slow terminal reduction algorithm based on the notion of $k$-left and $k$-right graphs. In \Cref{sec:terminal reduction}, we provide a fast implementation of the algorithm described in \Cref{sec:slowTerminalRedn} and prove the running time guarantee stated in \Cref{thm:main terminal reduction} using the algorithm in \Cref{sec:unbalanced} as a base case. %
\subsection{Unbalanced Vertex Cuts}\label{sec:unbalanced}
Let $G = (V, E)$ be a graph with terminal set $T \subseteq V$. We say that a vertex cut $(L,S,R)$ is $(T,\beta)$-\textit{unbalanced} if $\min\{ |T \cap (L \cup S)|, |T \cap (S \cup R)|\} < \beta$ and $(T,\beta)$-\textit{balanced} otherwise. Both sides of the cuts (including the separator) contain less than $\beta$ terminals. %
This section shows how to find such a cut if it exists.

\begin{theorem}\label{thm:unbalanced}
There is an algorithm, denoted as \textsc{Unbalanced}$(G, T,\beta)$, that takes as input a graph $G = (V, E)$, a terminal set $T \subseteq V$, and a parameter $\beta \geq 2$, and outputs a separator $S$ with the following guarantee: if $G$ contains a $(T,\beta)$-unbalanced vertex mincut, then $S_{\min}$ is a minimum separator. %

The algorithm makes calls to max-flows such that the total number of edges in all max-flow instances is at most $O(m\beta^2 \log^5 |T|)$, and the algorithm spends $O(m\beta^2 \log^4 |T|)$ time outside the max-flow calls.
\end{theorem}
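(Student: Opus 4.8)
The plan is to reduce the problem to the isolating-cuts machinery by producing a small family of terminal subsets, one of which must be isolated by the purported $(T,\beta)$-unbalanced vertex mincut. Concretely, suppose a $(T,\beta)$-unbalanced vertex mincut $(L,S,R)$ exists with, say, $|T\cap(L\cup S)|<\beta$; write $T_L=T\cap L$ and $T_R=T\cap R$, so $|T_L|+|S\cap T|<\beta$ and $|T_R|\ge 1$. If it happened that $S\cap T=\emptyset$, then $(L,S,R)$ would isolate the set $T_L\cup\{r\}$ for any $r\in T_R$, and a single application of the isolating-cut lemma to a suitable partition of $T$ would recover a minimum separator. The obstacle is that $S$ may contain terminals. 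To cope with this, I would build a \emph{hit-and-miss hash family} $\mathcal{F}$ of subsets $U\subseteq T$ (following \cite{s_deterministic_2023}) of size $|\mathcal{F}|=\beta^2\log^{O(1)}|T|$ with the guarantee that, for every pair of disjoint sets $A,B\subseteq T$ with $|A|\le\beta$ and $|B|$ arbitrary, there is some $U\in\mathcal{F}$ with $A\subseteq U$ and $U\cap B=\emptyset$ — equivalently, $U$ ``hits'' all of $A$ and ``misses'' all of $B$. Applying this with $A=T\cap(L\cup S)$ (which has size $<\beta$) and $B=T_R$ — wait, we actually want the small side \emph{inside} $U$ and the separator terminals \emph{outside}; so set $A=T_L$, the small-side terminals, and $B=S\cap T$: then the relevant $U\in\mathcal F$ satisfies $T_L\subseteq U$, $U\cap S=\emptyset$, and we may throw back $T_R$ as a second class. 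More carefully, since we need at least one terminal on the far side too, I would instead ask the family to hit $T_L\cup\{r\}$ for one choice of $r\in T_R$ while missing $S\cap T$; iterating $r$ over $T$ only costs a $\log|T|$ factor, or better, the hit-and-miss family can be designed to handle this directly.

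Given the family $\mathcal{F}$, the algorithm is: for each $U\in\mathcal{F}$, run the isolating vertex-cut procedure of \cite{li_vertex_2021} with terminal set $U$; among all separators produced, return the one $S_{\min}$ of minimum size. For correctness, fix the promised $(T,\beta)$-unbalanced vertex mincut $(L,S,R)$. By the hit-and-miss property there is $U^\star\in\mathcal{F}$ with $(T_L)\subseteq U^\star$, $U^\star\cap(S\cap T)=\emptyset$, and $U^\star\cap T_R\neq\emptyset$ (the last ensured by the design of the family, treating $T_R$ as a third constraint class or by the extra $\log|T|$ loop over $r\in T_R$). Then $U^\star$ is \emph{isolated} by $(L,S,R)$ in the sense required by the isolating-cut lemma: the set $U^\star$ has vertices strictly on the $L$ side, strictly on the $R$ side, and none in $S$. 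Hence the isolating-cut lemma run on $U^\star$ returns, for some $u\in U^\star$, the minimum separator between $u$ and $U^\star\setminus\{u\}$, which has size $\le|S|=\kappa_G$; since $\kappa_G$ is the global minimum, this separator is exactly a minimum separator, and therefore $S_{\min}$ (the smallest over all of $\mathcal{F}$) is a minimum separator.

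For the running time: the isolating vertex-cut lemma on a terminal set $U$ costs $O(\log|U|)$ max-flow calls whose instances have total size $O(m)$ each (in the unit-vertex-capacity vertex-splitting reduction), so $O(m\log|U|)$ edges in total per invocation, plus $O(m\log|U|)$ bookkeeping time. We invoke it $|\mathcal{F}|=O(\beta^2\log^{O(1)}|T|)$ times (times an extra $O(\log|T|)$ if we loop over $r\in T_R$), giving total max-flow edge count $O(m\beta^2\log^{5}|T|)$ and additional time $O(m\beta^2\log^4|T|)$ after absorbing constants into the polylog — matching the claimed bounds; one must also account for the cost of \emph{constructing} $\mathcal{F}$, which is $\poly(|T|)$ but dominated since the family is tiny and can be built with $O(\beta^2\polylog|T|)$ sets each of size $O(|T|)$.

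The main obstacle I anticipate is getting the hit-and-miss family to simultaneously enforce all three constraints — $T_L\subseteq U$, $U\cap S=\emptyset$, $U\cap T_R\neq\emptyset$ — with only a $\beta^2\polylog$ blow-up rather than something larger, and doing so \emph{deterministically}; this is where the cited construction of \cite{s_deterministic_2023} is essential, and the factor $\beta^2$ (as opposed to $\beta$) in the statement is exactly the price of this deterministic hit-and-miss guarantee. A secondary subtlety is handling the symmetric case where the small side is $R$ rather than $L$, and the degenerate cases $|T_R|=0$ or $|T\cap S|$ consuming almost the entire budget $\beta$, but these are routine: run the same procedure symmetrically and observe that if no $(T,\beta)$-unbalanced vertex mincut exists then the returned $S_{\min}$ may be arbitrary, which is all the theorem promises.
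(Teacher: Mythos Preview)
Your proposal has the right overall architecture --- hit-and-miss family plus isolating cuts --- but the way you set up the hit-and-miss constraint is misaligned with what the isolating-cuts lemma actually needs, and this is a real gap. The isolating-cuts lemma (\Cref{isolatingvertexcuts}) returns, for each $v$ in the input set $I$, a $(\{v\}, I\setminus\{v\})$-min-separator; to certify that one of these has size $\le |S|$, you need a set $U^\star$ with \emph{exactly one} terminal in $L$, none in $S$, and at least one in $R$. Your requirement ``$T_L\subseteq U^\star$, $U^\star\cap S=\emptyset$, $U^\star\cap T_R\neq\emptyset$'' does not give this: if $|T_L|\ge 2$ then every $u\in U^\star$ has another $U^\star$-vertex on the same side of $S$, so $S$ separates no single $u$ from $U^\star\setminus\{u\}$ and the lemma proves nothing about $|S|$. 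The paper instead fixes one $v\in T\cap L$, sets $B=\{v\}$ and $A=(T\cap(L\cup S))\setminus\{v\}$ (so $|A|\le\beta-1$, $|B|=1$), and asks the hash family to separate $v$ from all of $A$; the resulting $U^\star=h^{-1}(h(v))$ then has $U^\star\cap(L\cup S)\cap T=\{v\}$.

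Your treatment of the far-side constraint $U^\star\cap T_R\neq\emptyset$ is also off. ``Iterating $r$ over $T$'' costs a factor $|T|$, not $\log|T|$, which would blow the budget. The paper avoids any iteration here by proving and exploiting a \emph{minimum support size} bound for the hash family: each nonempty preimage $h^{-1}(j)$ has size at least $|T|/O(\beta\log^2|T|)\ge 2$, so $U^\star$ automatically contains a second terminal, which must lie in $R$. This support-size observation is the trick that keeps the family at size $O(\beta^2\log^4|T|)$ without an extra loop. A final minor point you omit: the isolating-cuts lemma requires an independent set, so the paper first passes to a maximal independent set $I_{T'}\subseteq T'$ and argues the isolated terminal $v$ survives because it has no neighbor in $T'$.
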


To prove \Cref{thm:unbalanced}, we use two main ingredients described in \Cref{splitters} and \Cref{isolatingvertexcuts}. Below, we say that a vertex cut $(L,S,R)$ \emph{isolates} a terminal set $T$ if $|T\cap L|=1$, $|T\cap S| = 0$, $|T\cap R|\geq 1$. We refer to the vertex $v\in T\cap L$ as the \emph{isolated terminal}.

\begin{lemma}\label{splitters}
There exists an algorithm that, given parameters $t$ and $\beta \geq 2$ as input, constructs a family $\mathcal{T}$ of subsets of the terminal set $T$ where $|T| = t$, of size $|\mathcal{T}|\leq O(\beta^2\log^4 t)$ in time $O(\beta t\log^2 t)$ with the following guarantee:
for every $(T,\beta)$-unbalanced vertex cut $(L,S,R)$, there exists a terminal set $T'\in \mathcal{T}$ such that $(L,S,R)$ isolates $T'$.
\end{lemma}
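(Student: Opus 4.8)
The plan is to reduce the task to a known deterministic object — a hit-and-miss hash family in the style of~\cite{s_deterministic_2023} — and then package its output as a family of terminal subsets. Fix a $(T,\beta)$-unbalanced vertex cut $(L,S,R)$, and assume without loss of generality that $|T \cap (L \cup S)| < \beta$. The set $A := T \cap L$ has size at most $\beta$, and since the cut is unbalanced (so the ``$L$-side'' really is the small side in the sense of the definition) $A$ is nonempty; let $B := T \cap R$, which is also nonempty by the cut definition, and $C := T \cap S$ has size at most $\beta$. We want a subfamily member $T' \in \mathcal{T}$ that (i) contains exactly one element of $A$, (ii) contains no element of $C$, and (iii) contains at least one element of $B$. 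The natural way to get (iii) cheaply is to force $T' \supseteq B$ "by default": take each set in the family to be of the form $T \setminus W$ for a small "removal set" $W \subseteq T$. Then $T' = T \setminus W$ automatically contains all of $B$ as long as $W \cap B = \emptyset$; conditions (i) and (ii) become: $|W \cap A| = |A| - 1$ (remove all but one element of $A$) and $W \supseteq C$ (remove all of $C$), while $W$ misses $B$ entirely. But $B$ can be huge, so $W$ cannot literally contain $C$ and avoid all of $B$ deterministically in one shot; instead we only need $W$ to \emph{separate} one designated element of $A$ from the at-most-$2\beta$ elements of $A \cup C$ that must be removed, \emph{and} from the (few) other elements of $A$, while $W$ need only miss \emph{one} element of $B$ rather than all of $B$. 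So the real combinatorial core is: over the ground set $T$, build a family of removal sets $\{W\}$ such that for every pair of disjoint sets $(P, Q)$ with $P = (A \setminus \{a^\star\}) \cup C$ of size $\le 2\beta$ and $Q = \{a^\star, b^\star\}$ of size $2$, some $W$ in the family satisfies $P \subseteq W$ and $W \cap Q = \emptyset$.

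Next, I would invoke a hit-and-miss hash family to produce exactly such a collection. A hit-and-miss family for universe size $t$, "miss" bound $2\beta$, and "hit" bound $2$ is a family of functions $h : T \to [r]$ (for some small range $r$) such that for every disjoint pair $(P,Q)$ with $|P| \le 2\beta$, $|Q| \le 2$, some $h$ in the family is injective on $P \cup Q$ with $h(P) \cap h(Q) = \emptyset$ — equivalently, for some partition of the color classes into "removed" and "kept", all of $P$ lands in "removed" colors and all of $Q$ in "kept" colors. Concretely, one composes a standard $(t, 2\beta+2, 2\beta+2)$-perfect hash family of size $O(\beta \log t)$ with, for each hash function, all $2^{O(\beta)}$... — no, that is too large; instead one uses the refined hit-and-miss construction of~\cite{s_deterministic_2023} which directly yields a family $\mathcal{W}$ of subsets $W \subseteq T$ of size $|\mathcal{W}| = \poly(\beta) \cdot \polylog(t)$, and with the quantitative bound $O(\beta^2 \log^4 t)$ as claimed, such that for every disjoint $(P,Q)$ with $|P| \le 2\beta$ and $|Q| \le 2$ there is a $W \in \mathcal{W}$ with $P \subseteq W$ and $W \cap Q = \emptyset$. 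I would cite this black-box and then \emph{define} $\mathcal{T} := \{\, T \setminus W : W \in \mathcal{W} \,\}$. For the correctness claim: given the unbalanced cut $(L,S,R)$ with small side $A = T\cap L$, $|A| \le \beta$, pick an arbitrary $a^\star \in A$ and an arbitrary $b^\star \in B = T \cap R \ne \emptyset$; set $P := (A \setminus \{a^\star\}) \cup (T \cap S)$, which has size $\le (\beta - 1) + \beta \le 2\beta$ since \emph{both} $|T\cap L| \le \beta$ (given) forces $|A| \le \beta$ — wait, I must double check: the definition only bounds $|T \cap (L \cup S)| < \beta$, which is even better, it bounds $|A| + |C| < \beta$ in one stroke, so in fact $|P| \le |A \setminus \{a^\star\}| + |C| \le |A| + |C| - 1 < \beta$, comfortably within $2\beta$; and set $Q := \{a^\star, b^\star\}$, which is disjoint from $P$. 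The hit-and-miss property gives $W \in \mathcal{W}$ with $P \subseteq W$, $W \cap Q = \emptyset$. Then $T' := T \setminus W$ satisfies: $a^\star \in T'$ (since $a^\star \notin W$) and $a^\star \in L$; every other element of $A$ is in $P \subseteq W$ so not in $T'$, hence $T' \cap L = \{a^\star\}$, giving $|T' \cap L| = 1$; every element of $C = T\cap S$ is in $P \subseteq W$ so $T' \cap S = \emptyset$; and $b^\star \in T'$ with $b^\star \in R$, so $|T' \cap R| \ge 1$. Thus $(L,S,R)$ isolates $T'$.

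For the resource bounds: $|\mathcal{T}| = |\mathcal{W}| = O(\beta^2 \log^4 t)$ directly. The construction time is that of building the hit-and-miss family, which the cited construction gives as $O(\beta t \log^2 t)$ (building each of the $\poly(\beta)\polylog(t)$ sets takes near-linear time in $t$); I would simply quote this, taking care that the per-set work is $O(t \log t / \beta \cdot \beta) $-type and multiplies out to the stated bound, and that $\mathcal{T}$ is obtained from $\mathcal{W}$ by complementation, which is $O(t)$ per set and absorbed. The main obstacle — and the only genuinely nontrivial point — is getting a hit-and-miss family whose size is only $O(\beta^2 \polylog t)$ rather than the $2^{\Theta(\beta)}$ one would get from the naive "perfect hash family plus brute-force over $2$-colorings" approach; this is precisely where one must lean on the structured construction of~\cite{s_deterministic_2023} (and possibly a small additional idea, e.g. handling the two "hit" elements by a separate $O(\log^2 t)$-size pairwise-separating family layered on top, multiplying sizes), so the write-up should state the hit-and-miss guarantee as a clean black-box lemma and verify only that its parameters $(2\beta, 2)$ and quantitative bounds match what we have invoked. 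Everything else is the bookkeeping above, which is routine.
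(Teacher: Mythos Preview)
Your reduction to a hit-and-miss object is the right idea and matches the paper's strategy, but the specific packaging you chose creates a gap. You require a family $\mathcal{W}$ such that for every $(P,Q)$ with $|P|<\beta$ and $|Q|=2$ some $W$ contains $P$ and misses $Q$, and you assert this is obtainable from \cite{s_deterministic_2023} with $|\mathcal{W}|=O(\beta^2\log^4 t)$. But what that reference (and the paper's Claim based on it) actually gives is a hash family $\mathcal{H}=[t,a,b,l,\cdot]_q$ with $l=O(ab\log t)$ and $q=O(ab\log^2 t)$; converting this into your removal-set family with $|Q|=2$ requires, for each $h\in\mathcal{H}$, enumerating \emph{pairs} of colors (so that $T\setminus W=h^{-1}(j_1)\cup h^{-1}(j_2)$), which costs $l\cdot q^2=O(\beta^3\log^5 t)$ sets, not $O(\beta^2\log^4 t)$. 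Your proposed ``layering by a pairwise-separating family'' does not fix this: you need both elements of $Q$ to land in the \emph{same} kept region, not to be separated. The construction time also slips: each $T\setminus W$ has size $\Theta(t)$, so even listing the family costs $\Omega(|\mathcal{T}|\cdot t)$, far above $O(\beta t\log^2 t)$.

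The paper avoids both problems by a single clean observation you are missing: you never need $|Q|=2$. Instead take $b=1$ (only $a^\star$ is in the ``keep'' set) and let each terminal set be a \emph{single color class} $T'=h^{-1}(j)$; then $|\mathcal{T}|\le l\cdot q=O(\beta^2\log^3 t)$ and all color classes of one hash function are listed in $O(t)$ time, giving the $O(\beta t\log t)$ construction. The role of your $b^\star$ is played by the \emph{minimum support size}: since each nonempty $h^{-1}(j)$ has at least $s=t/O(\beta\log^2 t)\ge 2$ elements, and the set $T'$ containing $a^\star$ already misses all of $T\cap(L\cup S)\setminus\{a^\star\}$, the extra element of $T'$ must lie in $T\cap R$ automatically. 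That is the key idea your complement-based formulation obscures.
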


The proof of \Cref{splitters} is deferred to the end of this section. The construction is relatively standard and based on the Hit and Miss hash family from \cite{s_deterministic_2023}. The second tool is the vertex version of the isolating cuts from \cite{li_vertex_2021}.

\begin{lemma}[Lemma 4.2 of \cite{li_vertex_2021}]\label{isolatingvertexcuts}
There exists an algorithm that takes an input graph $G=(V, E)$ and an independent set $I \subset V$ of size at least 2, and outputs for each $v \in I$, a $({v}, I\setminus {v})$-min-separator $C_v$. 
The algorithm makes $(s,t)$ max-flow calls on unit-vertex-capacity graphs with $O(m\log |I|)$ total number of vertices and edges and takes $O(m)$ additional time.
\end{lemma}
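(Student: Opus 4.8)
The plan is to adapt the isolating-cuts technique of Li and Panigrahi to vertex cuts through the standard reduction to directed edge cuts. First I would fix an arbitrary labeling $I=\{v_1,\dots,v_{|I|}\}$ and, for each bit position $j\in\{1,\dots,\lceil\log_2|I|\rceil\}$, partition $I$ into $A_j=\{v_i:\text{the }j\text{-th bit of }i\text{ is }0\}$ and $B_j=I\setminus A_j$. For each $j$ I would compute a minimum vertex cut $(L_j,S_j,R_j)$ of $G$ with $A_j\subseteq L_j$ and $B_j\subseteq R_j$; since $I$ is an independent set, this is a single unit-vertex-capacity $(s,t)$-max-flow after attaching a super-source of infinite capacity (capacity $n$ suffices) to all of $A_j$ and a super-sink of infinite capacity to all of $B_j$. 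This phase costs $O(\log|I|)$ max-flow calls on graphs of size $O(m)$ each, hence $O(m\log|I|)$ in total.

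Next, for each $v=v_i\in I$ I would define the region $U_v$ to consist of every vertex lying on $v$'s \emph{open} side of every bipartition cut: $u\in U_v$ iff, for all $j$, $u\in L_j$ when the $j$-th bit of $i$ is $0$ and $u\in R_j$ otherwise. Since distinct indices differ in some bit, the associated terminals land on opposite open sides of the corresponding cut, so $\{U_v\}_{v\in I}$ are pairwise disjoint, and no edge of $G$ joins two different regions (its endpoints would be separated by some $(L_j,S_j,R_j)$). The key structural claim is that for each $v$ there is a minimum $(\{v\},I\setminus\{v\})$-separator $C_v$ whose $v$-side component lies entirely inside $U_v$. To prove it I would pass to the split digraph $\vec G$ — each vertex $u$ becomes $u_{\mathrm{in}}\to u_{\mathrm{out}}$ of capacity $1$ (capacity $\infty$ for $u\in I$), and each edge $\{a,b\}$ becomes the two arcs $a_{\mathrm{out}}\to b_{\mathrm{in}}$, $b_{\mathrm{out}}\to a_{\mathrm{in}}$ of capacity $\infty$ — where vertex separators of $G$ correspond to finite directed edge cuts and the directed cut function is submodular. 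Uncrossing the source-minimal minimum isolating cut of $v$ against each bipartition cut (the intersection of the two source sides is still isolating for $v$, while their union is still a valid source side for the $A_j$–$B_j$ separation and hence has cut value at least the bipartition minimum) forces $v$'s source-minimal isolating source side to sit inside every $v$-side bipartition side, i.e., inside $U_v$.

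The recovery phase then processes each $v$ separately: I would form $G_v$ (equivalently $\vec G_v$) by keeping all edges inside $U_v$ and contracting $V\setminus U_v$ to a single sink $x_v$, with one arc per edge leaving $U_v$, and compute a minimum $(v,x_v)$ vertex separator $C_v$ in $G_v$. In the split-digraph picture such a minimum cut uses only vertex-split arcs, so $C_v$ is a set of vertices of $G$ avoiding both $v$ and $x_v$; it is a valid isolating cut for $v$ in $G$ because $v$'s component in $G-C_v$ cannot leave $U_v$ (edges leaving $U_v$ were redirected to $x_v$) and every other terminal was absorbed into $x_v$, while conversely the structural claim supplies, inside $G_v$, an isolating cut of the true minimum size. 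Hence $C_v$ is a minimum $(\{v\},I\setminus\{v\})$-separator of $G$. Since the regions $U_v$ are disjoint and no edge crosses between them, $\sum_v|E(G_v)|=O(m)$ and $\sum_v|V(G_v)|=O(n)$, so this phase is $|I|$ max-flow calls of total size $O(m)$; together with the bipartition phase the max-flow instances have total size $O(m\log|I|)$, and the contractions, region labeling, and cut extraction take $O(m)$ additional time.

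The step I expect to be the main obstacle is the structural claim: vertex-cut functions are not literally submodular, so the uncrossing has to be carried out in the split digraph, and one must verify that minimum cuts there never cut an infinite arc, that the recovered $C_v$ avoids $v$ and $x_v$, and that degenerate cases (e.g.\ $U_v$ small or empty, or a terminal with $\kappa(v,I\setminus\{v\})=n-1$) are handled. Throughout, the hypothesis that $I$ is an independent set of size at least $2$ is precisely what guarantees that terminal vertices never belong to a separator and that the super-source / super-sink gadgets in the bipartition max-flows behave well.
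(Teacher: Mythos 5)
The paper does not prove this lemma; it imports it verbatim as Lemma~4.2 of the cited work, so your proposal can only be judged on its own terms. Your overall route (bit\-/partition cuts, the split digraph, submodular uncrossing, then per-terminal contraction) is indeed the standard one, and the uncrossing step is argued correctly. But there is a genuine gap in the recovery phase, caused by defining $U_v$ as a subset of the \emph{original} vertex set, namely $U_v=\bigcap_j(\text{$v$'s open side of cut }j)$, and then contracting all of $V\setminus U_v$ into the sink. What the uncrossing actually gives you in the split digraph is $X\subseteq\bigcap_j Y_j$, where each source side $Y_j$ contains $u_{\mathrm{in}}$ (but not $u_{\mathrm{out}}$) for every $u\in S_j$. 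Translated back, the open component $L$ of the optimal isolating cut lies in $\bigcap_j L_j=U_v$, but the separator $C_v=N_G(L)$ itself only lies in $\bigcap_j(L_j\cup S_j)$ and may well meet some $S_j$. Those vertices are outside your $U_v$, so your contraction merges them into $x_v$; the edges from $L$ to them become edges from $L$ to $x_v$, and the minimum $(v,x_v)$-separator of $G_v$ can then be strictly larger than $|C_v|$ --- or fail to exist at all. Concretely, take the path $v_1-a-v_2$ with $I=\{v_1,v_2\}$: the single bipartition cut is $(\{v_1\},\{a\},\{v_2\})$, so $U_{v_1}=\{v_1\}$, and $G_{v_1}$ contracts $\{a,v_2\}$ into $x_{v_1}$, making $v_1$ adjacent to the sink; your recovery returns nothing, while the correct answer is $C_{v_1}=\{a\}$.

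The fix is exactly the point where the vertex version diverges from the edge version: the regions must be taken in the split digraph, where a vertex can be ``half inside'' ($u_{\mathrm{in}}\in\hat U_v$, $u_{\mathrm{out}}\notin\hat U_v$), and the contraction performed there; equivalently, in $G$ you must keep every boundary vertex of $N_G(U_v)$ as an explicit vertex of $G_v$ (eligible to be cut) and attach the sink $x_v$ behind the boundary rather than absorbing it. The $O(m)$ total-size accounting survives this change, since every edge of $G$ has an endpoint in at most one region $U_v$ and hence appears in at most one $G_v$. With that repair your argument goes through; as written, the structural claim you prove (only the \emph{open} $v$-side sits in $U_v$) is too weak to justify the recovery graph you build.
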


Given the two ingredients, we are now ready to prove our main theorem of the section.

\begin{proof}[Proof of \Cref{thm:unbalanced}]
The algorithm \textsc{Unbalanced}$(G, T,\beta)$ is as follows. First, construct a terminal set family $\mathcal{T}$ using \Cref{splitters} with $t=|T|,\beta$ as inputs. For every $T'\in \mathcal{T}$, compute an arbitrary maximal independent set $I_{T'}$ of $T'$ and apply \Cref{isolatingvertexcuts} on graph $G$ and terminal set $I_{T'}$ as inputs. Finally, we return $S_{\min}$ where $S_{\min}$ is the minimum separator over all the $({v}, I\setminus \{v\})$-min-separators $C_v$ for all $v \in I_{T'}$ and over all $T' \in \mathcal{T}$.

Now we prove the correctness of this algorithm. 
Let $(L, S, R)$ be the $(T,\beta)$-unbalanced vertex mincut in graph $G$.
From \Cref{splitters}, $(L,S,R)$ must isolate some terminal set $T' \in \mathcal{T}$. Let $v \in L\cap T'$ be the isolated terminal. Note that $v$ is not adjacent to any other terminal $v' \in T'$, so $v$ must be in the maximal independent set $I_{T'} \subseteq T'$. 
Since $S$ is a $(v, I_{T'}\setminus\{v\})$ min-separator, $|S_{\min}| \le |S|$ and we are done.

It remains to bound running time. From \Cref{splitters}, the time taken to construct $\mathcal{T}$ is $O(\beta t\log^2 t)$.
The time to compute the maximal independent sets $I_{T'}$ over all $T'\in \mathcal{T}$ is $O(m \beta^2\log^4 t)$ since $|\mathcal{T}|\leq O(\beta^2\log^4 t)$. Lastly, we analyze the total time to invoke \Cref{isolatingvertexcuts}. The algorithm makes max-flow calls on unit-vertex-capacity graphs with $O(\beta^2\log^4 t)\times O(m\log t) = O(m\beta^2\log^5 t)$ total number of vertices and edges and a total additional time of at most $O(\beta^2\log^4 t)\times O(m) = O(m\beta^2\log^4 t)$ outside max-flow calls.
\end{proof}

Hence, the rest of the section is devoted to proving \Cref{splitters}. We must introduce some formalism used in \cite{s_deterministic_2023} with slight modification for constructing such a terminal sets family.

\begin{definition}[Hit and Miss hash family]\label{hmhf}
Given $N,a,b,q,l,s\in \mathbb{N}$ such that $b\leq a$, we say that $\mathcal{H} \coloneqq \{h_i : [N] \rightarrow [q] \mid i \in [l]\}$ is a $[N,a,b,l,s]_q$-Hit and Miss (HM) hash family if
\begin{enumerate}
 \item For every pair of mutually disjoint subsets $A, B$ of $[N]$, where $|A|\leq a$ and $|B|\leq b$, there exists some $i\in [l]$ such that:
 \begin{center}
 $\forall (x,y)\in A \times B, h_i(x) \neq h_i(y)$
 \end{center}
 \item For every $h_i\in \mathcal{H}$ and every $j\in [q]$ we either have $h_{i}^{-1}(j) = \emptyset$ or $|h_{i}^{-1}(j)|\geq s$ where $h_{i}^{-1}(j) = \{ x \in [N] \mid h_i(x) = j\}$ denotes the set of all elements whose hash value is $j$ for the hash function $h_i$ and we call $s$ to be the minimum support size of the hash family $\mathcal{H}$.
\end{enumerate}
The computation time of a $[N, a, b, l, s]_q$ is defined to be the time needed to output the $l\times N$ matrix with entries in $[q]$ whose $(i,x)^{th}$ entry is $h_i(x)$ for $h_i \in \mathcal{H}$.
\end{definition}

Below, we show an HM-hash family with small $l$ and alphabet size $q$ but with large minimum support $s$. The construction is based on Lemma 16 of \cite{s_deterministic_2023}, but they did not analyze the minimum support size $s$ and the construction time.
Since both minimum support size $s$ and construction time are crucial for us, we give the proof for completeness.

\begin{claim}\label{primeshfwiths}
Given $N, a, b \in \mathbb{N}$ such that $b\leq a$ and $N\geq 200ab\log^2 N$, there exists a $[N, a, b, 1+ab\log N, \frac{N}{O(ab\log^2 N)}]_{O(ab\log^2 N)}$-HM hash family that can be constructed in time $O(abN\log N)$.
\end{claim}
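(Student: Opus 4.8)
The plan is to realize $\mathcal{H}$ by modular hash functions whose moduli are primes in a dyadic window. Fix $P \coloneqq \lceil 100\, ab\log^2 N\rceil$ (so that $2P \le N$ by the hypothesis $N \ge 200\, ab\log^2 N$), let $p_1 < p_2 < \cdots$ enumerate the primes in $(P, 2P]$, set the common alphabet size $q \coloneqq 2P = O(ab\log^2 N)$, and define $h_i\colon [N]\to[q]$ by $h_i(x) \coloneqq 1 + (x \bmod p_i)$. We keep the first $l \coloneqq 1 + ab\log N$ of these functions, once we have argued that the window contains at least that many primes. It then remains to check the two HM properties and the construction time. (Throughout, $\log = \log_2$.)

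\emph{Hit property.} Let $A,B\subseteq[N]$ be disjoint with $|A|\le a$ and $|B|\le b$. Call a prime $p$ \emph{bad for} $(x,y)\in A\times B$ if $h_p(x)=h_p(y)$, i.e.\ $p \mid (x-y)$. Since $x\ne y$ and $1\le |x-y|\le N-1$, the integer $|x-y|$ has fewer than $\log N$ distinct prime divisors, so the pair $(x,y)$ has fewer than $\log N$ bad primes; taking a union over the at most $ab$ pairs of $A\times B$, fewer than $ab\log N < l$ primes are bad for some pair. Hence at least one of $p_1,\dots,p_l$ separates every pair, which is exactly condition~(1) of \Cref{hmhf}.

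\emph{Support size, alphabet, and construction time.} For a prime $p\le q$ and $j\in[p]$, the fiber $h_p^{-1}(j)=\{x\in[N]\colon x\equiv j-1 \pmod p\}$ has size $\lfloor N/p\rfloor$ or $\lceil N/p\rceil$, hence at least $\lfloor N/(2P)\rfloor$; for $j\in\{p+1,\dots,q\}$ the fiber is empty. Since $N\ge 2P$ we have $\lfloor N/(2P)\rfloor \ge N/(4P) = N/O(ab\log^2 N)$, so condition~(2) holds with minimum support $s = N/O(ab\log^2 N)$, and the alphabet size is $q = O(ab\log^2 N)$, as stated. For the running time, the primes in $(P,2P]$ are listed by a sieve of Eratosthenes up to $2P = O(ab\log^2 N) = O(N)$ in $O(P\log\log P) = O(abN\log N)$ time, after which filling the $l\times N$ matrix uses one modular reduction per entry, i.e.\ $O(lN) = O(abN\log N)$ operations; since the output matrix has $lN = \Theta(abN\log N)$ entries, this is optimal up to constants.

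\emph{Enough primes --- the main obstacle.} The only nonroutine point is that $(P,2P]$ contains at least $l = 1 + ab\log N$ primes. I would invoke an explicit short-interval prime bound (e.g.\ Rosser--Schoenfeld), which gives $\pi(2P)-\pi(P) \ge c\,P/\log P$ for an absolute constant $c>0$ as soon as $P$ exceeds a small threshold, satisfied here since $P = \Theta(ab\log^2 N)$. From $N \ge 200\,ab\log^2 N$ we get $ab \le N$, hence $\log(ab)\le\log N$, and with $\log\log N \le \log N$ this yields $\log(2P) = \log(2ab\log^2 N)+O(1) = O(\log N)$; therefore $\pi(2P)-\pi(P) = \Omega\!\big(P/\log N\big) = \Omega(ab\log N)$, and taking the constant in the definition of $P$ large enough makes this at least $1+ab\log N$. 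This is precisely where the hypothesis $N\ge 200\,ab\log^2 N$ is used twice: to force $2P\le N$ (needed for the support bound) and to make $P$ large enough that the prime count dominates the additive ``$+1$''. I expect the only care required in the write-up is tracking these explicit constants.
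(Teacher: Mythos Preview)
Your proof is correct and rests on the same modular-prime hashing idea as the paper: both take $h_p(x)=x\bmod p$ over a family of $l=1+ab\log N$ primes and argue that each pair $(x,y)\in A\times B$ rules out fewer than $\log N$ primes. The difference is \emph{which} primes. The paper simply takes the first $l$ primes $2,3,5,\dots,p_l$, whereas you restrict to a dyadic window $(P,2P]$ with $P\approx 100\,ab\log^2 N$. Your choice buys a uniform alphabet size $q=2P$ and a clean support bound $\lfloor N/(2P)\rfloor$, at the cost of the extra step you flag as the main obstacle---checking via a Rosser--Schoenfeld-type estimate that the window actually contains $l$ primes. The paper's choice sidesteps that entirely (there are always $l$ primes to take) but must instead bound the largest one by $p_l=O(l\log l)=O(ab\log^2 N)$ to control $q$ and $s$. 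Both routes work; just note that your closing remark about ``taking the constant in $P$ large enough'' would, strictly, force enlarging the hypothesis constant $200$ as well, so you are proving the claim with a possibly different absolute constant---harmless for how the claim is used, but worth stating. On the plus side, your argument is more self-contained: you prove the Hit property directly rather than citing \cite{s_deterministic_2023}, and your sieve is a cleaner way to list the primes than the paper's per-number primality tests.
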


\begin{proof}
We first state the construction of the HM-hash family from Lemma 16 of \cite{s_deterministic_2023} and then prove the bounds on parameters $l,q,s$ and running time.
Given $N, a, b$ as inputs, let $\mathcal{P}$ be the set of first $1+ab\log N$ prime numbers. The hash family $\mathcal{H}$ is defined as follows.
\[\mathcal{H} \coloneqq \{h_p: [N] \mapsto [q]\text{ with } h_p(x) = x (\mod p) \mid p\in \mathcal{P}\}.\]

Note that $p\leq q$ for all $p\in \mathcal{P}$.
From Lemma 16 of \cite{s_deterministic_2023}, we have the correctness guarantee that it hits and misses every pair of disjoint subsets $A, B$ of size $a, b$ respectively. It is also clear from above that $|\mathcal{H}| = l = 1+ab\log N$. Since the value of $(1+ab\log N)^{th}$ prime is at most $(1+ab\log N)\log{(1+ab\log N)} = O(ab\log^2 N)$, $q$ is bounded by $O(ab\log^2 N)$.

From \cite{h_primality_2002}, we know that it takes $\ot(\log^6 n)$ time to verify whether a number $n$ is prime. So to construct the set $\mathcal{P}$ it takes $O(ab\log^2 N)\times \ot(\log^6(O(ab\log^2 N))) = O(ab\log^2 N \log^6 a)$. Computing the hash value matrix takes $l\times N=O(abN\log N)$. Hence the total time to construct the hash family $\mathcal{H}$ is at most $O(abN\log N)$.

It remains to prove that the minimum support size $s$ is at least $N/O(ab\log^2 N)$. For every $h_p\in \mathcal{H}$ and $j\in [q]$, if $j \geq p$, then $h^{-1}(j)$ is empty. Otherwise, when $ j < p$, we have from the definition of $h_p$ that  $|h_p^{-1}(j)| \ge \lfloor{N/p}\rfloor \geq N/O(ab\log^2 N)$ as $p\leq 100ab\log^2 N$.
\end{proof}

From the guarantees given in \Cref{primeshfwiths}, we are now ready to construct our set family and prove \Cref{splitters}.

\begin{proof}[Proof of \Cref{splitters}]
Given $t,\beta$ as inputs, let $N=t, a = \beta-1, b=1$. Since $\beta \geq 2$ we have $b\leq a$. We can assume $t\geq 200\beta\log^2 t$. Otherwise, we can construct the terminal family $\mathcal{T}$ by choosing all possible pairs of the set $[t] = T$. Since there is at least one terminal in $L$ and $R$, one of the pairs is isolated by $(T,\beta)$-unbalanced cut $(L, S, R)$. The size of the family is at most $O(\beta^2\log^4 t)$ and the time taken to construct is at most $O(\beta^2\log^4 t)\leq O(\beta t\log^2 t)$.
\newline Since $b\leq a$ and $N\geq 200ab\log^2 N$, we can apply \Cref{primeshfwiths} and construct a HM hash family \[\mathcal{H} = [t, \beta-1,1,O(\beta \log t),\frac{t}{O(\beta\log^2 t)}]_{O(\beta\log^2 t)}.\] We construct the terminal set family $\mathcal{T}$ from $\mathcal{H}$ as follows.
\[\mathcal{T} = \{ h_p^{-1}(j) \mid h_p\in \mathcal{H}, j\in [p] \}.\]

The size of the terminal set family $\mathcal{T}$ is at most $|\mathcal{H}|\times q \leq \beta \log t \times \beta\log^2 t = \beta^2 \log^3 t$. For each hash function  $h_p \in \mathcal{H}$, 
we can construct all terminal sets $\{ h_p^{-1}(j) \mid j \in [p]\} \subseteq \mathcal{T}$ in one go by evaluating $h_p(x)$ over all $x \in [t]$. So it takes $t\times |\mathcal{H}| \leq O(\beta t \log t)$ time for the construction of $\mathcal{T}$. It remains to prove that for every $(T,\beta)$-unbalanced cut $(L,S,R)$, there exists a terminal set $T'\in \mathcal{T}$ such that $(L,S,R)$ isolates $T'$.

Let $(L,S,R)$ be a $(T,\beta)$-unbalanced cut with $|T\cap (L\cup S)|<\beta$.
Let $B=\{v\}$ where $v\in T\cap L$ is any arbitrary terminal and $A=(T \cap (L\cup S))\setminus \{v\}$. $A$ and $B$ are disjoint subsets of $T$ with $|B|=1$ and $|A|< \beta-1$. Based on the \Cref{hmhf}, there exists a hash function $h\in \mathcal{H}$ that hashes $v$ to a different value from all other terminals in $A$.
Let $j = h(v)$. Then, there exists a set $T^*=h^{-1}(j) \in \mathcal{T}$ that contains $v$ and does not contain any other terminal in $L\cup S$. From \Cref{primeshfwiths}, we have the guarantee that every non-empty set in $\mathcal{T}$ has size at least $s = t/100\beta\log^2 t \geq 2$ as $t\geq 200\beta\log^2 t$. Hence, we can conclude that $T^*\cap R \neq \emptyset$ and $(L,S,R)$ isolates $T^*$.
\end{proof}

\subsection{Terminal Cut Recursion Lemma}\label{sec:terminalRecursion}
In this section, we define the key ideas that enable us to use the divide and conquer technique on graphs while preserving terminal vertex mincut. We first define in \Cref{def:left-right graphs} the notion of $k$-\textit{left} and $k$-\textit{right} graphs for a given cut parameter $k$ with respect to a vertex cut $(L, S, R)$ of $G$. We then prove in \Cref{lem:steiner leq k recursively} that the divide step preserves the terminal mincut of size less than $k$ and that the terminal mincut is recoverable from the smaller graphs in \Cref{lem:sep le k is sep in G}.

\begin{definition} \label{def:left-right graphs}
    Let $G = (V, E)$ be a graph with terminal set $T \subseteq V$. Let $(L, S, R)$ be a vertex cut in $G$. We define the $k$-\textit{left} graph $G_L$ (w.r.t $(L, S, R)$) of $G$ as the graph $G$ after replacing $R$ with a clique $K_R$  of size $k$ (selecting an arbitrary set of $k$ vertices that contains one terminal denoted as $t_R$), followed by adding biclique edges between $S$ and $K_R$. Optionally, we remove all edges inside $S$, see \Cref{fig:leftrightgraphs} for illustration. The definition of the $k$-right graph $G_R$ is similar where $K_L$ is a clique replacing $L$ with $t_L$ as a terminal in $K_L$.
\end{definition}

\begin{figure}[!ht]
\centering
\includegraphics[scale=0.35]{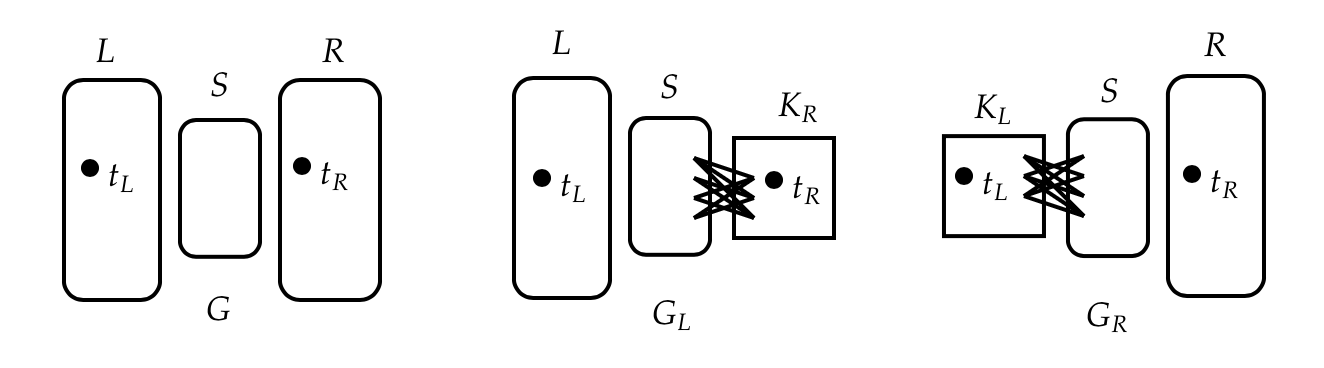}
\caption{$G_L, G_R$ are $k$-left, $k$-right graphs of $G$ w.r.t cut $(L, S, R)$.}
\label{fig:leftrightgraphs}
\end{figure}

Let $G$ be a graph with terminal set $T$. Let $(L,S,R)$ be a $T$-Steiner cut, and let $T_L := T \cap L$, $T_R := T \cap R$. Let $G_L$ be the $k$-left graph w.r.t. $(L,S,R)$ and denote $t_R$ to be an arbitrary terminal in the clique. We also define $G_R$ and $t_L$ symmetrically.

\begin{lemma} \label{lem:steiner leq k recursively}
If $\kappa_G(T) < k$, then $\min\{ \kappa_G(S), \kappa_{G_L}(T_L \cup  \{t_R\}), \kappa_{G_R}(T_R \cup \{t_L\})\} < k$.  
\end{lemma}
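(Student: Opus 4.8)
The plan is to take a vertex mincut $(L^\*, S^\*, R^\*)$ of $G$ witnessing $\kappa_G(T) < k$ (so $|S^\*| < k$, and both $T \cap L^\*$ and $T \cap R^\*$ are nonempty), and argue that it, or a related cut, survives in one of the three graphs $G$, $G_L$, $G_R$. First I would consider whether $S^\*$ already separates the original cut $(L,S,R)$ — more precisely, I would split into cases according to how the terminal pair separated by $S^\*$ sits relative to the partition $(L,S,R)$. Pick terminals $a, b \in T$ with $a$ on one side of $S^\*$ and $b$ on the other. If there is such a pair with $a, b$ both lying in $L \cup S$ (after accounting for the fact that $S^\*$ may intersect $S$), then I claim $S^\*$ restricted appropriately is still a separator in $G_L$ of the terminal set $T_L \cup \{t_R\}$: the point is that in $G_L$ the part $R$ has been collapsed to a clique $K_R$ attached bicliquely to $S$, so every vertex of $K_R$ behaves like a vertex of $R \cup (\text{neighbors in } S)$, and any $G$-path between the two sides that went through $R$ can be rerouted/blocked by the same $S^\*$ together with at most the relevant vertices of $S$. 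Symmetrically, if the separated pair lies in $R \cup S$, it survives in $G_R$. The genuinely new case is when the only separated pairs straddle $L$ and $R$: then I would argue $S \subseteq S^\*$ is forced (or rather that $|S| \le |S^\*| < k$ directly, since $S$ itself is then a $T$-Steiner separator because $a \in L$, $b \in R$ are separated by $S$), giving $\kappa_G(S) < k$.

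The cleaner way to organize this, which I would actually carry out, is: let $(L^\*, S^\*, R^\*)$ be the witnessing mincut. Define $T_L = T \cap L$, $T_R = T \cap R$. Case 1: $S$ is itself a $T$-Steiner separator, i.e. there exist $a \in T_L$, $b \in T_R$ — this always holds when $(L,S,R)$ is $T$-Steiner — wait, that would make $\kappa_G(S) \le |S|$ trivially, so the real content is that we do \emph{not} know $|S| < k$. So instead: Case 1 is "$S^\*$ separates two terminals both of which lie in $L \cup S$"; Case 2 is "$\dots$ both in $R \cup S$"; Case 3 is "every terminal pair separated by $S^\*$ has one endpoint in $L$ and the other in $R$ (modulo $S$)". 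In Cases 1 and 2 I transfer $S^\*$ to $G_L$ resp.\ $G_R$ — the transferred separator is $(S^\* \cap (L \cup S)) \cup (S^\* \cap S)$ possibly unioned with a bounded correction, and I must check its size stays $< k$ and that $t_R$ (resp.\ $t_L$) ends up on the correct side; since $|K_R| = k$ and $|S^\*| < k$, the clique $K_R$ cannot be fully contained in $S^\*$, so some clique vertex survives on one side, and because $K_R$ is a clique all surviving clique vertices lie on the same side — this is exactly why we make $K_R$ a clique of size $k$. In Case 3, $S$ separates $a \in L$ from $b \in R$ with $a,b \in T$, and $|S^\*|$... no — in Case 3 I need $|S| \le |S^\*|$. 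That follows because $S^\* \cap (L \cup S)$ and $S^\* \cap (R \cup S)$ together... Actually the right statement: in Case 3, $S^\*$ must contain a separator of $L$ from $R$ inside $G$, hence $|S^\*| \ge |S|$ if $(L,S,R)$ is a minimum $L$–$R$ cut; but $(L,S,R)$ need not be minimum. So Case 3 should instead directly exhibit $S$ as a small $T$-Steiner separator: we need to know $|S| < k$, which is \emph{not} assumed. Hmm — so Case 3 cannot happen, or it reduces to Cases 1–2. I think the correct resolution is that Case 3 is vacuous: if $a \in T_L$ and $b \in T_R$ are separated by $S^\*$, then in particular consider any third terminal or the structure of $L^\* \cap L$ etc.; one shows by a submodularity / uncrossing argument on $S^\*$ and $S$ that WLOG $S^\*$ can be chosen so that it separates a pair inside $L\cup S$ or inside $R \cup S$, landing us in Case 1 or 2.

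So the key steps in order: (1) fix the witnessing mincut $(L^\*,S^\*,R^\*)$ and a separated terminal pair; (2) uncross $S^\*$ with $S$ using posimodularity/submodularity of the vertex-cut function to reduce to the situation where the separated terminal pair does not straddle $(L,R)$; (3) in that situation, transfer $S^\*$ into $G_L$ or $G_R$, using that $|K_R| = k > |S^\*|$ to guarantee a clique vertex (in particular $t_R$ can be taken to be) on the far side, and that the biclique $S$–$K_R$ faithfully simulates all $G$-connectivity through $R$, so connectivity between $T_L \cup \{t_R\}$ terminals is preserved or decreased; (4) conclude $\kappa_{G_L}(T_L \cup \{t_R\}) \le |S^\*| < k$ (resp.\ for $G_R$), or in the degenerate remaining case $\kappa_G(S) \le |S^\*| < k$. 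The main obstacle I anticipate is step (2)/(3): making the uncrossing argument clean in the \emph{vertex}-cut setting (where one must be careful that the "smaller side" could lose all its terminals, which is why the definition keeps $t_R \in K_R$ as a designated terminal and why $h_T$ is defined with $L \cup S$ and $R \cup S$ rather than $L$ and $R$), and verifying that after the $R \to K_R$ replacement no \emph{new} connections are created that would let a previously separated terminal pair reconnect — i.e.\ that the biclique-plus-clique gadget is a connectivity \emph{over}-approximation of $R$ relative to $S$, so a separator in $G$ remains a separator in $G_L$.
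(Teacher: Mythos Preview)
Your proposal has the right intuition but a genuine gap in its organization and in one key step.

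\textbf{Wrong case split.} You organize by where the separated terminal pair lies relative to $(L,S,R)$, and then get stuck in Case~3. The paper instead first disposes of $|S|<k$ (then $(L,S,R)$ itself witnesses $\kappa_{G_L}(T_L\cup\{t_R\})<k$), and otherwise splits on whether $\kappa_G(S)<k$. If $\kappa_G(S)<k$ we are done. If $\kappa_G(S)\ge k$, then the min $T$-Steiner cut $(L',S',R')$ cannot separate any two vertices of $S$, hence $S\subseteq L'\cup S'$ or $S\subseteq S'\cup R'$; say $S\cap R'=\emptyset$. This single observation replaces your Case~3 entirely and makes the ``uncrossing'' you were reaching for unnecessary as a separate step.

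\textbf{The missing step.} After $S\cap R'=\emptyset$, pick a terminal $t\in T\cap R'$; it lies in $L$ or in $R$, say $t\in L\cap R'$. The crucial move you do not have is to show $S'\cap R=\emptyset$. The paper does this by a short minimality argument: $Z:=N_G(L\cap R')\subseteq S'\cap(L\cup S)$ is itself a $T$-Steiner separator (it separates $t$ from $T\cap L'$), and since $S'$ is a \emph{minimum} $T$-Steiner separator, $|S'|=|Z|$, forcing $S'\cap R=\emptyset$. Without this, your ``transferred separator $(S^*\cap(L\cup S))\cup\cdots$ possibly unioned with a bounded correction'' is not well-defined, and there is no general bound on the correction.

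\textbf{Direction of the gadget argument.} Your last paragraph claims the clique-plus-biclique gadget is a connectivity \emph{over}-approximation of $R$, ``so a separator in $G$ remains a separator in $G_L$.'' That implication goes the wrong way: adding edges can only destroy separators. It is Lemma~\ref{lem:sep le k is sep in G} that goes from $G_L$ to $G$, not the reverse. The transfer of $S'$ into $G_L$ works only because we first established both $R'\cap S=\emptyset$ and $S'\cap R=\emptyset$, so replacing $R$ by $K_R$ (attached to $S$) touches only the $L'\cup S'$ side of the cut and cannot reconnect $t$ to $t_R$.
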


\begin{proof}
If $|S| < k$, then $\kappa_{G_L}(T_L \cup  \{t_R\}), \kappa_{G_R}(T_R \cup \{t_L\}) < k$ follows by \Cref{def:left-right graphs} and the fact that $(L,S,R)$ is a $T$-Steiner cut as shown in the left side of \Cref{fig:terminalrecursion1}. Now we assume $|S| \geq k$. Since $\kappa_G(T) < k$, there is a min Steiner cut $(L',S',R')$ of size less than $k$. If $\kappa_G(S) < k$, then we are done. Now assume $\kappa_G(S) \geq k$. Thus, $S'$ does not separate $x,y$ for all $x,y \in S$. That is, $S \subseteq L' \cup S'$ or $S \subseteq S' \cup R'$, i.e., $S \cap R' = \emptyset$ or $S \cap L' = \emptyset$. WLOG, we assume $S \cap R' = \emptyset$. Since $S'$ is a $T$-Steiner cut, $R' \cap T \neq \emptyset$, and so there must be a terminal vertex from $T$ in $L \cap R'$ or in $R \cap R'$. Let $t \in T$ be a terminal vertex in $L \cap R'$ (otherwise, the argument is  symmetric).

\begin{figure}[!ht]
    \centering
    \includegraphics[scale=0.9]{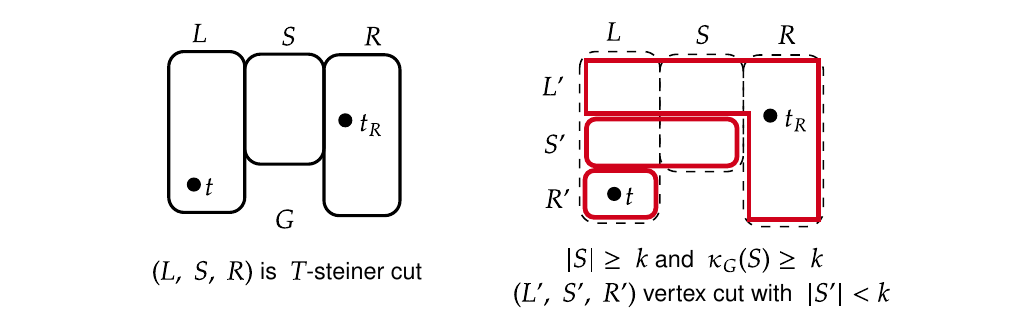}
    \caption{$S'$ separates $L\cap R'$ from the rest of the graph.}
    \label{fig:terminalrecursion1}
\end{figure}

Next, we prove that $S' \subseteq L \cup S$ (i.e., $S' \cap R = \emptyset$) as shown in the right side of \Cref{fig:terminalrecursion1}. Let $Z := N_G( L \cap R')$. Since $T \cap L' \neq \emptyset$ and $t \in L \cap R'$, $Z$ is a $T$-Steiner cut. Since  $S \cap R' = \emptyset$ and there is no edge from $L \cap R'$ to $R$ as there are no edges from $L$ to $R$, we have $Z =  S' \cap (L \cup S) \subseteq S'$.
Therefore, $|S'| = |S' \cap (L \cup S)| + |S' \cap R| = |Z| + |S'\cap R|$. Observe that $|S'| = |Z|$ since $Z$ is a $T$-Steiner cut where $Z \subseteq S'$ and $S'$ is a min $T$-Steiner cut. So,  $|S' \cap R| = 0$, and thus $S' \cap R = \emptyset$.

\begin{figure}[!ht]
    \centering
    \includegraphics[scale=0.35]{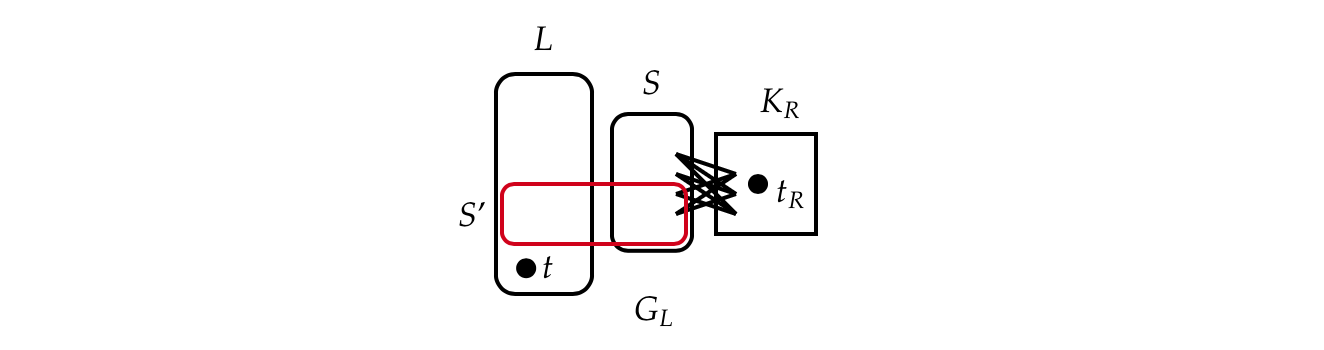}
    \caption{$S'$ is a $T_L\cup \{t_R\}$-separator in $G_L$}
    \label{fig:terminalrecursion2}
\end{figure}

Finally, we argue that $S'$ is a $T_L \cup \{t_R\}$-Steiner cut in $G_L$, refer \Cref{fig:terminalrecursion2}. Since $R' \cap S = \emptyset$ and $S' \cap R = \emptyset$, $S'$ remains a separator in $G_L$. It remains to prove that $S'$ separates $t$ and $t_R$ in $G_L$. Observe that $S'$ separates $t$ and $y$ in $G$ for all $y \in R$. In particular, $S'$ separates $t$ and $t_R$ in $G$. Therefore, $S'$ separates $t$ and $t_R$ in $G_L$. %
\end{proof}

\begin{lemma} \label{lem:sep le k is sep in G}
A separator of size less than $k$ in $G_L$ or $G_R$ is also a separator in $G$. Furthermore, for all $x,y \in V(G_L)$ (or in $V(G_R)$), a min $(x,y)$-separator of size less than $k$ in $G_L$ (or $G_R$)  is also an $(x,y)$-separator in $G$. %

\end{lemma}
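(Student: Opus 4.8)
The plan is to handle the two claims separately but with the same underlying observation: the only "new" connectivity introduced in $G_L$ relative to $G$ lives inside the clique $K_R$ together with the biclique edges to $S$, and a separator of size less than $k$ cannot contain all of $K_R$ (since $|K_R| = k$). First I would prove the first sentence. Let $S''$ be a separator in $G_L$ with $|S''| < k$; so $G_L - S''$ is disconnected. Since $|K_R| = k > |S''|$, at least one vertex of $K_R$ survives in $G_L - S''$, and because $K_R$ is a clique with a full biclique to $S$, all of $K_R$ and all of $S \setminus S''$ lie in a single connected component of $G_L - S''$; call it the "$R$-side" component. Now compare with $G$: the edges of $G$ not present in $G_L$ are exactly the edges incident to $R$ (replaced by $K_R$), and every vertex of $L \cup S$ has the same neighborhood within $L \cup S$ in both graphs. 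Hence $G - S''$ restricted to $(L \cup S) \setminus S''$ has exactly the same components as $G_L - S''$ restricted to that set, except that the $R$-side component of $G_L - S''$ may merge with (parts of) $R$ in $G$. The key point is that this merging cannot reconnect the rest: any component $C$ of $G_L - S''$ that is disjoint from $K_R$ is entirely contained in $(L \cup S)\setminus S''$ and has no edge to $R$ in $G$ either (edges from $L$ to $R$ do not exist, and its vertices in $S$ are non-adjacent to $R$ only if... ) — here I need to be slightly careful: a vertex of $S \setminus S''$ is adjacent to $R$ in $G$, but such a vertex is in the $R$-side component of $G_L - S''$, not in $C$. So $C$ stays a union of components in $G - S''$, and since there is at least one such $C$ (disconnectedness of $G_L - S''$, with the $R$-side being one component and at least one more existing), $G - S''$ is disconnected too.

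Next I would prove the second sentence, which is the one I expect to require the most care. Fix $x, y \in V(G_L)$ and let $C$ be a minimum $(x,y)$-separator in $G_L$ with $|C| < k$; note $x, y \notin C$ and there is no $x$–$y$ path in $G_L - C$. I want to show $C$ is an $(x,y)$-separator in $G$. First, I must argue $x, y \in V(G)$: the only vertices of $V(G_L)$ not in $V(G)$ are the $k-1$ newly-added clique vertices of $K_R$ (one vertex, $t_R$, is reused from $R$), so $x$ or $y$ could a priori be such a new vertex. If, say, $x \in K_R \setminus \{t_R\}$, then since $|C| < k = |K_R|$, some vertex $w \in K_R \setminus C$ remains, and $w$ is adjacent to $x$ unless $x = w$; by choosing $w \ne x$ when possible, or just noting $x \notin C$ so $x$ itself survives and is clique-adjacent to all of $K_R \setminus C$, we see $x$ and $t_R$ lie in the same component of $G_L - C$, so WLOG we may replace $x$ by $t_R$; similarly for $y$. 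Thus WLOG $x, y \in V(G) \cap V(G_L)$. Now suppose for contradiction there is an $x$–$y$ path $P$ in $G - C$. Decompose $P$ into maximal subpaths lying in $(L \cup S) \setminus C$ and subpaths entering $R$; each maximal $R$-subpath of $P$ enters and leaves $R$ through vertices of $S \setminus C$ (since there are no $L$–$R$ edges), and any two vertices of $S \setminus C$ that are endpoints of such an $R$-detour are connected in $G_L - C$ via $K_R$ (each is biclique-adjacent to all of $K_R$, and $K_R \setminus C \ne \emptyset$). Replacing each $R$-detour of $P$ by a two-edge path through a surviving vertex of $K_R$ yields an $x$–$y$ walk in $G_L - C$, contradicting that $C$ separates $x$ from $y$ in $G_L$.

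The main obstacle, as flagged above, is the bookkeeping around vertices of $S$: they are adjacent to $R$ in $G$ but also fully connected to $K_R$ in $G_L$, so I must consistently treat "$S \setminus C$ together with $K_R \setminus C$ and whatever $R$ hangs off them" as a single blob in both graphs and verify that no path in $G - C$ does anything a corresponding path in $G_L - C$ cannot do, and vice versa. The clean way to organize this is the substitution argument in the previous paragraph: every $R$-segment of a $G$-path is simulated by a $K_R$-segment of a $G_L$-path, using only that $|C| < |K_R|$. I would also remark that the "optionally remove edges inside $S$" clause in \Cref{def:left-right graphs} is harmless here, since removing edges only makes separation easier in $G_L$ and those edges are present in $G$; more precisely, the statement as phrased ("a separator of size less than $k$ in $G_L$ ... is also a separator in $G$") goes in the direction where fewer edges in $G_L$ only helps, so the argument above should be run with whichever version of $G_L$ the algorithm uses, and I would note the asymmetry explicitly to avoid confusion.
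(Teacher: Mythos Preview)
Your argument is correct, but it takes a different route from the paper. The paper works directly with the cut partition: given a cut $(L',S',R')$ of size $<k$ in $G_L$, it observes that some $x\in K_R$ lies in $L'$ (say), hence all of $K_R\cup S$ lies in $L'\cup S'$; it then shifts $K_R$ into $L'$ to get $(L'',S'',R')$ with $S''=S'\setminus K_R$, and finally replaces $K_R$ by $R$, checking that the only edges added (those incident to $R$, and possibly those inside $S$) land in $L''\cup S''$, so $(L''\cup R\setminus K_R,\,S'',\,R')$ is a cut in $G$. Your approach instead analyzes components and lifts paths: you isolate a component $C\subseteq L\setminus S''$ of $G_L-S''$ and show $N_G(C)\subseteq S''$, and for the second part you simulate any $x$--$y$ path in $G-C$ by replacing each $R$-segment (and, if needed, each missing $S$--$S$ edge) with a two-step detour through $K_R\setminus C$. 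The paper's version is a bit slicker and yields the slightly stronger fact that a subset of the separator already works; your version is more explicit about the second statement, which the paper dismisses as ``similar.''

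Two minor clean-ups. First, by \Cref{def:left-right graphs} the clique $K_R$ is a set of $k$ vertices \emph{selected from} $R$ (one of them being the terminal $t_R\in R\cap T$), so $V(G_L)\subseteq V(G)$ and your discussion of ``newly-added clique vertices'' is unnecessary. Second, your closing remark has the monotonicity backwards: removing the $S$--$S$ edges from $G_L$ makes \emph{more} sets qualify as separators in $G_L$, so the implication ``separator in $G_L$ $\Rightarrow$ separator in $G$'' becomes \emph{harder}, not easier. This does not damage your proof, because your path-simulation never uses $S$--$S$ edges of $G_L$ (any such edge of $P$ is replaced by a detour through $K_R\setminus C$), so the argument already handles the edge-removed version; just rephrase the remark accordingly.
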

\begin{proof}
We prove for $G_L$. The proof for $G_R$ is similar. Let $(L', S, R')$ be a vertex cut  smaller than $k$ in $G_L$. We prove that there is $S'' \subseteq S'$ such that $S''$ is a separator in $G$. If true, then $S'$ must also be a separator in $G$, and we are done. Since $|S'| < k$ and $K_R$ has $k$ vertices, there is a vertex $x \in K_R$ that is in $L'$ or in $R'$. WLOG, $x \in L'$. Since we have biclique edges between $K_R$ and $S$ in $G_L$ and there is no edge between $L'$ and $R'$, $S \subseteq L' \cup S'$ and $K_R \subseteq L'\cup S'$.
\begin{figure}[!ht]
    \centering
    \includegraphics[scale=0.9]{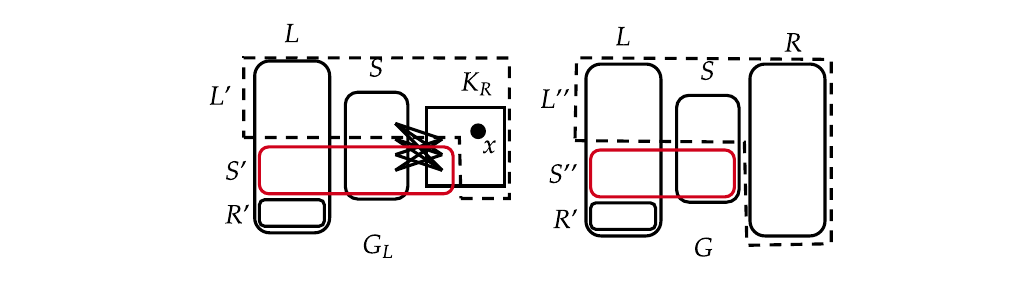}
    \caption{Moving $K_R$ from $L'$ to $L''$ followed by replacing $K_R$ with $R$}
    \label{fig:terminalrecursion3}
\end{figure}
By moving the clique to $L'$ as shown in \Cref{fig:terminalrecursion3}, $(L'',S'',R') := (L' \cup K_R, S'- K_R, R')$ is a vertex cut in $G_L$. To obtain $G$ from $G_L$, we replace $K_R$ with $R$, add edges between $R$ and $R \cup S$ according to $G$ and (possibly) add edges inside $S$. Since $K_R \subseteq L''$ and $S \subseteq L'' \cup S''$, we never add an edge from $L'' \cup R$ to $R'$. Therefore, $(L'' \cup R - K_R, S'', R')$ is a vertex cut in $G$, and so $S'' = S' - K_R \subseteq S'$ remains a separator in $G$.  The proof for the second statement is similar.
\end{proof}
\subsection{Terminal Reduction Algorithm}\label{sec:slowTerminalRedn}

We now describe our algorithm that takes as inputs $G=(V, E), T\subseteq V$ and outputs either a separator of size less than $k$ or a terminal set $T'\subset V$ of size $|T'|<|T|/2$ such that $\kappa_G(T')<k$ whenever $\kappa_G(T)<k$. We only prove the correctness of \Cref{thm:main terminal reduction} using this algorithm and defer the implementation and proof of runtime guarantee to the next section.

\paragraph{Algorithm.} Let $k,\phi,\phibar$ be the global parameters such that $\phibar>\phi$. Given a graph $G$, if it is a $(T,\phi)$-expander, we can find mincut using \Cref{thm:unbalanced} as any cut of size less than $k$ is $(T,k/\phi)$-unbalanced. If $T$ has at most $10k/\phi$ terminals, we can run $(s,t)$-max-flow on all pairs of terminals to find the mincut.

So we can assume that there are at least $10k/\phi$ many terminals, and the graph has $(T,\phi)$-sparse cut. Let $(L, S, R)$ be a $(T,\phibar)$-sparse cut. We construct $G_L$ and $G_R$ w.r.t the vertex cut $(L, S, R)$ as defined in \Cref{def:left-right graphs} such that $t_R$ and $t_L$ are the terminals present in the clique $K_R$ and $K_L$ respectively and recurse on them. For $G_L$, we consider the terminals present in $L$ and the terminal $t_R$ in $K_R$ i.e. $(T\cap L)\cup \{t_R\}$, similarly, for $G_R$. Note that we ignore the terminals of $S$ in both cases.

If either of the recursive calls returns a separator of size less than $k$, we return that separator. Otherwise, they return terminal sets $S'_L$ and $S'_R$ such that $\kappa_{G_L}(S'_L)<k$ whenever $\kappa_{G_L}(T_L\cup \{t_R\})<k$ and $\kappa_{G_R}(S'_R)<k$ whenever $\kappa_{G_R}(T_R\cup \{t_L\})<k$. From \Cref{lem:steiner leq k recursively}, we have $\min\{ \kappa_G(S), \kappa_{G_L}(T_L \cup  \{t_R\}), \kappa_{G_R}(T_R \cup \{t_L\})\} < k$ whenever $\kappa_G(T)<k$. Hence we have $\min\{ \kappa_G(S), \kappa_{G_L}(S'_L), \kappa_{G_R}(S'_R)\} <k$ whenever $\kappa_G(T)<k$ and hence we return $T'=S\cup S'_L \cup S'_R$ as the new terminal set. 

We formally describe the algorithm in \Cref{alg:terminal reduction slow} and prove its correctness in the following lemma.

 \begin{algorithm}[H]
 \DontPrintSemicolon
 \KwGlobalVar{Connectivity parameter $k$ and $0.5 > \bar\phi > \phi > 0$.} %
 \KwIn{ A graph $G = (V,E)$ and a terminal set $T \subseteq V$, and connectivity parameter $k$ and $0.5 > \bar\phi > \phi > 0$.} 
 \KwOut{Either a separator of size less than $k$ or a new terminal set $T' \subseteq V$. }
 \BlankLine
 \If{$G$ \normalfont{is} a $(T,\phi)$-vertex expander or $|T| \leq 10k/\phi$ } {
 \Return{} $\emptyset$ if $\kappa_G(T) \geq k$. Otherwise, \Return{} a separator of size less than $k$ using all pairs max-flows or using \Cref{thm:unbalanced}.
 }
 Let $(L,S,R)$ be a $(T,\bar \phi)$-sparse cut. \;
 Let $G_L$ and $G_R$ be $k$-left and $k$-right graphs of $G$
 w.r.t. $(L,S,R)$.\;
 Let $t_R$ be a terminal in the clique of $G_L$, and $t_L$ be a terminal in the clique of $G_R$.\; 
 Let $T_L = T \cap L$, and $T_R = T \cap R$. \;
 $S'_L \gets \textsc{ReduceTerminalSlow}_{k,\phi,\bar \phi}(G_L,T_L \cup \{t_R\})$ \;
 $S'_R \gets \textsc{ReduceTerminalSlow}_{k,\phi,\bar \phi}(G_R,T_R \cup \{t_L\})$ \;
 \If{$S'_L$ \normalfont{(or} $S'_R$) \normalfont{is} a separator in $G_L$ (or $G_R$) of size $<k$}{
 By \Cref{lem:sep le k is sep in G}, it must be a separator in $G$. \;
 \textbf{return} the corresponding separator.} 
 \Else{
 $S'_L$ and $S'_R$ are both new terminal sets of $G$. \;
 \Return{ $S_L' \cup S'_R \cup S.$}
 }
\caption{\textsc{ReduceTerminalSlow}$_{k,\phi,\bar \phi}(G,T)$}
\label{alg:terminal reduction slow}
\end{algorithm}

\begin{lemma} \label{lem:slow alg terminal correct}
\Cref{alg:terminal reduction slow} on $G,T$ with parameters $k,\phi, \bar \phi$ outputs either a separator of size $< k$ or a new terminal set $T' \subseteq V$ such that $\kappa(T') < k$ whenever $\kappa(T) < k$. 
\end{lemma}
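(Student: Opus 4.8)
The plan is to verify that \Cref{alg:terminal reduction slow} does what \Cref{lem:slow alg terminal correct} claims by induction on the recursion, tracking the invariant ``the output is either a size-$<k$ separator in $G$ or a terminal set $T'$ with $\kappa_G(T')<k$ whenever $\kappa_G(T)<k$.'' First I would dispose of the base cases: when $|T|\le 10k/\phi$ the algorithm runs all-pairs $(s,t)$-max-flows among the terminals, so it literally computes $\kappa_G(T)$ and returns a minimum $T$-Steiner separator if it has size $<k$, and otherwise $\emptyset$ — here the conclusion ``$\kappa(T')<k$ whenever $\kappa(T)<k$'' is satisfied because when $\kappa(T)<k$ we in fact return a separator, not a terminal set. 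When $G$ is a $(T,\phi)$-expander, any vertex cut $(L,S,R)$ with $|S|<k$ satisfies $h_T(L,S,R)\ge\phi$, hence $\min\{|T\cap(L\cup S)|,|T\cap(S\cup R)|\}\le|S|/\phi<k/\phi$, i.e.\ the cut is $(T,k/\phi)$-unbalanced; so \textsc{Unbalanced}$(G,T,k/\phi)$ from \Cref{thm:unbalanced} returns a minimum $T$-Steiner separator, and we check whether its size is $<k$. (I would note that \Cref{thm:unbalanced} is stated for $(T,\beta)$-unbalanced \emph{mincuts}, which is fine: if $\kappa_G(T)<k$, the mincut itself is unbalanced as argued.)

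Next I would handle the recursive case. Here we have $|T|>10k/\phi$ and a $(T,\bar\phi)$-sparse cut $(L,S,R)$; note this cut is $T$-Steiner since $\min\{|T\cap(L\cup S)|,|T\cap(S\cup R)|\}>0$ (the denominator of $h_T$ is positive and in fact both $T\cap L$ and $T\cap R$ are nonempty, which I should check follows from sparsity together with $|T|$ being large enough that $|S|<k$ forces terminals strictly on both sides — this is where the threshold $10k/\phi$ is used). We build $G_L,G_R$ per \Cref{def:left-right graphs} and recurse on $(G_L,T_L\cup\{t_R\})$ and $(G_R,T_R\cup\{t_L\})$. By the induction hypothesis, each recursive call returns either a size-$<k$ separator or a terminal set; if a separator of size $<k$ comes back from $G_L$ (resp.\ $G_R$), then by \Cref{lem:sep le k is sep in G} it is also a separator in $G$ of the same size, so returning it is correct regardless of $\kappa_G(T)$.

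In the remaining branch both recursive calls return terminal sets $S'_L,S'_R$ with $\kappa_{G_L}(S'_L)<k$ whenever $\kappa_{G_L}(T_L\cup\{t_R\})<k$, and symmetrically for $S'_R$. Assume $\kappa_G(T)<k$. By \Cref{lem:steiner leq k recursively}, $\min\{\kappa_G(S),\kappa_{G_L}(T_L\cup\{t_R\}),\kappa_{G_R}(T_R\cup\{t_L\})\}<k$. In the first case $\kappa_G(S)<k$, so there is an $(x,y)$-separator of size $<k$ in $G$ for some $x,y\in S$, and since $S\subseteq T'=S\cup S'_L\cup S'_R$ this witnesses $\kappa_G(T')<k$. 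In the second case $\kappa_{G_L}(T_L\cup\{t_R\})<k$, so by the induction hypothesis $\kappa_{G_L}(S'_L)<k$; I then need to argue this transfers back to $G$, i.e.\ $\kappa_G(S'_L)<k$: a min $(x,y)$-separator of size $<k$ in $G_L$ for $x,y\in S'_L$ is, by the second statement of \Cref{lem:sep le k is sep in G}, an $(x,y)$-separator in $G$ (after noting $S'_L\subseteq V(G_L)$ and, since $S'_L$ is built from previously-returned separators, $S'_L\subseteq V$), hence $\kappa_G(\{x,y\})<k$ and so $\kappa_G(T')<k$ as $S'_L\subseteq T'$; the third case is symmetric via $G_R$. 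This closes the induction. The main obstacle I anticipate is not any single step but the careful bookkeeping that $S'_L,S'_R$ remain genuine subsets of $V$ (so that the ``transfer back to $G$'' via \Cref{lem:sep le k is sep in G} is legitimate) and that the clique vertices $t_R,t_L$ never leak into the returned terminal set in a way that breaks $S'_L\subseteq V$ — this needs a small invariant maintained across the recursion, namely that \textsc{ReduceTerminalSlow} always returns a subset of the original $G$'s vertex set, which one argues by observing that $S'_L$ is a union of separators each of which \Cref{lem:sep le k is sep in G} places inside $V(G)\cap V(G_L)\subseteq V$.
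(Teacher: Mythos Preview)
Your proposal is correct and follows essentially the same route as the paper: induction on the terminal count, base cases via all-pairs max-flow and \Cref{thm:unbalanced}, and the inductive step driven by \Cref{lem:steiner leq k recursively} and \Cref{lem:sep le k is sep in G}. Two small sharpenings: the paper secures termination by proving $|T_L|,|T_R|\ge 2$ directly from $\bar\phi<\tfrac12$ (if $|T_L|\le 1$ then $h_T(L,S,R)\ge\tfrac12>\bar\phi$), not from the $10k/\phi$ threshold you invoke; and your anticipated obstacle about $S'_L\subseteq V$ is a non-issue, since by \Cref{def:left-right graphs} the clique $K_R$ is chosen as a $k$-subset of $R$, so $V(G_L)\subseteq V(G)$ holds automatically at every level of the recursion.
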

Observe that the new terminal set $T'$ size depends on the recursion depth, which we bound in the next section.%
\begin{proof}
We prove that \Cref{alg:terminal reduction slow} outputs correctly as stated on every connected graph with a terminal set by induction on the number of terminals $t$. Observe that the parameters $k,\phi,\bar \phi$ are fixed. When $t \leq 10k\phi^{-1}$, it is the base case and we handle it by computing all pairs max-flow over the terminal set $T$. Next, for all $t \geq 10k\phi^{-1}+1$, we assume that the statement holds up to $t-1$ terminals, and we now prove that the statement continues to hold for $t$ terminals. If $G$ is a $(T,\phi)$-expander, then we know that $G$ has a $(T,k/\phi)$-unbalanced cut and we can use \Cref{thm:unbalanced} in this case to detect a cut of size less than $k$. Otherwise, let $(L,S,R)$ be a $(T,\phibar)$-sparse cut. Denote $T_L = T \cap L, T_R = T \cap R, T_S = T \cap S$. We prove that $|T_L|, |T_R| \geq 2$. If true, then $G_L$ and $G_R$ will have $|T_L \cup \{t_R\}| \leq t-1$ and $|T_R \cap \{t_L\}| \leq t-1$ terminals, respectively, so that we can apply inductive hypothesis on them. Suppose $|T_L| = 1$. Then, the expansion $h(L,S,R) = |S|/(1+|T_S|)$. If $T_S = \emptyset$, then $h(L,S,R) \geq 1 > \bar\phi$. Otherwise, $h(L,S,R) = \frac{|S|}{1+|T_S|} \geq \frac{|T_S|}{1+|T_S|} \geq 1/2 > \bar \phi$. In either case, we have a contradiction.

It remains to prove that \Cref{alg:terminal reduction slow} on $G$ and terminal set $T$ outputs correctly as stated in the lemma. Let $S'_L$ and $S'_R$ be the output from $\textsc{ReduceTerminalSlow}_{k,\phi,\bar \phi}(G_L,T_L \cup \{t_R\})$ and $\textsc{ReduceTerminalSlow}_{k,\phi,\bar \phi}(G_R,T_R \cup \{t_L\})$, respectively. By inductive hypothesis, $S'_L$ is either a separator in $G_L$ of size $< k$ or a new terminal set such that $\kappa_{G_L}(S'_L) < k$ whenever $\kappa_G(T_L \cup \{t_R\}) < k$. Similarly, $S'_R$ is either a separator in $G_R$ of size $< k$ or a new terminal set such that $\kappa_{G_R}(S'_R) < k$ whenever $\kappa_G(T_R \cup \{t_L\}) < k$. If $S'_L$ or $S'_R$ is a separator for $G_L$ or $G_R$, respectively, then \Cref{lem:sep le k is sep in G} implies that one of them is a separator in $G$, and we are done. Now, assume that $S'_L$ and $S'_R$ are the new terminal sets. We prove that $\kappa_G(S'_L \cup S'_R \cup S) < k$ whenever $\kappa_G(T) < k$. Let us assume that $\kappa_G(T) < k$. We are done if $\kappa_G(S) < k$. Now, assume $\kappa_G(S) \geq k$. By \Cref{lem:steiner leq k recursively}, either $ \kappa_{G_L}(T_L \cup \{t_R\}) < k$ or $\kappa_{G_R}(T_R \cup \{t_L\}) < k$, and thus $\kappa_{G_L}(S'_L) < k$ or $\kappa_{G_R}(S'_R) <k$. By \Cref{lem:sep le k is sep in G}, $\kappa_{G}(S'_L) < k$ or $\kappa_{G}(S'_R) <k$, and we are done.%
\end{proof}
\subsection{Fast Implementation of \texorpdfstring{\Cref{alg:terminal reduction slow}}{terminal reduction}} \label{sec:terminal reduction}

We have proved the correctness of \Cref{alg:terminal reduction slow} (\textsc{ReduceTerminalSlow}) in the previous section. In this section, we use the following important lemma from \cite{long_near-optimal_2022} to implement \textsc{ReduceTerminal} following the approach in \Cref{alg:terminal reduction slow}.

\begin{lemma} [\cite{long_near-optimal_2022}] \label{lem:balancedorexpander}
Let $G=(V,E)$ be an $n$-vertex $m$-edge graph with a terminal set $T \subseteq V$. Given a parameter
$0<\phibar\le1/10$ and $1\le r\le\left\lfloor \log_{20}n\right\rfloor $,
there is a deterministic algorithm, $\textsc{TerminalBalancedCutOrExpander}(G,T, \phibar,r)$, that computes a vertex cut $(L,S,R)$
(but possibly $L=S=\emptyset$) of $G$ (where we denote $T_{L \cup S} := T \cap (L \cup S)$ and $T_{R \cup S} := T \cap (R \cup S)$) such that $|S|\le\phibar|T_{L\cup S}|$ which further satisfies 
\begin{itemize}
\item either $|T_{L\cup S}|,|T_{R\cup S}|\ge |T|/3$; or
\item $|T_{R\cup S}|\ge |T|/2$ and $G[R\cup S]$ is a $(T_{R\cup S},\phi)$-vertex expander
for some $\phi=\phibar/\log^{O(r^{5})}m$.
\end{itemize}
The running time of this algorithm is $O(m^{1+o(1)+O(1/r)}\log^{O(r^{4})}(m)/\phi)$. %
\end{lemma}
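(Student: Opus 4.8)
The plan is to derive \Cref{lem:balancedorexpander} by adapting the ``most balanced sparse cut or expander'' primitive that underlies almost-linear-time expander decomposition, except that the ``mass'' of a vertex set is measured by the number of terminals it carries (together with its share of the separator) rather than by its volume, and cuts are vertex cuts rather than edge cuts. The heart of the argument is a one-shot subroutine: given a working graph $G'$ and terminal set $T'$, I would run a vertex-capacitated cut--matching game for $O(\log|T'|)$ rounds, where each round routes, via a single unit-vertex-capacity max-flow on $G'$, a terminal-to-terminal matching across the (near-bisecting) cut proposed by the cut player. If every round succeeds, the union of the routed matchings embeds with low vertex congestion and certifies that $G'$ is a $(T',\phi_0)$-vertex expander for some $\phi_0=\Omega(\phibar/\polylog m)$; if some round fails, the min-cut of the corresponding flow exposes a vertex cut $(A,C,B)$ of $G'$ with $h_{T'}(A,C,B)<\phibar$, and, because the cut player proposes near-bisections of $T'$, the smaller terminal side of this cut contains at least a $1/\polylog m$ fraction of $T'$.

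The second ingredient is a boosting loop that upgrades the one-shot subroutine to the ``most balanced'' guarantee. Initialize $L=S=\emptyset$, $G'=G$, $T'=T$, and repeatedly call the subroutine on $(G',T')$: whenever it returns a sparse vertex cut, move its smaller terminal side into $L$, move its separator into $S$, and recurse on the larger side with the separator vertices retained (so the next working graph is $G[V(G')\setminus(\text{smaller side})]$ with $S$ shared). Stop the loop the first time either (i) $|T_{L\cup S}|\ge|T|/3$, returning the balanced cut $(L,S,R)$ with $R=V\setminus(L\cup S)$, or (ii) the subroutine certifies that the current $G'$ is an expander, returning $(L,S,R)$ with $R\cup S=V(G')$. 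The separator bound is immediate: if $(A_i,C_i,B_i)$ is the cut peeled off in iteration $i$, then $|C_i|\le\phibar\,|T\cap(A_i\cup C_i)|$, and since the left-contributions $A_i\cup C_i$ are pairwise disjoint, $|S|=\sum_i|C_i|\le\phibar\sum_i|T\cap(A_i\cup C_i)|=\phibar\,|T_{L\cup S}|$. Since each peel removes at least a $1/\polylog m$ fraction of the surviving terminals, the loop runs for only $\polylog m$ iterations, and a short bookkeeping argument with the threshold $|T|/3$ shows that in case (i) both $|T_{L\cup S}|$ and $|T_{R\cup S}|$ are $\ge|T|/3$, while in case (ii) the survivors satisfy $|T_{R\cup S}|\ge 2|T|/3\ge|T|/2$ and $G[R\cup S]=G'$ is the certified $(T_{R\cup S},\phi)$-vertex expander. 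Pulling the certificate and the sparse cuts back through the shared-separator working graphs is routine, since no edge ever crosses a peeled separator.

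The third ingredient is the running-time speedup controlled by $r$. Routing the cut--matching matchings directly in $G'$ each round would already give $m^{1+o(1)}\polylog m$ time, but to reach the claimed $m^{1+o(1)+O(1/r)}\log^{O(r^4)}m/\phi$ one instead embeds each round's demand into a hierarchy of $r$ nested vertex expanders built during the recursion, so that every max-flow is run on a near-linear-size graph; the price is that the guaranteed expansion degrades by one $\polylog m$ factor per level, accumulating to $\phi=\phibar/\log^{O(r^5)}m$, while the $r$-level recursion contributes the $\log^{O(r^4)}m$ and $m^{O(1/r)}$ factors. I expect this last step to be the main obstacle: pushing the \emph{vertex}-capacitated (rather than edge-capacitated) cut--matching game through an $r$-level recursive embedding, bounding the total size of all max-flow instances it spawns, and tracking exactly how vertex expansion decays through the hierarchy is where essentially all the difficulty --- and all the exponents in $r$ --- live; everything else is the standard trimming-and-charging bookkeeping above.
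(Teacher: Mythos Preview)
The paper does not prove \Cref{lem:balancedorexpander} at all: it is imported verbatim from \cite{long_near-optimal_2022} and used as a black box in \Cref{alg:fast expander or terminal family}. There is therefore no ``paper's own proof'' to compare against; the authors simply cite the result and invoke it.

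Your sketch is a reasonable high-level outline of how such balanced-cut-or-expander primitives are built in the expander-decomposition literature (cut--matching game, peel-and-recurse boosting, $r$-level recursive embedding for the speedup), and it correctly identifies where the technical weight lies. But since the present paper treats the lemma as prior work, a proof here is neither expected nor needed; if you want to verify the statement you should consult \cite{long_near-optimal_2022} directly rather than reconstruct it.
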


\paragraph{Algorithm.} $\textsc{ReduceTerminal}_{k,\phi}$ takes $G=(V, E)$ and $T\subseteq V$ as inputs and outputs either a separator of size less than $k$ or a new terminal set of a smaller size where $k,\phi$ are the global parameters.

If the graph has a small number of terminals, say at most $10k/\phi$, then we run max-flow on all pairs of the terminals, and we either return the mincut or return an empty terminal set denoting that $\kappa_G(T)\geq k$. Let $r,\phibar$ be parameters as defined in the \Cref{alg:fast expander or terminal family} which are given as inputs to \textsc{TerminalBalancedCutOrExpander} along with $G, T$ that outputs a cut $(L, S, R)$.

From the \Cref{lem:balancedorexpander} guarantee, the graph $G$ is a $(T,\phi)$-expander if $L=S=\emptyset$. Based on the definition of the expander, there are at most $\Theta(k/\phi)$ number of terminals on the smaller side of the mincut; hence it is $(T, \beta)$-unbalanced for $\beta=\Theta(k/\phi)$ and we use \Cref{thm:unbalanced} (\textsc{Unbalanced}) to solve for a cut of size less than $k$.

Hence, $L\neq S\neq \emptyset$ and we get a $(T,\phibar)$-sparse cut by \Cref{lem:balancedorexpander}. If $|S|<k$, we return $S$ as the separator. Otherwise, we have the two cases mentioned in the \Cref{lem:balancedorexpander}. In both cases, we must recurse on the left side of the cut. Hence, we construct the $k$-left graph $G_L$ as defined in \Cref{def:left-right graphs} and apply recursively $\textsc{ReduceTerminal}_{k,\phi}$ with $G_L,( T\cap L)\cup\{t_R\}$ as inputs where $t_R$ is any arbitrary terminal in $R$. If the output $S'_L$ is a separator, then we return it.

Otherwise, we have to recurse on the right side, so we construct $G_R$ similarly. If we are in the first case of \Cref{lem:balancedorexpander}, then we simply call $\textsc{ReduceTerminal}_{k,\phi}$ with the terminal set $(T\cap R)\cup \{t_L\}$, where $t_L$ is arbitrary terminal in $R$. If the output $S'_R$ is a separator, then we return it. Otherwise, $S'_R$ is the terminal set that we need to recurse.

If we are in the second case of \Cref{lem:balancedorexpander}, then we have the guarantee that $G[S \cup R]$ is a $(T_{R \cup S},2\phi)$-vertex expander. We prove that $k$-right graph (w.r.t. $(L,S,R)$) $G_R$ is still an $(T_{R \cup S} \cup \{t_L\},\phi)$-vertex expander. Note in this case we do not remove the edges inside $S$ while constructing $G_R$.

\begin{claim} \label{claim:GR still a expander}
 If $G[S \cup R]$ is a $(T_{R \cup S},2\phi)$-vertex expander, then the $k$-right
 graph $G_R$ is a $(T_{R \cup S} \cup \{t_L\},\phi)$-vertex expander. %
\end{claim}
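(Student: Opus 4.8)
The plan is to take an arbitrary vertex cut $(A,C,B)$ of $G_R$ that is \emph{relevant} to the expander definition — that is, $\min\{|T'\cap(A\cup C)|,|T'\cap(B\cup C)|\}>0$ where we write $T':=T_{R\cup S}\cup\{t_L\}$ — and show $|C|\ge\phi\cdot\min\{|T'\cap(A\cup C)|,|T'\cap(B\cup C)|\}$. The entire argument is a transfer of the $(T_{R\cup S},2\phi)$-expansion of $G[S\cup R]$ (which is an induced subgraph of $G_R$, since in this case we do not delete the edges inside $S$) to $G_R$; the factor-$2$ slack between $2\phi$ and $\phi$ is precisely what we spend to account for the single extra terminal $t_L$ sitting inside the clique $K_L$.

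First I would locate $K_L$. Since $K_L$ is a clique on $k$ vertices and there is no edge between $A$ and $B$, $K_L$ cannot meet both $A$ and $B$, so either $K_L\subseteq C$, or $K_L\cap A\neq\emptyset$ (the case $K_L\cap B\neq\emptyset$ being identical after swapping $A$ and $B$). Next, pass to the induced subgraph $G[S\cup R]$: put $A'=A\cap(S\cup R)$, $B'=B\cap(S\cup R)$, $C'=C\cap(S\cup R)=C\setminus K_L$. Then $(A',C',B')$ is a vertex cut of $G[S\cup R]$ with $|C'|\le|C|$, and since $K_L\cap T'=\{t_L\}$ and $T_{R\cup S}\subseteq S\cup R$, setting $a:=|T_{R\cup S}\cap(A\cup C)|=|T_{R\cup S}\cap(A'\cup C')|$ and $b:=|T_{R\cup S}\cap(B\cup C)|=|T_{R\cup S}\cap(B'\cup C')|$ we have $|T'\cap(A\cup C)|\le a+1$ and $|T'\cap(B\cup C)|\le b+1$.

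The case analysis then runs as follows. If $a\ge1$ and $b\ge1$, the cut $(A',C',B')$ is relevant for $G[S\cup R]$, so $|C|\ge|C'|\ge 2\phi\min\{a,b\}$; using $\min\{a,b\}\ge1$ this is at least $\phi(\min\{a,b\}+1)=\phi\min\{a+1,b+1\}\ge\phi\min\{|T'\cap(A\cup C)|,|T'\cap(B\cup C)|\}$, and we are done, regardless of where $K_L$ and $t_L$ land. Otherwise, say $a=0$ (the case $b=0$ is symmetric); then $|T'\cap(A\cup C)|$ is $0$ or $1$, and relevance forces it to be $1$, so $t_L\in A\cup C$ and it suffices to prove $|C|\ge1$ (as $\phi<1$). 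This is immediate if $K_L\subseteq C$ (then $|C|\ge k$) or if $t_L\in C$. The remaining possibility is $t_L\in A$ with $C=\emptyset$: then $(A,\emptyset,B)$ is a vertex cut of $G_R$ with $t_L\in A$ and $T_{R\cup S}\subseteq B\neq\emptyset$, i.e.\ $G_R$ is disconnected — but this contradicts the connectivity of $G_R$, which holds because $G$ is connected and $S\neq\emptyset$ (a shortest path from any $v\in R$ to $L$ stays inside $R$ until it first hits $S$, so every vertex of $R$ reaches $S$ inside $G[S\cup R]$, and the biclique between $K_L$ and $S$ then ties everything together).

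The main obstacle is essentially bookkeeping: tracking which side of $(A,C,B)$ the isolated terminal $t_L$ and the bulk of $T_{R\cup S}$ fall on, and checking that the factor-$2$ loss between the hypothesis ($2\phi$-expansion) and the conclusion ($\phi$-expansion) absorbs the extra terminal $t_L$ in every combination. The only point that is not pure arithmetic is the degenerate subcase where a relevant cut of $G_R$ would be forced to have an empty separator; this is exactly where I invoke the connectivity of $G_R$, which is why the (standing) assumptions that $G$ is connected and that $S$ is nonempty in this branch of the algorithm are used.
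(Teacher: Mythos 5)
Your proposal is correct and is essentially the paper's argument run in the direct rather than contrapositive direction: both restrict the cut of $G_R$ to the induced subgraph $G[S\cup R]$, observe that the terminal count on each side drops by at most one (the terminal $t_L$ in $K_L$), and absorb that loss using the factor-$2$ slack via $2\min\{a,b\}\ge\min\{a,b\}+1$. Your extra case analysis for $a=0$ (invoking connectivity of $G_R$) just makes explicit the degenerate situation that the paper dismisses by asserting $|(T_{R\cup S}\cup\{t_L\})\cap(L'\cup S')|\ge 2$.
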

\begin{proof}
 Suppose there is a $(T_{R\cup S} \cup \{t_L\},\phi)$-sparse cut $(L',S',R')$ in $G_R$. By
 removing the clique in $G_R$, we obtain another cut $(L'',S'',R'')$
 in $G[S\cup R]$ where $L'' = L' -V(K_L)$ $S'' = S' - V(K_L)$ and $R'' = R -
 V(K_L)$. The expansion $h(L'',S'',R'')$ is \[ \frac{|S''|}{|T_{R\cup S} \cap (L'' \cup S'')|}
 \leq \frac{|S'|}{|T_{R\cup S} \cap (L' \cup S')|} \leq \frac{|S'|}{|(T_{R\cup S} \cup \{t_L\}) \cap (L' \cup S')|-1} < 2\phi.\] The first inequality follows since $|T_{R\cup S} \cap (L'' \cup S'')| = |T_{R\cup S} \cap (L' \cup S')|$. The last inequality follows since $(L',S',R')$ is a $(T_{R\cup S} \cup \{t_L\},\phi)$-sparse, and thus $|(T_{R\cup S} \cup \{t_L\}) \cap (L' \cup S')| \geq 2$. So, $(L'',S'',R'')$ is $(T_{R \cup S},2\phi)$-sparse in $G[S\cup R]$, a contradiction.
\end{proof}

Hence, we can just use \textsc{Unbalanced} on $G_R$ with $(T\cap (R\cup S)) \cup \{t_L\}$ as the terminal set and $\beta = \Theta(k/\phi)$. If it returns a separator, we return it. Otherwise, we guarantee no terminal cut of size less than $k$ in $G_R$; hence, we need not recurse further so $S'_R=\emptyset$. Finally, we return the union of $S'_L, S, S'_R$ as the new terminal set.

We describe the algorithm formally in \Cref{alg:fast expander or terminal family} and prove the correctness and running time guarantee in \Cref{lem:fast terminal reduction}.

\begin{algorithm}[H]
 \DontPrintSemicolon
 \KwGlobalVar{the original input graph $G^{\textrm{orig}} = (V^{\textrm{orig}},E^{\textrm{orig}})$ whose min degree is $\geq k$. The parameters $\phi = (1/\log^{(\log \log |V^{\textrm{orig}}|)^6}|E^{\textrm{orig}}|) \in (0,1/10)$ and $k$ are also global.}
 
 \KwIn{ A graph $G = (V,E)$ with terminal set $T \subseteq V$.} %
 \KwOut{A separator of size $<k$ or a new terminal set.}%
 \BlankLine
 $r \gets \log \log |V|$, $\phibar \gets 2\cdot \phi \cdot \log^{O(r^5)}|E|$\;
 \If{$|T| \leq 10k/\phi$ \label{line:Tle2k}}{ 
 Let $S'$ be a minimizer of $\min_{x,y \in T}\kappa_G(x,y)$. \;
 \Return $S'$ if $|S'| < k$, otherwise \Return an empty (terminal) set. \;
 }
 $(L,S,R) \gets \textsc{TerminalBalancedCutOrExpander}(G, \phibar,r)$\;
 \If{$L = S = \emptyset$}{ \label{line:L=S=0}
 $S' \gets \textsc{Unbalanced}(G,T,\beta)$ where $\beta = \Theta(k\phi^{-1})$ using \Cref{thm:unbalanced}. \;
 \Return $S'$ if $|S'| < k$, otherwise \Return an empty (terminal) set. \;
 }
 \lIf{$|S| < k$}{ \label{line:sparse is mincut}
 \Return{$S$}
 }
 Let $T_L = T \cap L, T_{L\cup S} = T \cap (L \cup S), T_R = T \cap R$ and $T_{R \cup S} = T \cap (R \cup S)$. \tcp*{ By \Cref{lem:balancedorexpander}, $(L,S,R)$ is $(T,\phibar)$-sparse.}
 Let $G_L$ be the $k$-left graph of $G$ w.r.t. $(L,S,R)$, and $t_R$ be a terminal in the clique of $G_L$ where we remove all edges in $E_{G_L}(S,S)$ from $G_L$. \;

 $S'_L \gets \textsc{ReduceTerminal}_{\phi,k}(G_L,T_L \cup \{t_R\})$ \;
 \If{$S'_L$ \normalfont{is} a separator} {\Return{$S'_L$} \tcp*{$|S'_R| < k$}}

 Let $G_R$ be the $k$-right graph of $G$ w.r.t. $(L,S,R)$, and $t_L$ be a terminal in the clique of $G_R$ where we remove all edges in $E_{G_R}(S,S)$ from $G_R$ if and only if $\min\{T_{L \cup S}, T_{R\cup S}\} \geq |T|/3$. \;

 \If{$\min\{|T_{L \cup S}|, |T_{R \cup S}| \} \geq |T|/3$} {
 $S'_R \gets \textsc{ReduceTerminal}_{\phi,k}(G_R,T_R \cup \{t_L\})$\;
 \If{$S'_R$ \normalfont{is} a separator}
 {\Return{$S'_R$}\tcp*{$|S'_R| < k$}}
 }\Else{
 By \Cref{claim:GR still a expander}, $G_R$ is a $(T_{R\cup S}\cup \{t_L\}, \phi)$-vertex expander.\;
 $S'_R \gets \textsc{Unbalanced}(G_R,T_{R\cup S} \cup \{t_L\},\beta) \text{ where } \beta = \Theta(k\phi^{-1})$ using \Cref{thm:unbalanced}. \label{line:unbal gr expander}\;
 \If{$S'_R$ \normalfont{is} a separator of size $<k$}
 {\Return{$S'_R$}}
 \Else{$S'_R \gets \emptyset$\;}%

 }
 \tcp*{$S'_L$ and $S'_R$ are new terminal sets.}
 \Return{$S'_L \cup S \cup S'_R$} as a new terminal set.

\caption{\textsc{ReduceTerminal}$_{\phi,k}(G,T)$}
\label{alg:fast expander or terminal family}
\end{algorithm}

\begin{lemma}\label{lem:fast terminal reduction}
 Let $G= (V,E)$ be a graph with a terminal set $T \subseteq V$ where $G$ has min-degree at least $k$ and $|V| = n, |E| = m, \phi \in (0,1/10)$. The algorithm \textsc{ReduceTerminal}$_{\phi,k}(G,T)$ (\Cref{alg:fast expander or terminal family}) either returns a separator of size $<k$ or a new terminal set $T' \subseteq V$ such that $\kappa(T') < k$ whenever $\kappa(T) < k$. Furthermore, $|T'| \leq \phibar|T|(1+\phi)^{O(\log n)}$ where $\phibar = \phi \cdot m^{o(1)}$. The algorithm runs $O(m'^{1+o(1)}\phi^{-1})$ time outside max-flow calls, and the total number of edges in all max-flow instances is $O(m'k^2 \phi^{-2}\log^2 n)$ where $m' = (1+4\phibar)^{O(\log n)}m$.
\end{lemma}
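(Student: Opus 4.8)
The plan is to analyze Algorithm~\ref{alg:fast expander or terminal family} along the three aspects claimed: correctness (output is a valid separator or a size-reducing terminal set), the size bound $|T'| \le \phibar|T|(1+\phi)^{O(\log n)}$, and the running time. Correctness will largely follow by re-running the inductive argument of Lemma~\ref{lem:slow alg terminal correct}, since Algorithm~\ref{alg:fast expander or terminal family} is a faithful (but faster) implementation of Algorithm~\ref{alg:terminal reduction slow}: I would argue by induction on $|T|$. In the base case $|T| \le 10k/\phi$ the all-pairs max-flow is exactly correct by definition of $\kappa_G(T)$. In the inductive step, when \textsc{TerminalBalancedCutOrExpander} returns $L = S = \emptyset$, Lemma~\ref{lem:balancedorexpander} certifies $G$ is a $(T,\phi)$-expander, so every cut of size $<k$ is $(T,k/\phi)$-unbalanced, hence $(T,\beta)$-unbalanced for $\beta = \Theta(k/\phi)$; Theorem~\ref{thm:unbalanced} then finds a minimum separator, which is $<k$ iff $\kappa_G(T)<k$. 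When we recover a $(T,\phibar)$-sparse cut $(L,S,R)$ with $|S| \ge k$, I invoke Lemma~\ref{lem:steiner leq k recursively} to get $\min\{\kappa_G(S),\kappa_{G_L}(T_L\cup\{t_R\}),\kappa_{G_R}(T_R\cup\{t_L\})\}<k$; the key point is that in \emph{both} cases of Lemma~\ref{lem:balancedorexpander} the graphs $G_L,G_R$ (or $G_R$ replaced by its \textsc{Unbalanced} call, justified by Claim~\ref{claim:GR still a expander}) have strictly fewer terminals than $|T|$, so the inductive hypothesis applies. Finally Lemma~\ref{lem:sep le k is sep in G} transfers separators of size $<k$ from $G_L,G_R$ back to $G$, so the returned object is always valid, and $T' = S'_L \cup S \cup S'_R$ satisfies $\kappa_G(T')<k$ whenever $\kappa_G(T)<k$.

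For the size bound, I would set up a recurrence on the terminal count as a function of the recursion depth. At each internal node we recurse on $G_L$ with $|T_L \cup \{t_R\}| \le |T_{L\cup S}|$ terminals and (at most) on $G_R$ with $|T_R \cup \{t_L\}| \le |T_{R\cup S}|$ terminals; since $|S| \le \phibar|T_{L\cup S}|$, the two child terminal counts sum to at most $(1+\phibar)|T|$ plus the $O(1)$ clique terminals, and in the balanced case each child has at most $2|T|/3$ terminals while in the expander case the right branch terminates immediately via \textsc{Unbalanced}. Thus the recursion depth is $O(\log n)$, and the total number of leaves is bounded by $(1+\phibar)^{O(\log n)}$ up to lower-order terms; each leaf contributes at most $O(k/\phi)$ terminals to $T'$ only through the base cases, but more carefully the terminals added to $T'$ are the separators $S$ along the tree, whose sizes telescope: summing $|S| \le \phibar|T_{L\cup S}|$ over all internal nodes, using that the terminal masses at a fixed depth sum to at most $(1+\phibar)^{\text{depth}}|T|$, gives $|T'| \le \phibar|T|\sum_{d=0}^{O(\log n)}(1+\phibar)^d = \phibar|T|(1+\phi)^{O(\log n)}$ after absorbing constants, which is $\phi \cdot m^{o(1)} \cdot |T|$ as claimed since $\phibar = \phi\cdot m^{o(1)}$.

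For the running time, the dominant costs at each recursion node are: one call to \textsc{TerminalBalancedCutOrExpander} (Lemma~\ref{lem:balancedorexpander}), costing $O(m^{1+o(1)+O(1/r)}\log^{O(r^4)}m/\phi)$ time with $r = \log\log|V|$, which is $m^{1+o(1)}\phi^{-1}$; possibly one or two calls to \textsc{Unbalanced} (Theorem~\ref{thm:unbalanced}) with parameter $\beta = \Theta(k/\phi)$, contributing max-flow instances with $O(m\beta^2\log^5|T|) = O(mk^2\phi^{-2}\log^5 n)$ total edges; and the all-pairs max-flow in the base case on $\le 10k/\phi$ terminals, i.e. $O(k^2/\phi^2)$ flow calls on $m$-edge graphs. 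The crucial bookkeeping is that the graphs $G_L,G_R$ created by the $k$-left/$k$-right construction each add a clique $K_R$ (or $K_L$) of $k$ vertices plus a biclique to $S$, so each child graph has at most $m + O(k|S|) \le (1+O(\phibar))m$ edges (using $|S| \le \phibar|T_{L\cup S}| \le \phibar m$); iterating over $O(\log n)$ levels, every graph in the recursion has at most $m' = (1+4\phibar)^{O(\log n)}m$ edges. Summing the per-node costs over all $O\big((1+\phibar)^{O(\log n)}\big)$ nodes, with $m'$ in place of $m$, yields $O(m'^{1+o(1)}\phi^{-1})$ time outside max-flow and $O(m'k^2\phi^{-2}\log^2 n)$ total max-flow edges (the extra $\log^2 n$ absorbing the per-level factors and the $\log^5$ in Theorem~\ref{thm:unbalanced} being folded into the $o(1)$ or the stated polylog). \textbf{The main obstacle} I anticipate is the edge-blowup control: one must argue carefully that the clique-insertion in Definition~\ref{def:left-right graphs} does not compound multiplicatively in an uncontrolled way down the recursion — the bound $|S| \le \phibar|T_{L\cup S}|$ from Lemma~\ref{lem:balancedorexpander} is exactly what keeps the per-level edge growth at a $(1+O(\phibar))$ factor rather than a constant factor, so that $m' = m^{1+o(1)}$ and the final bound survives; getting this telescoping tight (and making sure the terminal set passed to each recursive call never exceeds what the depth analysis allows) is where the real care is needed.
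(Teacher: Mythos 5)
Your overall strategy matches the paper's: correctness by showing \Cref{alg:fast expander or terminal family} implements \Cref{alg:terminal reduction slow} and invoking \Cref{lem:slow alg terminal correct,lem:steiner leq k recursively,lem:sep le k is sep in G,claim:GR still a expander}, a recursion depth of $O(\log n)$ from the $2/3$-balance (resp.\ immediate termination) of the two cases of \Cref{lem:balancedorexpander}, and per-level growth factors for both the terminal count and the edge count. That part is fine.

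There is, however, a genuine gap in your edge-blowup step, which you yourself flag as the crux. You write that each child has at most $m + O(k|S|) \le (1+O(\phibar))m$ edges ``using $|S| \le \phibar|T_{L\cup S}| \le \phibar m$.'' From $|S|\le\phibar m$ you only get $k|S| \le k\phibar m$, i.e.\ a per-level factor of $1+O(k\phibar)$, not $1+O(\phibar)$; compounded over $O(\log n)$ levels this is $(1+k\phibar)^{O(\log n)}$, which is not $m^{o(1)}$ for large $k$ (e.g.\ $k=\Theta(\sqrt n)$) and the claimed bound on $m'$ collapses. The missing ingredient is the hypothesis, stated in the lemma, that $G$ has min-degree at least $k$: then $|S|\le\phibar|T_{L\cup S}|\le\phibar n_0$ and $n_0 k\le 2m_0$, so the $|S|k+k^2 \le 2|S|k$ added biclique/clique edges are at most $4\phibar m_0$, giving the correct $m_L+m_R\le(1+4\phibar)m_0$. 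Relatedly, your final summation ``over all $O\bigl((1+\phibar)^{O(\log n)}\bigr)$ nodes'' miscounts: the recursion tree can have polynomially many nodes. The correct accounting is that the \emph{total} edge mass at level $i$ is $O\bigl(m(1+4\phibar)^{i-1}\bigr)$, so the sum of edge counts over all nodes is $O(m'\log n)$, and the per-node costs (which are superlinear in the node's own edge count only by an $m^{o(1)}$ factor) are then bounded against that aggregate. With those two corrections your argument becomes the paper's.
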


The main theorem \Cref{thm:main terminal reduction} is implied by substituting an appropriate value of $\phi$ as given in \Cref{alg:fast expander or terminal family}. The rest of the section is devoted to proving the \Cref{lem:fast terminal reduction}.

\paragraph{Correctness.} We show that \Cref{alg:fast expander or terminal family} follows the description of \Cref{alg:terminal reduction slow}, i.e., it implements \Cref{alg:terminal reduction slow}. By \Cref{lem:slow alg terminal correct}, this means that \Cref{alg:fast expander or terminal family} outputs either a separator of size less than $k$ or a new terminal set $T' \subseteq V$ such that $\kappa(T') < k$ whenever $\kappa(T) < k$. The base cases happen when $|T| \leq 10k/\phi$ (\Cref{line:Tle2k}) or when $L = S = \emptyset$ (\Cref{line:L=S=0}). The correctness is immediate if $|T| \leq 10k/\phi$. If $L = S = \emptyset$ at \Cref{line:L=S=0}, then $G$ is a $(T,2\phi)$-vertex expander, so $G$ contains a $(T,\beta)$-unbalanced vertex mincut whenever $\kappa_G(T) <k$. By \Cref{thm:unbalanced}, \textsc{Unbalanced}$(G,T,\beta)$ outputs a separator $S'$ of size less than $k$ whenever $\kappa_G(T) < k$. If $|S'| \geq k$, we have verified that $\kappa_G(T) \geq k$, and thus an empty terminal set is returned correctly. If $|S| < k$ (\Cref{line:sparse is mincut}), then $S$ must be a separator of $G$, and we are done.

We now assume that we do not execute the base cases. By \Cref{lem:balancedorexpander}, $(L,S,R)$ is a $(T,\phibar)$-sparse. We recurse on $(G_L,T_L \cup \{t_R\})$ in the same way as in \Cref{alg:terminal reduction slow}. For $G_R$, either we recurse on $(G_R,T_R\cup \{t_L\})$ or we immediately solve $G_R$. If $\min\{|T_{L \cup S}|, |T_{R \cup S}| \} \geq |T|/3$, then we recurse on $(G_R,T_R\cup \{t_L\})$ in the same way as in \Cref{alg:terminal reduction slow}. Otherwise, \Cref{lem:balancedorexpander} together with \Cref{claim:GR still a expander} imply that $G_R$ is $(T_{R\cup S} \cup \{t_L\},\phi)$-vertex expander. Thus, $G_R$ contains a $(T_{R \cup S} \cup \{t_L\},\beta)$-unbalanced vertex mincut whenever $\kappa_{G_R}(T_{R\cup S}\cup\{t_L\}) < k$. By \Cref{thm:unbalanced}, \textsc{Unbalanced}$(G_R,T_{R\cup S} \cup \{t_L\},\beta)$ (\Cref{line:unbal gr expander}) outputs a separator $S'_R$ of size $<k$ whenever $\kappa_{G_R}(T_{R\cup S}\cup\{t_L\}) < k$. If $|S'_R| \geq k$, then $\kappa_{G_R}(T_{R\cup S} \cup \{T_L\}) \geq k$, and thus we can view the subproblem on $G_R$ as returning $S'_R$ as an emptyset, and hence the new terminal set $S'_L \cup S \cup S'_R$ is returned according to \Cref{alg:terminal reduction slow}. %

\paragraph{Recursion tree.} It remains to analyze the size of the new terminal and the total running time. We first set up notations regarding the recursion tree $\mathcal{T}$. Let $G^{\textrm{orig}}, T^{\textrm{orig}}$ be the graph and its terminal set at the root of the recursion tree $\textsc{ReduceTerminal}_{\phi,k}(G^{\text{orig}}, T^{\text{orig}})$. We represent each subproblem by $(G, T)$, the graph and its terminal set. If $G$ is $(T,\phi)$-vertex expander or $|T| \le 10k/\phi$, then we treat this subproblem as a leaf node in the recursion tree. Otherwise, we obtain a $(T,\phibar)$-sparse cut $(L,S,R)$, and recurse on $(G_L,T_L \cup \{t_R\})$ and $(G_R,T_R \cup \{t_R\})$. Here, we treat $(G_L, T_L \cup \{t_R\})$ and $(G_R,T_R \cup \{t_L\})$ as a left child and a right child -- respectively of $(G,T)$. Furthermore, for each internal node $(G, T)$, we also denote $(G, T; (L, S, R))$ where $(L, S, R)$ is the sparse cut obtained in the subproblem. For the analysis, let us assume that \Cref{line:sparse is mincut} is always false. Each subproblem's sparse cut has size at least $k$, which can only give a worse bound.

\begin{lemma} \label{lem:recursion depth log n}
The depth of the recursion tree $\mathcal{T}$ is $O(\log |T^{\textrm{orig}}|) = O(\log n)$.
\end{lemma}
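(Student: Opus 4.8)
The plan is to show that along any root-to-leaf path in $\mathcal{T}$, the number of terminals in the subproblem shrinks by a constant factor at every step, except possibly for a bounded amount of additive slack coming from the extra clique terminal $t_R$ or $t_L$. First I would set up the bookkeeping: for an internal node $(G,T)$ with sparse cut $(L,S,R)$, the left child gets terminal set $T_L \cup \{t_R\}$ and the right child gets $T_R \cup \{t_L\}$, where $T_L = T\cap L$ and $T_R = T\cap R$. Since $(L,S,R)$ is a $(T,\phibar)$-sparse cut, we have $|S| \le \phibar\,|T_{L\cup S}|$ and $|S| \le \phibar\,|T_{R\cup S}|$ (taking the smaller side), hence $|T_S| \le |S| \le \phibar \cdot |T|$, which is negligible. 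The key point is therefore to argue that both $|T_L|$ and $|T_R|$ are at most, say, $\tfrac{3}{4}|T|$ (plus the single extra terminal), so that $|T_L \cup\{t_R\}|$ and $|T_R\cup\{t_L\}|$ are each at most roughly $\tfrac{3}{4}|T| + 1$.

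The second step is to verify the constant-factor shrinkage case by case using \Cref{lem:balancedorexpander}. In the first case we have $|T_{L\cup S}|, |T_{R\cup S}| \ge |T|/3$; combined with $|T_S| \le \phibar|T| \le |T|/30$, this forces $|T_L|\ge |T|/3 - |T|/30 = 3|T|/10$ and likewise $|T_R| \ge 3|T|/10$, so each of $|T_L|, |T_R| \le |T| - 3|T|/10 = 7|T|/10$, giving child size at most $7|T|/10 + 1$. In the second case, we do not recurse on the right child at all: the right subproblem is solved directly by \textsc{Unbalanced} and becomes a leaf; the only recursive child is the left one, $(G_L, T_L\cup\{t_R\})$, and since $|T_{R\cup S}| \ge |T|/2$ we get $|T_L| = |T| - |T_{R\cup S}| \le |T|/2$, so the left child has at most $|T|/2 + 1$ terminals. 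In both cases the child has at most $(3/4)|T| + 1$ terminals, provided $|T|$ is not already below the base-case threshold $10k/\phi$. Since the recursion stops once $|T| \le 10k/\phi \ge 10k \ge 20$ (using $k\ge 2$), we can absorb the additive $+1$: from $|T| \ge 20$ we get $(3/4)|T| + 1 \le (4/5)|T|$, so the terminal count strictly geometrically decreases by a factor $4/5$ at every internal node.

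The third step is simply to conclude: starting from $|T^{\textrm{orig}}| \le n$ and decreasing by a factor $4/5$ each level until reaching the threshold, the number of levels is at most $\log_{5/4}(n) = O(\log n) = O(\log|T^{\textrm{orig}}|)$, and this bounds the depth of $\mathcal{T}$. I expect the only mildly delicate point to be handling the additive $+1$ slack cleanly — one must be careful that the geometric decrease is not destroyed near the base case — but this is resolved precisely because the base case threshold $10k/\phi$ is a large constant (indeed $\ge 20$), so the additive term is always dominated. A secondary point worth stating explicitly is that in the second case of \Cref{lem:balancedorexpander} the right branch does not spawn a recursive call, so it contributes nothing to the depth; without this observation the argument would need $|T_R|$ bounded away from $|T|$, which is not guaranteed in that case. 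Everything else is routine arithmetic.
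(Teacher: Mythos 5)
Your proof is correct and follows essentially the same route as the paper's: a case analysis on the two outcomes of \Cref{lem:balancedorexpander}, noting that in the second case the right branch is a leaf, and absorbing the additive $+1$ using the base-case threshold $|T|\ge 10k/\phi$. The only difference is that in the first case the paper bounds $|T_L|\le|T|-|T_{R\cup S}|\le 2|T|/3$ directly (since $T_L$ and $T_{R\cup S}$ partition $T$), avoiding your detour through $|T_S|\le\phibar|T|$, which is slightly off in its constant ($\phibar\le 1/10$, not $1/30$) but harmless.
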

\begin{proof}
Let $(G, T)$ be any internal node in the recursion tree. Let $(L, S, R)$ be $(T,\phibar)$-sparse cut in $G$ with the properties as stated in \Cref{lem:balancedorexpander}. We show that the terminal set of each subproblem decreases by a constant factor. If $|T_{L\cup S}|, |T_{R\cup S}| \geq |T|/3$, then $|T_L|, |T_R| \leq |T| - |T|/3 \leq 2|T|/3$. Therefore, $|T_L \cup \{t_R\}|, |T_R \cup \{t_L\}| \leq 2|T|/3 + 1 \leq 0.77|T|$. The last inequality follows since $|T| \geq 10k/\phi$. Otherwise, $G_R$ is $(T_R \cup \{t_L\}, \phi)$-vertex expander, and we treat the right subproblem as a leaf in the recursion tree. In this case, $|T_L| = |T| - |T_{R\cup S}| \leq |T|/2$, and thus $|T_L \cup \{t_R\}| = |T_L| + 1 \leq 0.6|T|$. %
\end{proof}

\paragraph{The size of the new terminal set.} If $\textsc{ReduceTerminal}_{\phi,k}(G^{\text{orig}},T^{\text{orig}})$ returns a separator, then it must be of size less than $k$, and we are done. Now we assume that a terminal set $T'$ is returned where we denote $t = |T^{\text{orig}}|, t' = |T'|$. Thus, every subproblem in the recursion must return a (possibly empty) terminal set. Let $(G, T; (L, S, R))$ be an internal node in the recursion tree. Let $G_L$ and $G_R$ be the $k$-left/-right graphs of $G$ w.r.t. $(L,S,R)$, respectively. Denote $t_0 = |T|$, we have $|T_L \cup \{t_R\}| +|T_R \cup \{t_L\}| \leq t_0+2 \leq (1+\phi)t_0$. The last inequality follows since $|T| \geq 10k\phi^{-1}\geq 10\phi^{-1}$ for non-base cases. Therefore, at level, $i$, the total number of terminals (as inputs) is at most $(1+\phi)^{i-1}t$. Let $\ell$ be the recursion depth. The total number of terminals as inputs to each subproblem at all levels is $O((1+\phi)^{\ell+1} t)$. Next, we bound the size of the output terminals. Observe that, every internal node $(G,T;(L,S,R))$, $(L,S,R)$ is a $(T,\phibar)$-sparse cut, and thus $|S| \leq \phibar|T_{L\cup S}| \leq \phibar \cdot t_0$. Therefore, the number of output terminals is at most $\phibar$ times the total number of input terminals. Thus, $|T'| = t' \leq \phibar(1+\phi)^{\ell+1}|T| \leq \phibar(1+\phi)^{O(\log n)}|T|$. The last inequality follows by \Cref{lem:recursion depth log n}.

\paragraph{Running time.} We bound the total number of edges of the graphs (as an input to each subproblem) in the recursion tree. Let $(G, T;(L, S, R))$ be an internal node in the recursion tree. We assume that $\min\{T_{L\cup S}, T_{R\cup S}\} \geq |T|/3$ because otherwise we would recurse only on $G_L$. We denote $n_0 = |V(G)|, m_0 = |E(G)|, m_L = |E(G_L)|, m_R = |E(G_R)|$. Therefore,
\begin{align*}
 m_L + m_R \leq m_0 + 2(|S|k+k^2)
 \leq m_0 + 4|S| k
 \leq m_0 + 4 \phibar \cdot n_0 k
 \leq m_0(1+4\phibar)
\end{align*}
The first inequality follows since there are $|S|k +k^2$ additional edges from biclique edges in $G_L$ and $G_R$ and the clique edges, and we remove edges between $S$ in both $G_L$ and $G_R$. The second inequality follows since $|S| \geq k$. The third inequality follows since $|S| \leq \phibar\cdot |T_{L\cup S}| \leq \phibar n_0$. The last inequality follows since the min-degrees of $G_L$ and $G_R$ are at least $k$.

Therefore, at level $i$, the total number of edges is at most $O(m(1+4\phibar)^{i-1})$. By summing over all $i$, the total number of edges in the recursion tree is $O(m(1+4\phibar)^{\ell+1}) = O(m(1+4\phibar)^{O(\log n)})$. Finally, we bound the running time using \Cref{thm:unbalanced} at the base cases and \Cref{lem:balancedorexpander} at each internal node of the recursion tree. At the base cases, every instance of $m'$ edges takes $O(m'\beta^2\log^4(n)) = \ot(m'k^2\phi^{-2})$ time outside the max-flows, and the number of edges in all max-flow instances is at most $O(m'\beta^2\log^5(n)) = O(m'k^2\phi^{-2}\log^5(n))$. At each internal node in the recursion, each instance of $m'$ edges takes $m'^{1+o(1)}\phi^{-1}$. %

\section{A \texorpdfstring{$(1+\eps)$}{(1+eps)}-Approximation Algorithm}\label{sec:approx}

In this section, we give a $(1+\eps)$-approximation algorithm for vertex connectivity. Recall \Cref{thm:approx} from \Cref{sec:intro}.

\approximationthm*

The main idea of the algorithm is to use two graphs\footnote{the first graph is an induced sub-graph of an expander and the second graph is small set vertex expander.} and apply $(s,t)$-min cut on all the edges of them. We start with defining the guarantees required from these graphs and then state and prove the correctness of the algorithm.

\begin{lemma}[Induced sub-graph of Ramanujan expander]\label{lem:ramanujan}
Given an integer $n$ and a degree parameter $d$, a graph $H=(V_H,E_H)$ with maximum degree $4d$ and $|V_H|=n$ can be constructed in time $O(nd)$ such that for any $L,R\subseteq V_H$, if $|L|\cdot |R| \cdot d \geq 4\cGabow^2n^2$ for universal constant $\cGabow$ from \Cref{lem:gabow00}, then $E_H(L,R)\neq \emptyset$.
\end{lemma}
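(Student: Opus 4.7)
The plan is to realize $H$ as an induced subgraph of a classical regular Ramanujan expander $X$ and then invoke the expander mixing lemma. First, I would construct an explicit $d'$-regular Ramanujan graph $X=(V_X,E_X)$ on $N$ vertices with $n\le N\le \cGabow\cdot n$ and $d\le d'\le 4d$, using a standard deterministic construction such as Lubotzky--Phillips--Sarnak (picking a prime $p\in[d,4d]$ by Bertrand's postulate so that $d'=p+1\in[d,4d]$, and combining with disjoint copies or choosing $q$ so that $N=|\mathrm{PSL}_2(\mathbb{F}_q)|$ lies in the required range). Such a construction produces $X$ in $O(Nd')=O(nd)$ time and guarantees every non-trivial adjacency eigenvalue of $X$ has absolute value at most $2\sqrt{d'-1}$.

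Next, pick any $n$-vertex subset $V_H\subseteq V_X$ and set $H:=X[V_H]$. Then $|V_H|=n$ by construction, and every vertex of $H$ has degree at most $d'\le 4d$, matching the maximum-degree requirement.

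For the edge-existence property, fix $L,R\subseteq V_H\subseteq V_X$. Because $H$ is induced, $E_H(L,R)=E_X(L,R)$, so it suffices to lower-bound $|E_X(L,R)|$. Applying the expander mixing lemma to $X$,
\[
\left|\,|E_X(L,R)|-\frac{d'|L||R|}{N}\,\right|\;\le\;2\sqrt{d'-1}\cdot\sqrt{|L||R|}\;\le\;2\sqrt{d'|L||R|}.
\]
Hence $|E_X(L,R)|>0$ whenever $d'|L||R|/N > 2\sqrt{d'|L||R|}$, i.e.\ whenever $|L||R|\cdot d'>4N^2$. Since $d'\ge d$ and $N\le \cGabow\cdot n$, the hypothesis $|L|\cdot|R|\cdot d\ge 4\cGabow^2 n^2$ implies $|L||R|\cdot d'\ge |L||R|\cdot d\ge 4\cGabow^2 n^2\ge 4N^2$, giving $E_H(L,R)\ne\emptyset$.

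The main obstacle is purely constructional rather than analytical: classical deterministic Ramanujan constructions exist only at discrete $(N,d')$ pairs (e.g.\ $d'=p+1$ for specific primes $p$ and $N=|\mathrm{PSL}_2(\mathbb{F}_q)|$ for prime powers $q$), so matching arbitrary $(n,d)$ within the allowed slack (factor $\cGabow$ in $N$, factor $4$ in $d'$) requires combining Bertrand-type prime-density arguments for the degree with a padding/truncation step for the vertex count (such as disjoint unions of smaller Ramanujan graphs, followed by restriction to an $n$-vertex induced subgraph). This bookkeeping is the only delicate part and explains why the lemma states the constants $4$ and $\cGabow^2$ rather than tight bounds; the core analytic step is a single invocation of the expander mixing lemma.
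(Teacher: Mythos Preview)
Your proposal is correct and follows essentially the same approach as the paper: take an induced $n$-vertex subgraph of a $d'$-regular Ramanujan graph on $N\le \cGabow n$ vertices with $d\le d'\le 4d$, and use the expander mixing lemma to get the edge-existence guarantee. The only difference is packaging: the paper bundles both the construction (with the parameter slack $\cGabow$ and $4$) and the mixing-lemma consequence into a single black-box lemma cited from Gabow (\Cref{lem:gabow00}), whereas you unpack that lemma by invoking LPS explicitly and applying the mixing lemma yourself; the induced-subgraph step and the final chain $|L||R|d'\ge |L||R|d\ge 4\cGabow^2 n^2\ge 4N^2$ are identical.
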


A graph $X = (V_X, E_X)$ is called $(\alpha, \beta)$-vertex expander for some $\alpha\leq 1, \beta \ge 0$, if for every subset $L \subset V_X$ with $|L| \leq \alpha |V_X|$ we have $|N_X(L)|\geq \beta |L|$ where $N_X(L) = \{u \in V\setminus L \mid \exists v \in L \text{ such that } (u,v)\in E_X\}$ represents the neighborhood of $L$ in $X$.

\begin{lemma}[Small set vertex expander]\label{lem:vertexExpander}
Given an integer $n$ and a  parameter $\eps$, a $(\frac{\eps}{20c^2},\frac{2}{\eps})$-vertex expander, $H=(V_H, E_H)$ with $|V_H|=n$ and maximum degree $O(1/\eps)$ can be constructed in $O(n/\eps)$ time.
\end{lemma}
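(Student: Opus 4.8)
The plan is to take $H$ to be (essentially) a $d$-regular near-Ramanujan graph with $d=\Theta(1/\eps)$ and extract the claimed small-set vertex expansion purely from its spectral gap; the only work beyond a standard eigenvalue bound is matching the vertex count exactly and keeping the construction time $O(n/\eps)$.

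Concretely, fix a universal constant $c_0$ (coming from the explicit expander family used) and let $d$ be the smallest admissible degree with $d\ge 4c_0^2/\eps$, so $d=O(1/\eps)$; let $X$ be an explicit $d$-regular graph on $n$ vertices all of whose nontrivial adjacency eigenvalues have absolute value at most $c_0\sqrt d$, constructible in $O(nd)=O(n/\eps)$ time, and set $H:=X$. The core estimate is the Tanner/Kahale-type bound: for $L\subseteq V_H$, write $A$ for the adjacency matrix and $\mathbf{1}_L=\tfrac{|L|}{n}\mathbf{1}+g$ with $g\perp\mathbf{1}$; then $\|A\mathbf{1}_L\|_1=d|L|$, the vector $A\mathbf{1}_L$ is supported on $L\cup N_X(L)$, and $\|A\mathbf{1}_L\|_2^2=\tfrac{|L|^2}{n}d^2+g^\top A^2g\le \tfrac{|L|^2}{n}d^2+c_0^2 d\,|L|$. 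Cauchy--Schwarz on the support then gives
\[
|L|+|N_X(L)|\;\ge\;\frac{\|A\mathbf{1}_L\|_1^2}{\|A\mathbf{1}_L\|_2^2}\;\ge\;\frac{d^2|L|^2}{\tfrac{|L|^2}{n}d^2+c_0^2 d\,|L|}\;=\;\frac{|L|}{\tfrac{|L|}{n}+\tfrac{c_0^2}{d}}.
\]
Hence if $|L|\le \alpha n$ with $\alpha=\tfrac{\eps}{20c^2}\le\tfrac{\eps}{20}$ (using $c\ge 1$) and $\tfrac{c_0^2}{d}\le\tfrac{\eps}{4}$ by our choice of $d$, then $|L|+|N_X(L)|\ge |L|/(\tfrac{\eps}{20}+\tfrac{\eps}{4})\ge |L|/(\eps/3)$, so $|N_X(L)|\ge(\tfrac{3}{\eps}-1)|L|\ge\tfrac{2}{\eps}|L|$ since $\eps<1$. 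This is exactly the $(\tfrac{\eps}{20c^2},\tfrac{2}{\eps})$-vertex-expander guarantee, with degree $O(1/\eps)$.

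The one delicate point is that explicit Ramanujan constructions need not be available for every pair $(n,d)$. If the family used already produces graphs on all $n$ (and all degrees up to a constant factor, e.g.\ via Alon's explicit near-Ramanujan expanders of every degree and size), the argument above is complete as stated. Otherwise I would first build a $d$-regular near-Ramanujan graph $X'$ on $n'$ vertices with $n\le n'\le 4n$, then take $H$ to be the quotient of $X'$ under a surjection $\pi\colon V(X')\to V_H$ with all fibers of size at most $4$ (merge the $n'-n$ surplus vertices into distinct targets, unioning neighborhoods and deleting self-loops); this keeps the degree at most $4d=O(1/\eps)$ and costs $O(n'd)=O(n/\eps)$ additional time. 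For $L\subseteq V_H$ with $|L|\le\alpha n$ one has $|\pi^{-1}(L)|\le 4|L|$, and since $\pi\bigl(N_{X'}(\pi^{-1}(L))\bigr)\setminus L\subseteq N_H(L)$ while $\pi$ shrinks cardinalities by at most a factor $4$, applying the spectral bound to $X'$ (with a slightly larger constant degree and a slightly smaller threshold $\alpha'$, both absorbed into $20c^2$ and the target expansion $2/\eps$) transfers the small-set expansion to $H$. The expected obstacle is therefore not conceptual but exactly this bookkeeping: selecting the admissible degree and verifying that the vertex-count fix neither breaks the expansion estimate nor inflates the degree or running time past $O(1/\eps)$.
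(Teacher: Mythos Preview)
Your proposal is correct and follows essentially the same route as the paper: build a (near-)Ramanujan graph of degree $\Theta(1/\eps)$, extract small-set vertex expansion from the spectral gap, and then fix the vertex count by contracting/quotienting bounded-size groups. The only cosmetic difference is that you prove the Tanner--Kahale ``spectral expansion $\Rightarrow$ vertex expansion'' bound directly via Cauchy--Schwarz, whereas the paper cites it as a black box (Lemma~4.6 of Vadhan), and your quotient step is the same idea as the paper's \Cref{clm:expanderSizing}.
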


Before proving \Cref{thm:approx}, we must observe a structural property of the $(1+\eps)$-approximate minimum cut. Let $\delta$ be the minimum degree of the graph $G$.

\begin{claim}\label{lem:epskbalanced}
Let $(L,S,R)$ be any minimum vertex cut of size $\kappa_G$ in graph $G=(V,E)$ with minimum degree $\delta\geq (1+\eps)\kappa_G$. We have $\min(|L|,|R|) \geq \eps \kappa_G$.
\end{claim}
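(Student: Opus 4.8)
The plan is to show that each of the two non-separator sides of a minimum vertex cut must be large, using a simple counting argument on the edges leaving the smaller side. Without loss of generality assume $|L| \le |R|$, so it suffices to lower bound $|L|$. Pick any vertex $v \in L$; such a vertex exists because $(L,S,R)$ is a genuine vertex cut, so $L \neq \emptyset$ (and similarly $R \neq \emptyset$). Since there are no edges between $L$ and $R$, every neighbor of $v$ lies in $L \cup S$, i.e. $N_G(v) \subseteq (L \setminus \{v\}) \cup S$. Therefore
\[
\delta \le \deg_G(v) \le |L| - 1 + |S| = |L| - 1 + \kappa_G.
\]
Combining this with the hypothesis $\delta \ge (1+\eps)\kappa_G$ gives $(1+\eps)\kappa_G \le |L| - 1 + \kappa_G$, hence $|L| \ge \eps \kappa_G + 1 > \eps \kappa_G$. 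Since $|L| \le |R|$, we conclude $\min(|L|,|R|) \ge \eps\kappa_G$, as desired.

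There is essentially no main obstacle here: the argument is a one-line degree count, and the only thing to be careful about is that $L$ and $R$ are both nonempty (so that a vertex $v$ to apply the bound to exists), which is immediate from the definition of a vertex cut as used in the paper (a partition with no $L$–$R$ edges that witnesses disconnection). One minor point worth stating cleanly is that $|S| = \kappa_G$ because $(L,S,R)$ is assumed to be a \emph{minimum} vertex cut, so its size equals the vertex connectivity. If one wanted to be slightly more careful about integrality, note we actually obtain $|L| \ge \lceil \eps\kappa_G \rceil + 1$ in the strict sense, but the stated bound $\min(|L|,|R|) \ge \eps\kappa_G$ is all that is needed downstream when combined with the expander mixing / vertex expansion arguments sketched in the techniques section.
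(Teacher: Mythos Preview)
Your proof is correct and essentially identical to the paper's own argument: both pick an arbitrary vertex $v$ in the smaller side, use $N(v)\subseteq (L\setminus\{v\})\cup S$ to get $\delta \le |L|+\kappa_G-1$, and combine with $\delta\ge(1+\eps)\kappa_G$ to conclude $|L|\ge \eps\kappa_G$. The only difference is cosmetic---you note the slightly sharper $|L|\ge \eps\kappa_G+1$, which the paper does not bother to state.
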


\begin{proof}
WLOG, we can assume $|L|\leq |R|$ and $v$ be any arbitrary vertex in $L$. Since $(L,S,R)$ is a vertex cut, $N(v) \subseteq L\cup S\setminus \{v\}$. So we have \[(1+\eps)\kappa_G \leq \delta \leq |N(v)| \leq |L|+|S|-1 = |L|+\kappa_G-1\] which implies $|L|\geq \eps \kappa_G$.
\end{proof}

The above lemma tells us that if the minimum degree of graph $\delta \geq (1+\eps)\kappa_G$, every minimum cut has at least $\eps \kappa_G$ vertices on both sides. We are now ready to state our algorithm and prove its correctness and running time.

\begin{algorithm}[H]
 \DontPrintSemicolon
 \KwIn{A graph $G=(V,E)$, approximation parameter $\eps$}
 \KwOut{A separator $S$ of graph $G$ with guarantee of $|S|<\floor{(1+\eps)\kappa_G}$}
 \BlankLine
 Let $\cGabow$ be the universal constant given in \Cref{lem:gabow00}.\\
 Let $H_1 = (V,E_{H_1})$ be the graph constructed with $|V|,d=\frac{1600\cGabow^6}{\eps^2}$ as inputs to \Cref{lem:ramanujan}\\
 Let $H_2 = (V,E_{H_2})$ be the $(\frac{\eps}{20\cGabow^2},\frac{2}{\eps})$-vertex expander constructed with $|V|,\eps$ as inputs to \Cref{lem:vertexExpander}\\
 Let $S$ be the minimum separator seen so far, initialized with $V$.\\
 \For{$e=(s,t) \in E_{H_1}\cup E_{H_2}$\label{expanderedges}}
 {
 Compute $T =$  min $(s,t)$-separator in $G$.\;
 \If{$|T|<|S|$}
 {$S = T$}
 }
 \eIf{$\delta < |S|$}
 {\Return $N(v)$ where $v\in V$ such that $\deg(v)=\delta$.\label{alg:approx:1}}
 {\Return $S$.}
 \caption{\textsc{ApproxVertexMinCut}$(G,\eps)$}
 \label{alg:approx}
\end{algorithm}

We prove the correctness and running time of \Cref{alg:approx}, thus proving \Cref{thm:approx}.

\begin{proof}[Proof of \Cref{thm:approx}]
Let $(L, S, R)$ be a minimum vertex cut in the graph $G$ with $|V|=n$. We construct $H_1,H_2$ by giving required inputs to \Cref{lem:ramanujan,lem:vertexExpander} respectively. Since the three graphs $G, H_1, H_2$ have same number of vertices, we can assume they are having same vertex set $V$ by mapping them arbitrarily.

If $|L|\geq \frac{\eps n}{20\cGabow^2}$ and $|R|\geq \frac{\eps n}{20\cGabow^2}$, then in graph $H_1$, we have \[|L|\cdot |R|\cdot d \geq \frac{\eps n}{20\cGabow^2}\cdot \frac{\eps n}{20\cGabow^2} \cdot \frac{1600\cGabow^6}{\eps^2} \geq 4\cGabow^2n^2.\] From the guarantee of \Cref{lem:ramanujan} we have $E_{H_1}(L, R)\neq \emptyset$ hence, we find $S$ computing $(s,t)$-separator across that edge.

Otherwise, we can assume WLOG $|L| < \frac{\eps n}{20\cGabow^2}$. We can also assume that $\delta \geq (1+\eps)\kappa_G$ otherwise, we would return the neighborhood of the minimum degree vertex in \Cref{alg:approx:1}. Hence from \Cref{lem:epskbalanced} we have $|L|\geq \eps \kappa_G$.

From the guarantee of \Cref{lem:vertexExpander} we have, $|N_{H_2}(L)|\geq \frac{2}{\eps}|L| \geq 2\kappa_G$. Hence, in $H_2$ there exists a neighbour of $L$ in $R$ as $|S| = \kappa_G$, implies $E_{H_2}(L, R)\neq \emptyset$. Applying $(s,t)$-max-flow on two endpoints of an edge in $E_{H_2}(L,R)$ gives us the minimum separator.

We apply $(s,t)$-max-flow on $G$ for every edge $(s,t)\in E_{H_1}\cup E_{H_2}$ which are at most $O(n/\eps^2)$ in total. The total additional time to construct both $H_1, H_2$ is at most $O(n/\eps^2)$ from the time guarantees of \Cref{lem:ramanujan,lem:vertexExpander}.
\end{proof}

The rest of the section discusses constructing the graphs $H_1, H_2$ stated in \Cref{alg:approx}.

\paragraph{Induced sub-graph of Ramanujan expander.} Here we prove \Cref{lem:ramanujan}. We construct the first graph from Ramanujan expanders. Ramanujan expanders are graphs with the strongest spectral expansion guarantee, which by the \emph{expander mixing lemma} (see e.g.~\cite{Salil2012}), implies that any two large enough sub-sets have an edge between them. This is formally described in the following theorem taken from \cite{Gabow06}.

\begin{lemma}[Discussion above Lemma 2.4 in \cite{Gabow06}]\label{lem:gabow00}
 Given integers $n,d\in \mathbb{N}$, there exists a $d'$-regular Ramanujan expander $X=(V_X,E_X)$ with $|V_X|\leq \cGabow n$, $d\leq d'\leq 4d$ for some universal constant $\cGabow$ and can be constructed in time $O(nd)$. We also have guarantee that for any $L,R\subseteq V_X$, if $|L|\cdot |R|\cdot d' \geq 4|V_X|^2$, then $E_X(L,R)\neq \emptyset$.
\end{lemma}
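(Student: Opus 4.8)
The plan is to derive Lemma~\ref{lem:gabow00} from an off-the-shelf explicit construction of Ramanujan graphs (Lubotzky--Phillips--Sarnak / Margulis, or Morgenstern over fields of characteristic two) together with the expander mixing lemma; this is precisely the reasoning alluded to in the discussion preceding Lemma~2.4 of \cite{Gabow06}. Recall that for a $d'$-regular graph on $N$ vertices with adjacency eigenvalues $d'=\lambda_1\ge\lambda_2\ge\cdots\ge\lambda_N$ and $\lambda:=\max\{|\lambda_2|,|\lambda_N|\}$, the expander mixing lemma (see e.g.~\cite{Salil2012}) gives, for all $L,R\subseteq V_X$,
\[
\Bigl|\,e(L,R)-\tfrac{d'}{N}\,|L|\,|R|\,\Bigr|\;\le\;\lambda\sqrt{|L|\,|R|},
\]
and a graph is \emph{Ramanujan} when $\lambda\le 2\sqrt{d'-1}$. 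So the three things I need to supply are: (i) a degree $d'\in[d,4d]$ for which an explicit Ramanujan family exists; (ii) a member of that family whose vertex count lies in $[n,\cGabow n]$; and (iii) an $O(nd)$-time construction. The mixing estimate then yields the edge-existence guarantee essentially for free.

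For the degree, I would use Bertrand's postulate to pick a prime power $q$ with $d\le q+1\le 4d$ (degrees $d$ below a fixed constant are handled by any trivial $d$-regular graph, folding the loss into $\cGabow$) and set $d':=q+1$; to dodge the LPS congruence constraint $q\equiv 1\bmod 4$ I would invoke Morgenstern's construction, which gives $(q+1)$-regular Ramanujan graphs for every prime power $q$, in particular for $q$ a power of two, so a suitable $q$ always exists. For the size, the construction yields an \emph{infinite} explicit family of $d'$-regular Ramanujan graphs whose vertex counts are indexed by an auxiliary prime $\ell$ and grow polynomially in $\ell$ (e.g.\ $\Theta(\ell^3)$), hence are spaced by a $1+o(1)$ multiplicative factor; since $d=n^{o(1)}$ in all our applications, prime-gap bounds then produce an index with $n\le N\le\cGabow n$ for a universal $\cGabow$. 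Let $X=(V_X,E_X)$ be that graph, so $n\le|V_X|\le\cGabow n$ and $d\le d'\le 4d$. Finally, $X$ is a Cayley graph on an explicitly computable symmetric generating set of size $d'$ (from the sum-of-four-squares representations of $q$, resp.\ field arithmetic, computable in $\mathrm{poly}(d)$ time); enumerating the $|V_X|$ group elements and multiplying each by the $d'$ generators lists all edges in $O(|V_X|\,d')=O(nd)$ operations.

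For the edge-existence claim, fix $L,R\subseteq V_X$ with $|L|\,|R|\,d'\ge 4|V_X|^2$. The computation I would carry out is: by the expander mixing lemma and $\lambda\le 2\sqrt{d'-1}$,
\[
e(L,R)\;\ge\;\frac{d'}{N}\,|L|\,|R|\;-\;2\sqrt{d'-1}\,\sqrt{|L|\,|R|}\;=\;\sqrt{|L|\,|R|}\left(\frac{d'\sqrt{|L|\,|R|}}{N}-2\sqrt{d'-1}\right),
\]
and the hypothesis gives $\sqrt{|L|\,|R|}\ge 2N/\sqrt{d'}$, so $\dfrac{d'\sqrt{|L|\,|R|}}{N}\ge 2\sqrt{d'}>2\sqrt{d'-1}$; hence the bracket is strictly positive, $e(L,R)\ge 1$, and $E_X(L,R)\ne\emptyset$. (If $L,R$ overlap, an edge inside $L\cap R$ still has an endpoint in each; in our use $L,R$ are disjoint sides of a vertex cut anyway.)

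I expect the eigenvalue/mixing step to be entirely routine; the delicate part is the bookkeeping in the previous paragraph, namely matching the \emph{fixed}, arithmetically constrained vertex counts of an explicit Ramanujan family to the target interval $[n,\cGabow n]$ with a genuinely universal constant $\cGabow$, while simultaneously pinning the degree into $[d,4d]$ and keeping the build linear in the number of edges. This is exactly why the lemma is stated (as in \cite{Gabow06}) with the mild slack $|V_X|\le\cGabow n$ rather than $|V_X|=n$, and why it is only ever applied with $d=n^{o(1)}$; for the precise size/degree arithmetic I would cite the explicit construction rather than reproduce it.
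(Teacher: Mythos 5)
The paper does not reprove this lemma---it imports it verbatim from the discussion above Lemma~2.4 in \cite{Gabow06}---and your reconstruction (an explicit LPS/Morgenstern Ramanujan family, degree pinned into $[d,4d]$ via a prime power, vertex count pinned into $[n,\cGabow n]$ via prime-gap bounds, and edge existence via the expander mixing lemma, whose computation you carry out correctly) is exactly the argument underlying that citation. The one point worth making explicit is that you must select the \emph{non-bipartite} member of the family: for a bipartite $d'$-regular Ramanujan graph $|\lambda_N|=d'$, so the mixing bound with $\lambda=\max\{|\lambda_2|,|\lambda_N|\}$ is vacuous and the conclusion genuinely fails (take $L=R$ to be one side of the bipartition); your definition of $\lambda$ implicitly enforces non-bipartiteness, but the choice should be stated rather than left to the reader.
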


\begin{proof}[Proof of \Cref{lem:ramanujan}]
 Given integers $n,d$ apply \Cref{lem:gabow00} and construct $d'$-regular Ramanujan expander $X=(V_X,E_X)$ in time $O(nd)$. We construct $H=(V_H,E_H)$ from $X$ as follows. Let $V_H$ be any arbitrary subset of $V_X$ of size $n$ and $E_H=E_X(V_H, V_H)$ which is the subset of edges $E_X$ whose both endpoints are in $V_H$. Hence it takes at most $O(nd)$ total time to construct the graph $H$. Maximum degree of $H$ is at most $d'\leq 4d$ from \Cref{lem:gabow00}.
 
 For any $L,R\subseteq V_H \subseteq V_X$, if we have \[|L|\cdot |R| \cdot d' \geq |L|\cdot |R| \cdot d \geq 4\cGabow^2n^2 \geq 4|V_X|^2\] as $d'\geq d$, $|L||R|d\geq 4\cGabow^2n^2$ and $|V_X| \leq \cGabow n$ respectively. From \Cref{lem:gabow00} we have guarantee that $E_X(L,R) \neq \emptyset$, hence $E_H(L,R)\neq \emptyset$.
\end{proof}

\paragraph{Small set vertex expander.} Here we prove \Cref{lem:vertexExpander}. As explained below, we create the second graph using Ramanujan expanders, which have a high spectral expansion that implies good vertex expansion.

Let $A$ be the adjacency matrix of the graph $G$ and $D$ be the diagonal matrix with $i^{th}$ diagnoral entry as $\deg(v_i)$ where $v_i$ is the $i^{th}$ vertex of graph $G$. Let $\lambda_1(G), \lambda_2(G),\dots $ be the eigenvalues of the matrix $AD^{-1}$ sorted from high to low according to their absolute values. %
We recall the following fact of Ramanujan graphs.

\begin{fact}[\cite{Lubotzky86}]\label{lem:RamanujanEigen}
Let $X=(V_X,E_X)$ be the $d'$-regular Ramanujan graph with $n'$ vertices constructed using \Cref{lem:gabow00}. We have $\lambda_2(X) \leq 2\sqrt{d'-1}/d'$.
\end{fact}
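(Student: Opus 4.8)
The plan is to recognize that this Fact is simply the \emph{defining} spectral property of a Ramanujan graph, once one unwinds the normalization used in the excerpt. Since $X$ is $d'$-regular, its degree matrix is $D = d' I_{n'}$, so $A_X D^{-1} = \tfrac{1}{d'}A_X$. Consequently the eigenvalues of $A_X D^{-1}$ are exactly $\mu_1/d', \mu_2/d', \dots$, where $\mu_1, \mu_2, \dots$ are the eigenvalues of the adjacency matrix $A_X$ listed in decreasing order of absolute value; the scaling by the positive constant $1/d'$ preserves this ordering. By Perron--Frobenius applied to the connected regular graph $X$, the top eigenvalue is $\mu_1 = d'$, i.e. $\lambda_1(X) = 1$, and hence $\lambda_2(X) = |\mu_2|/d'$. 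So the claim reduces to the adjacency-matrix bound $|\mu_2| \le 2\sqrt{d'-1}$.

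This bound is precisely the Ramanujan condition: a connected $d'$-regular graph is called Ramanujan exactly when every adjacency eigenvalue outside $\{d', -d'\}$ has absolute value at most $2\sqrt{d'-1}$ (the Alon--Boppana threshold, which is asymptotically the best possible spectral gap). The graph $X$ returned by \Cref{lem:gabow00} is the explicit Cayley-graph construction of Lubotzky--Phillips--Sarnak \cite{Lubotzky86} (or equivalently Margulis), whose main theorem establishes exactly this eigenvalue estimate; internally that proof reduces the estimate to the Ramanujan--Petersson conjecture for weight-two cusp forms, a theorem of Eichler and Deligne. I would invoke this as a black box rather than reproving any of it, so the work on our side is limited to the normalization and indexing bookkeeping of the previous paragraph.

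Combining the two steps yields $\lambda_2(X) = |\mu_2|/d' \le 2\sqrt{d'-1}/d'$, as stated. The only genuinely delicate points are cosmetic: checking that the excerpt's convention ($\lambda_1$ the largest eigenvalue in absolute value, equal to $1$; $\lambda_2$ the next) lines up with the statement, and observing that when $X$ is bipartite the value $-d'$ also occurs as an eigenvalue but still belongs to the ``trivial'' spectrum $\{\pm d'\}$ and therefore does not enter the bound on $\lambda_2$. I do not expect any real mathematical obstacle here — all of the depth is imported through \cite{Lubotzky86}, and our only contribution to the argument is the one-line identity $A_X D^{-1} = A_X/d'$ together with Perron--Frobenius.
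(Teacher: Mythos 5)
Your proposal is correct and matches the paper, which states this as a Fact imported directly from \cite{Lubotzky86} with no further argument; your only added content is the (correct) normalization bookkeeping $A_X D^{-1}=A_X/d'$ and the Perron--Frobenius identification $\lambda_1=1$, which is exactly what is implicit in the paper's citation. The bipartite caveat you flag is the one genuinely delicate point under the paper's ``sorted by absolute value'' convention, but resolving it is a matter of the construction in \Cref{lem:gabow00} rather than of this Fact's proof.
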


The \emph{spectral expansion} $\gamma(G)$ of the graph $G$ is defined as $1-\lambda_2(G)$.
It is known that  strong  spectral expansion implies  strong  vertex expanders as stated below.

\begin{lemma}[Lemma 4.6 from \cite{Salil2012} ``spectral expansion implies vertex expansion'']\label{lem:spectralExpansion}
 If $G$ is a regular digraph with spectral expansion $\gamma(G) = 1-\lambda_2(G)$ then, for every $\alpha\in [0,1]$, $G$ is an $(\alpha, \frac{1}{\alpha + (1-\alpha)\lambda_{2}(G)^2}-1)$\footnote{the lemma has extra $-1$ in the expansion parameter resulting due to change in the definition of $(\alpha,\beta)$-vertex expander.}-vertex expander.
\end{lemma}

Note that the statement works for undirected graphs by considering any undirected graph as a digraph by having two directed edges in both directions for each undirected edge. 
Now, we conclude the following claim.

\begin{claim}\label{clm:bigVertexExpander}
Given an integer $n$ and approximation parameter $\eps$, we can construct a $(\frac{\eps}{10\cGabow},\frac{4}{\eps})$-vertex expander, $X=(V_X,E_X)$ of size at most $\cGabow n$ and maximum degree $O(1/\eps)$ in time $O(n/\eps)$.
\end{claim}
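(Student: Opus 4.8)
The plan is to apply Lemma~\ref{lem:gabow00} with a degree parameter $d = \Theta(1/\eps)$ chosen large enough, take the resulting $d'$-regular Ramanujan expander $X = (V_X, E_X)$ (which has $|V_X| \le \cGabow n$ and $d \le d' \le 4d$), and then convert its strong spectral expansion into vertex expansion via Lemma~\ref{lem:spectralExpansion}. Concretely, by Fact~\ref{lem:RamanujanEigen} we have $\lambda_2(X) \le 2\sqrt{d'-1}/d' \le 2/\sqrt{d'} \le 2/\sqrt{d}$. If we set $d$ so that $2/\sqrt{d} \le c\,\eps$ for a small absolute constant $c$ to be fixed below, i.e.\ $d = \Theta(1/\eps^2)$, then $\lambda_2(X)^2 \le c^2\eps^2$.

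Next I would plug $\alpha = \frac{\eps}{10\cGabow}$ into Lemma~\ref{lem:spectralExpansion}: $X$ is an $(\alpha, \beta)$-vertex expander with
\[
\beta = \frac{1}{\alpha + (1-\alpha)\lambda_2(X)^2} - 1.
\]
Since $\alpha \le 1$ and $\lambda_2(X)^2 \le c^2\eps^2$, the denominator is at most $\alpha + \lambda_2(X)^2 \le \frac{\eps}{10\cGabow} + c^2\eps^2$. For $\eps \in (0,1)$ and $\cGabow \ge 1$ this is at most $\frac{\eps}{10} + c^2\eps \le \frac{\eps}{5}$ once $c$ is chosen so that $c^2 \le 1/10$. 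Hence $\beta \ge \frac{5}{\eps} - 1 \ge \frac{4}{\eps}$ whenever $\eps \le 1$ (indeed $5/\eps - 1 \ge 4/\eps \iff 1/\eps \ge 1$). So $X$ is a $(\frac{\eps}{10\cGabow}, \frac{4}{\eps})$-vertex expander, as claimed; its maximum degree is $d' \le 4d = O(1/\eps^2) = O(1/\eps)$ — wait, here one must be careful: $d = \Theta(1/\eps^2)$, so the degree is $O(1/\eps^2)$, not $O(1/\eps)$. I would therefore revisit the required bound on $\lambda_2$: we only need $\lambda_2(X)^2$ to be comparable to $\alpha = \Theta(\eps)$, i.e.\ $\lambda_2(X) = O(\sqrt{\eps})$, which by Fact~\ref{lem:RamanujanEigen} needs only $d = \Theta(1/\eps)$. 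Redoing the arithmetic with $\lambda_2(X)^2 \le c\,\eps$: the denominator is $\le \frac{\eps}{10\cGabow} + c\eps \le \eps(\tfrac{1}{10} + c)$, which is $\le \frac{\eps}{5}$ for $c \le 1/10$, giving $\beta \ge \frac{5}{\eps} - 1 \ge \frac{4}{\eps}$ again, and now the degree is $d' \le 4d = O(1/\eps)$, matching the claim. The construction time is $O(|V_X|\,d) = O(n/\eps)$ by Lemma~\ref{lem:gabow00}.

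The only real subtlety — and the step I would be most careful about — is the constant bookkeeping: making sure the choice of $d = \Theta(1/\eps)$ is large enough that $2\sqrt{d'-1}/d' \le \sqrt{c\eps}$ holds (this is where one uses $d' \ge d$ and picks the hidden constant in $d = \Theta(1/\eps)$ appropriately, possibly also needing $\eps$ bounded away from values that would force $d' < 2$, which is harmless since we may assume $\eps < 1$), and that the resulting $\beta$ after the $-1$ correction in Lemma~\ref{lem:spectralExpansion} still clears $4/\eps$ rather than, say, only $1/\eps$. Everything else — degree bound, vertex count $\le \cGabow n$, construction time — is inherited directly from Lemma~\ref{lem:gabow00}. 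Finally, Claim~\ref{clm:bigVertexExpander} will feed into Lemma~\ref{lem:vertexExpander} by restricting $X$ to an arbitrary $n$-vertex subset (analogously to the proof of Lemma~\ref{lem:ramanujan}), rescaling $\alpha$ to absorb the $|V_X| \le \cGabow n$ blow-up, which is exactly why the target here is stated with the extra $\cGabow$ factors and a $\frac{4}{\eps}$ expansion that degrades gracefully to $\frac{2}{\eps}$.
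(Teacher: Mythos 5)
Your proposal is correct and, after the self-correction from $d=\Theta(1/\eps^2)$ to $d=\Theta(1/\eps)$, follows exactly the paper's route: instantiate \Cref{lem:gabow00} with $d=\Theta(1/\eps)$ (the paper takes $d=40\cGabow/\eps$, so $\lambda_2(X)^2\le 4/d=\eps/(10\cGabow)$), then apply \Cref{lem:spectralExpansion} with $\alpha=\eps/(10\cGabow)$ to get $\beta\ge 5/\eps-1\ge 4/\eps$. The constant bookkeeping you flag as the only subtlety works out just as you computed.
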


\begin{proof}
Let $X$ be the Ramanujan graph constructed with $n, d=40\cGabow/\eps$ as inputs. From the guarantees of \Cref{lem:gabow00} we know that $|V_X|\leq \cGabow n$ and is $d'$-regular with $d \leq d' \leq 4d$. From \Cref{lem:RamanujanEigen} we know that the second eigenvalue of $X$, $\lambda_2(X) \leq 2\sqrt{d'-1}/d' \leq 2/\sqrt{d}$.

Hence, by setting $\alpha = \eps/10\cGabow$ in \Cref{lem:spectralExpansion} we can conclude that $X$ is a $(\eps/10\cGabow, \beta)$ where $\beta = \frac{1}{(1-\alpha)\lambda_2(X)^2 + \alpha}-1 \geq \frac{1}{\frac{4}{d}+\frac{\eps}{10\cGabow}}-1 \geq \frac{4}{\eps}$ as $\cGabow\geq 1$ and $\eps \leq 1$. The time to construct the graph $X$ is at most $O(nd) = O(n/\eps)$.
\end{proof}

Note that the graph $X$ we constructed above is of size at most $\cGabow n$, but we need a vertex expander of size $n$. Taking any induced sub-graph of size $n$ would not solve the problem as it might not be a vertex expander. Below, we show that contracting groups of vertices where each group is small preserves the vertex expansion. 

\begin{claim}\label{clm:expanderSizing}
If $G = (V_G, E_G)$ is a $(\alpha,\beta)$-vertex expander with $|V_G|=\rho n, |E_G|=m$ for some parameter $\rho>1$, then we can construct a $(\frac{\alpha}{\ceil{\rho}},\frac{\beta\floor{\rho}}{\ceil{\rho}})$-vertex expander $H=(V_H, E_H)$ of size $n$ in $O(m)$ time.
\end{claim}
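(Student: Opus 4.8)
The plan is to obtain $H$ by contracting $V_G$ into $n$ nearly-equal groups and to show that vertex expansion survives this contraction with exactly the stated loss factors. Concretely, I would first partition $V_G$ into $n$ groups $V_1,\dots,V_n$, each of size either $\floor{\rho}$ or $\ceil{\rho}$. This is possible because $\floor{\rho}\,n \le \rho n = |V_G| \le \ceil{\rho}\,n$ and $\rho n$ is an integer, so we can give exactly $|V_G|-\floor{\rho}\,n < n$ of the groups one extra vertex and the rest size $\floor{\rho}$. Set $V_H=\{1,\dots,n\}$ and put an edge $\{i,j\}$ in $E_H$ whenever some edge of $E_G$ joins $V_i$ to $V_j$ with $i\ne j$; parallel edges and self-loops are irrelevant since vertex expansion only sees neighbor \emph{sets}. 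A single scan over $E_G$ builds this in $O(m)$ time, which settles the running-time part of Claim~\ref{clm:expanderSizing}.

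For the expansion bound, fix $L_H\subseteq V_H$ with $|L_H|\le \frac{\alpha}{\ceil{\rho}}\,n$ and let $L_G:=\bigcup_{i\in L_H} V_i$ be its preimage. On one side, $|L_G|\le \ceil{\rho}\,|L_H| \le \alpha n \le \alpha\rho n = \alpha|V_G|$, using $\rho>1$; hence $G$ being an $(\alpha,\beta)$-vertex expander gives $|N_G(L_G)|\ge \beta|L_G| \ge \beta\floor{\rho}\,|L_H|$, where the last step uses $|V_i|\ge\floor{\rho}$. On the other side, every vertex of $N_G(L_G)$ lies, by disjointness of the groups and since it is outside $L_G$, in some $V_j$ with $j\notin L_H$, and it has a $G$-neighbor inside some $V_i$ with $i\in L_H$; as $i\ne j$ this yields $\{i,j\}\in E_H$, so $j\in N_H(L_H)$. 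Since each group has at most $\ceil{\rho}$ vertices, the $\ge\beta\floor{\rho}\,|L_H|$ vertices of $N_G(L_G)$ are spread over at least $\frac{\beta\floor{\rho}}{\ceil{\rho}}\,|L_H|$ distinct groups, all lying in $N_H(L_H)$, giving $|N_H(L_H)|\ge \frac{\beta\floor{\rho}}{\ceil{\rho}}\,|L_H|$, which is exactly what is required.

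The only delicate point — the "main obstacle," such as it is — is the bookkeeping around group sizes: one must ensure the partition really is into $n$ \emph{nonempty} parts of size within $\{\floor{\rho},\ceil{\rho}\}$ so that both $\floor{\rho}\,|L_H|\le|L_G|$ and $|L_G|\le\ceil{\rho}\,|L_H|$ hold simultaneously, and that the threshold $\frac{\alpha}{\ceil{\rho}}\,n$ on $|L_H|$ is precisely what makes $|L_G|\le\alpha|V_G|$ fall out (this is where $\rho>1$ is used). Everything else is a routine transfer of the expansion inequality through the contraction map, with no probabilistic or spectral input needed.
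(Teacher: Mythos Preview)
Your proposal is correct and follows essentially the same approach as the paper: partition $V_G$ into $n$ groups of size $\floor{\rho}$ or $\ceil{\rho}$, contract, and transfer the expansion inequality through the contraction using the group-size bounds for both directions. If anything, you are slightly more explicit than the paper about why the partition exists and about where the hypothesis $\rho>1$ is used (to pass from $|L_G|\le \alpha n$ to $|L_G|\le \alpha|V_G|$).
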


\begin{proof}
If $\rho$ is an integer, we obtain $H$ by contracting arbitrary groups of $\rho$ vertices in $G$. However, if $\rho$ is not an integer, we contract arbitrary groups of $\floor{\rho}$ or $\ceil{\rho}$ many vertices together to achieve a graph of size exactly $n$.

Let $L_H\subset V_H$ is of size at most $\alpha n/\ceil{\rho}$. For each $v\in V_H$ let $C_v \subset V_G$ denote the set of vertices in $G$ contracted to form $v$. Uncontracting the set of vertices in $L_H$, we get $L_G = \cup_{v\in L_H} C_v$. We have $|L_G| \leq \ceil{\rho}|L_H| \leq \alpha n$, hence we have $|N_G(L_G)| \geq \beta |L_G|$ as $G$ is a $(\alpha,\beta)$-vertex expander.

Let $C[N_G(L_G)] = \{v \in V_H \mid C_v \cap N_G(L_G) \neq \emptyset\}$ be the set of contracted vertices in $V_H$ such that corresponding set $C_v\subset V_G$ has at least one vertex from $N_G(L_G)$. Observe that $C[N_G(L_G)] = N_H(L_H)$. Every vertex in $C[N_G(L_G)]$ has at least one edge from a vertex in $L_H$ as the edge across $L_G$ and $N_G(L_G)$ is preserved after contraction. Hence, $C[N_G(L_G)]\subseteq N_H(L_H)$. Every vertex $v \in N_H(L_H)$ has a vertex $u \in L_H$ such that $(u,v)\in E_H$. Uncontracting $u,v$ we have every vertex $w\in C_v$ that has edges from some vertex $x\in C_u\subset L_G$ should belong to $N_G(L_G)$, hence $v\in C[N_G(L_G)]$ and we have $N_H(L_H)\subseteq C[N_G(L_G)]$.

We have \[|N_H(L_H)| = |C[N_G(L_G)]|\geq \frac{|N_G(L_G)|}{\ceil{\rho}} \geq \frac{\beta |L_G|}{\ceil{\rho}} \geq \frac{\beta\floor{\rho}}{\ceil{\rho}}|L|.\] The first and third inequalities follow as we contract at most $\ceil{\rho}$ and at least $\floor{\rho}$ many vertices in $G$ to construct $H$. Hence, expansion is at least $\frac{\beta\floor{\rho}}{\ceil{\rho}}$.

The time taken to contract arbitrary vertices is at most $O(m)$ by just going over each edge and mapping them to the new contracted vertices.
\end{proof}

We now conclude the proof of \Cref{lem:vertexExpander}.

\begin{proof}[Proof of \Cref{lem:vertexExpander}]
 With $n,\eps$ as inputs to \Cref{clm:bigVertexExpander}, we construct a $(\frac{\eps}{10\cGabow}, \frac{4}{\eps})$-vertex expander of size $\rho n$ where $1\leq \rho \leq \cGabow$ in time $O(n/\eps)$ with $O(n/\eps)$ edges. We now use \Cref{clm:expanderSizing} to reduce the size of the graph and construct a $(\frac{\eps}{10\cGabow\ceil{\rho}},\frac{4\floor{\rho}}{\eps \ceil{\rho}})$-vertex expander in time $O(n/\eps)$, which is also a $(\frac{\eps}{20\cGabow^2}, \frac{2}{\eps})$-vertex expander as $\cGabow \geq 1$. Hence, the total time is at most $O(n/\eps)$.
\end{proof}
 
\section{Open Problems}

\paragraph{Exact vertex connectivity.}
It remains elusive whether there exists a linear-time \emph{deterministic} algorithm for vertex connectivity as postulated by \cite{AhoHU74}.
Even an algorithm with running time $\Ohat(mn)$, independent from $k$, is already very interesting as asked by Gabow \cite{Gabow06}. 
The currently fastest deterministic algorithm can be obtained by plugging an almost-linear-time max flow algorithm by \cite{BrandCKLPGSS23} into the algorithm by Gabow \cite{Gabow06}. This will result in an algorithm that takes $\Ohat(m(n+k\min\{k,\sqrt{n}\}) = \Ohat(mn^{1.5})$ time.

\paragraph{Approximate vertex connectivity.}
Can we obtain an $(1+\eps)$-approximation algorithm using  $o(n)$ max-flow calls? The problem seems challenging to us even when we promise that there exists a vertex mincut $(L,S,R)$ where $|L|,|S|,|R| = \Omega(n)$. Note that if randomization is allowed in this case, we can solve the problem with high probability by sampling $O(\log n)$ vertex pairs $(s,t)$ and compute $(s,t)$ vertex mincut, taking a total time of only $O(\log n)$ max flow calls.

\paragraph{Steiner vertex connectivity.}
Our terminal reduction algorithm (\Cref{thm:main terminal reduction}) does not guarantee that $T' \subset T$, i.e., the output terminal is a subset of the input terminal. Is it possible to strengthen our algorithm to guarantee this with the same running time? This would imply a \emph{deterministic} $\Ohat(mk^2)$-time algorithm for the Steiner vertex mincut problem, matching the conditional lower bound for dense graphs~\cite{HuangLSW23}. A randomized algorithm for Steiner vertex mincut with near-optimal $\Ohat(mk^2)$ time is implicit in the literature.\footnote{To see this, let $T$ be the terminal set and $(L,S,R)$ be a Steiner vertex mincut of size less than $k$.
The algorithm starts by sampling each terminal with probability $1/k$ and so, with probability $\Omega(1/k^2)$, we have $T\cap S$ is empty, but both $T \cap L$ and $T \cap R$ are not empty. Assuming this event, one can use the isolating vertex cut technique together with further subsampling (see \cite{li_vertex_2021}) to detect a Steiner mincut using $\log^{O(1)} n$ max-flows. By repeating the algorithm $O(k^2 \log n)$ times, we will detect a Steiner mincut with high probability.}

\section*{Acknowledgement} 
This project has received funding from the European Research Council (ERC) under the European Union's Horizon 2020 research and innovation programme under grant agreement No 759557. 

\small{
\bibliographystyle{alpha}
\bibliography{refs,references} 
}

\end{document}